\pgfplotsset{width=10cm,compat=1.9}
\newtheorem{theorem}{Theorem}
\newtheorem{lemma}{Lemma}
\newtheorem{corollary}{Corollary}
\newtheorem{claim}{Claim}
\newtheorem{observation}{Observation}
\newtheorem{proposition}{Proposition}
\newtheorem{definition}{Definition}
\newtheorem{fact}{Fact}
\newtheorem{mainresult}{Main Result}
\newtheorem*{theorem*}{Theorem}
\newcommand{\eps}{\varepsilon}
\newcommand{\reals}{\mathbb{R}}
\newcommand{\A}{\mathbb{A}}
\newcommand{\I}{\mathcal{I}}
\newcommand{\IS}{\mathcal{A}}
\newcommand{\feq}{\bar{f}}
\newcommand{\ceq}{\bar{c}}
\DeclareMathOperator*{\argmax}{arg\,max}
\newcommand{\instance}{\langle n, f, c \rangle}
\newcommand{\instanceI}{I=\langle n, f, c \rangle}
\newcommand{\inst}[3]{\langle #1, #2, #3 \rangle}
\newcommand{\indicator}[1]{\mathbbm{1}[{#1}]}
\newcommand{\demand}[2]{\mathcal{D}_{{#1},{#2}}}
\newcommand{\demSigP}{\demand{\sigma}{p}}
\newcommand{\supply}[2]{\mathcal{D}_{{#1},{#2}}}
\newcommand{\supSigP}{\supply{\sigma}{p}}
\newcommand{\BR}[3]{\demand{#1}{#2}^{#3}}
\newcommand{\BRI}[2]{\BR{#1}{#2}{I}}
\newcommand{\BrSigAlpha}{\BRI{\sigma}{\alpha}}
\newcommand{\DISJ}[1]{\mathsf{DISJ}_{#1}}
\newcommand{\tc}{\tilde{c}}
\newcommand{\tf}{\tilde{f}}
\newcommand{\talpha}{\tilde{\alpha}}
\newcommand{\hf}{\hat{f}}
\newcommand{\hc}{\hat{c}}
\newcommand{\halpha}{\hat{\alpha}}
\newcommand{\tI}{\tilde{I}}
\newcommand{\InstTildeI}{\tI = \inst{n}{f}{\tc}}
\newcommand{\InstFTildeI}{\tI = \inst{n}{\tf}{c}}
\newcommand{\hI}{\hat{I}}
\newcommand{\InstHatI}{\hI = \inst{n+1}{\hf}{\hc}}
\newcommand{\eqRevInst}{\langle n,\feq,\ceq \rangle}
\newcommand{\perturbedInst}{\langle n,\feq_k,\ceq \rangle}
\newcommand{\familyI}{\I = \{ \perturbedInst \}_{k=1}^{2^n-1}}
\definecolor{darkpastelgreen}{rgb}{0.01, 0.75, 0.24}
\definecolor{darkorchid}{rgb}{0.6, 0.2, 0.8}
\crefname{prop}{property}{properties}
\author{
Paul Dütting\thanks{Google Research, Zurich, Switzerland. Email: \texttt{duetting@google.com}} 
\;\;\;\; Michal Feldman\thanks{Tel Aviv University, Israel. Email: \texttt{mfeldman@tauex.tau.ac.il}}
\;\;\;\; Yoav Gal-Tzur\thanks{Tel Aviv University, Israel. Email: \texttt{yoavgaltzur@mail.tau.ac.il}}
\;\;\;\; Aviad Rubinstein\thanks{Stanford University, USA. Email: \texttt{aviad@cs.stanford.edu}}
}
\date{\today}
\title{
When Contracts Get Complex: Information-Theoretic Barriers\thanks{This is the full version of a SODA 2026 paper. A preliminary version of this work was circulated under the title ``The Query Complexity of Contracts.''
This project has been partially funded by the European Research Council (ERC) under the European Union's Horizon 2020 program (grant agreement No.~866132), by the European Union's Horizon Europe Program (grant agreement No.~101170373), by an Amazon Research Award, by the Israel Science Foundation Breakthrough Program (grant No.~2600/24), and by a grant from TAU Center for AI and Data Science (TAD), and by the NSF-BSF (grant number 2020788).
Aviad Rubinstein is supported by NSF CCF-2112824, and a David and Lucile Packard Fellowship.}
}
\begin{document}

\maketitle

\begin{abstract}
In the combinatorial-action contract model (D\"utting et al., FOCS'21) 
a principal delegates the execution of a complex project to an agent, who can choose any subset from a given set of actions.
Each set of actions incurs a cost to the agent,
given by a set function $c$, and induces an expected reward to the principal, given by a set function $f$. 
To incentivize the agent, the principal designs a contract that specifies the payment upon success, with the optimal contract being the one that maximizes the principal's utility.

It is known that with access to value queries no constant-approximation is possible for submodular $f$ and additive $c$. A fundamental open problem is: does the problem become tractable with demand queries?
We answer this question to the negative, by establishing that finding an optimal contract for submodular $f$ and additive $c$ requires exponentially many demand queries. 
We leverage the robustness of our techniques to extend and strengthen this result to different combinations of submodular/supermodular $f$ and $c$; 
while allowing the principal to access $f$ and $c$ using arbitrary communication protocols.

Our results are driven by novel equal-revenue constructions when one of the functions is additive, immediately implying value query hardness. 
We then identify a new property --- sparse demand --- which allows us to strengthen these results to demand query hardness.
Finally, by augmenting a perturbed version of these constructions with one additional action, thereby making both functions combinatorial, we establish exponential communication complexity.  
\end{abstract}

\section{Introduction}

Contract design is one of the pillars of economic theory (cf., the 2016 Nobel Prize in Economics for Hart and Holmström \cite{Nobel2016}), of which the hidden-action principal-agent model is a cornerstone.
Recently, this area has seen growing attention from the theoretical computer science and algorithmic game theory communities \cite{contractsSurvey}, with combinatorial contract design being a natural focal point.

A prime example of a combinatorial contract setting is the combinatorial-action model of D\"utting, Ezra, Feldman, and Kesselheim \cite{dutting2022combinatorial}.
In this model, a principal (she) delegates the execution of a project, which can either succeed or fail, to an agent (he). 
The agent may pick any subset of a known set of $n$ actions. 
The choice of actions $S$ determines the success probability of the project via a (monotone) success probability function $f:2^{[n]}\rightarrow [0,1]$. The principal enjoys a (normalized) reward of $r = 1$ when the project succeeds, and her reward is zero otherwise. 
Observe that $f$ can also be interpreted as the principal's expected reward, and so we also sometimes refer to it as the reward function. 
Each choice of actions incurs a cost for the agent. The cost of a set of actions is given by a (monotone) cost function $c: 2^{[n]} \rightarrow \reals_{\geq 0}$.

The principal cannot observe which actions are taken by the agent. Thus, in order to incentivize the agent to exert costly effort, the principal offers a (linear) contract, which specifies a payment in case the project succeeds, which we denote by $\alpha \in [0,1]$.
Given a contract, the agent picks his \textit{best-response}, a set of actions that maximizes his utility, defined to be the expected payment $\alpha \cdot f(S)$ minus cost $c(S)$. 
The goal is to compute an optimal contract --- the one maximizing the principal's utility, defined as the expected reward $f(S)$ minus expected payment $\alpha \cdot f(S)$.
The gold standard is an algorithm that runs in time polynomial in $n$, with appropriate query access to the set functions $f$ and $c$. (Note that an explicit description of combinatorial
functions typically requires exponentially many bits.)

While prior work has mostly focused on the case where $f$ is combinatorial and $c$ is additive, the ultimate goal would be a theory that covers scenarios where both $f$ and $c$ are combinatorial.
Of particular interest are submodular and supermodular functions, capturing scenarios where the agents' actions exhibit diminishing or increasing returns. 
It is known that with supermodular $f$ and submodular $c$, an optimal contract can be found with polynomially-many value queries \cite{dutting2024combinatorial}. 
In this work, we show that computing the optimal contract under any other combination of submodular/supermodular $f$ and $c$ faces strong information-theoretic barriers.

\subsection{Our Contribution}
We establish hardness results in two models: one based on \textit{query} complexity, and the other on \textit{communication} complexity. A summary of our results appears in \Cref{tab:overview}.

\begin{table}[t]
  \centering
  \begin{tabular}{|p{2.7cm}|p{4cm}|p{4cm}|p{4cm}|}
    \hline
    & & & \\[-.8em]
    & \textbf{submodular cost} & \textbf{additive cost} & \textbf{supermodular cost} 
    \\[0.3em]
    \hline  
    & & & \\[-.8em]
    \textbf{submodular} & \textit{OPT}: CC+BR & \textit{OPT}: Demand query & \textit{OPT}: CC+BR \\
    \textbf{reward} &  hardness (Thm.\ \ref{thm:cc_submod_f_c}) & hardness (Thm.\ \ref{thm:expDemandQueries}) & hardness (Thm.\ \ref{thm:cc_submod_f_supmod_c}) \\[0.3em]
    & \textit{Approx.:} CC &  &  \\
    & hardness (Thm.\ \ref{thm:cc_inapprox_sub_sub}) &  &  \\
    \hline  
    & \multicolumn{2}{|p{3.8cm}|}{} & \\[-.8em] 
    \textbf{additive} & \multicolumn{2}{|p{3.8cm}|}{} & \textit{OPT}: Supply query \\
    \textbf{reward} & \multicolumn{2}{|p{3.8cm}|}{\multirow{5}{*}{\textit{OPT:} poly time (with value queries) \cite{dutting2024combinatorial}}} & hardness (Thm.\ \ref{thm:expDemandQueries_SupMod_c}) \\[0.3em]
    \cline{1-1}\cline{4-4}  
    & \multicolumn{2}{|p{3.8cm}|}{} & \\[-.8em] 
    \textbf{supermodular} & \multicolumn{2}{|p{3.8cm}|}{} & \textit{OPT}: CC+BR \\
    \textbf{reward} & \multicolumn{2}{|p{3.8cm}|}{} & hardness (Thm.\ \ref{thm:cc_supmod_f_c}) \\[0.3em]
    & \multicolumn{2}{|p{3.8cm}|}{} & \textit{Approx.}: CC \\
    & \multicolumn{2}{|p{3.8cm}|}{} & hardness (Thm.\ \ref{thm:cc_inapprox_sup_sup}) \\
    \hline
  \end{tabular}
  \caption{Overview of our results for computing the optimal contract (\textit{OPT}), 
  and for approximating it to within any finite multiplicative factor (\textit{Approx.}). 
    When either $f$ (reward) or $c$ (cost) is additive, we show hardness of the optimal contract in the appropriate query-complexity model.
  For fully combinatorial $f$ and $c$, we consider two communication-based models:
  standard communication complexity ({CC}) and the stronger model with access to polynomially many best-response queries ({CC+BR}).
  We note that the approximability of the optimal contract for supermodular $c$ (even when $f$ is additive) remains an open problem.
  } 
  \label{tab:overview}
\end{table}

\paragraph{Query Complexity Hardness.}
Our first set of results establishes exponential query complexity lower bounds when (i) $f$ is submodular and $c$ is additive or (ii) $f$ is additive and $c$ is supermodular, even with access to value and demand/supply oracles (see \Cref{sec:model} for definitions). 

Value and demand queries are standard in the literature, and supply queries are the natural analog of demand queries in the case where $c$ is combinatorial (rather than $f$). All three query types have natural economic interpretations.
Notably, when $f$ is combinatorial and $c$ is additive (resp., $f$ is additive and $c$ combinatorial), a single demand (resp., supply) query can return a set of actions that maximizes the agent's utility for a given contract. 
Value queries, which are strictly weaker than demand/supply queries, do not possesses this property \cite{blumrosen2005computational}.

\begin{mainresult}[\Cref{thm:expDemandQueries,thm:expDemandQueries_SupMod_c}]\label{thm:main1}
In any of the settings below, any algorithm that computes the optimal contract requires exponentially-many demand queries to $f$ or supply queries to $c$: 
\begin{itemize}\setlength\itemsep{-0.25em}
    \item $f$ is submodular and $c$ is additive.
    \item $c$ is supermodular and $f$ is additive.
\end{itemize}
\end{mainresult}

Prior to this work, it was known that when $c$ is additive, an optimal contract can be computed with polynomially-many value queries, when $f$ is gross-substitutes (a strict subclass of submodular) \cite{dutting2022combinatorial}. 
It was also known that computing an optimal contract with submodular $f$ (in fact, budget additive $f$) is \textsf{NP}-hard \cite{dutting2022combinatorial}. More recently, it was shown that for submodular $f$ and additive $c$ it is hard to approximate the optimal contract to within any constant factor using value queries alone (assuming \textsf{P} $\neq$ \textsf{NP}) \cite{ezra2023Inapproximability}. 
In both these cases, the hardness stems from the hardness of answering a demand query with value queries. Whether or not the problem remains hard with access to demand queries remained a major open question (explicitly raised in \cite{dutting2022combinatorial} and reiterated by \cite{deo2024supermodular}).

Our result resolves this question in the negative, and thus shows an intrinsic hardness of the contracting problem with submodular $f$ and additive $c$.

\paragraph{Communication Complexity Hardness.}

Our second set of results establishes exponential communication lower bounds in a natural model where $f$ and $c$ are held by two different parties.
We observe that while the case in which $c$ (or $f$) is additive admits a trivial solution, the optimal principal's utility cannot be approximated to within any factor when $f$ and $c$ are both submodular or are both supermodular (see \Cref{sec:CC_inapprox}). For the remaining case of submodular rewards and supermodular costs we show that exponential communication is required to find an optimal contract (this will be implied by our \Cref{thm:main2} below). 

Finally, we consider a combined model, where the parties also gain access to a ``best-response'' oracle -- which returns the agent's favorite set for a given contract.
Our motivation for studying this model is that it generalizes both the demand/supply query model and the communication model. We show that exponential communication is required to compute the optimal contract for any combination of submodular/supermodular $f$ and $c$, except for $f$ supermodular and $c$ submodular (which admits an efficient algorithm), strengthening \Cref{thm:main1} for these cases.

\begin{mainresult}[\Cref{thm:cc_submod_f_c,thm:cc_supmod_f_c,thm:cc_submod_f_supmod_c}]\label{thm:main2}
In any of the settings below, communicating exponentially-many bits is required to compute the optimal contract, even with poly-many best-response queries:
\begin{itemize}\setlength\itemsep{-0.25em}
    \item When $f$ and $c$ are submodular.
    \item When $f$ and $c$ are supermodular.
    \item When $f$ is submodular and $c$ is supermodular.
\end{itemize}
\end{mainresult}

To the best of our knowledge, our results are the first communication complexity results in the contracts literature. The strong impossibilities in the pure communication model highlight the power of best-response oracles when $f$ and $c$ are both submodular or both supermodular, yet we show that computing an optimal contract remains unattainable even with this type of oracle.
We believe that both our model which assumes that $f$ and $c$ are held by two different parties and the techniques developed for showing the communication lower bounds (see \Cref{sec:techniques}), could be stepping stones for further research into the communication complexity of combinatorial contracts. 
In particular, determining the communication complexity for strict subclasses of submodular/supermodular is a natural direction.

\paragraph{Tightness of our Results.} 
We note that our hardness results are tight in the following sense. \cite{dutting2022combinatorial} give a FPTAS for general (monotone) $f$ and $c$ with access to a best-response oracle, assuming all numbers can be represented with $\textsf{poly}(n)$ bits. Our hardness results utilize instances which can also be represented in this way, and thus provide matching lower bounds even for more restrictive classes of $f$ and $c$ and even with arbitrary protocols.

In a model where the representation size is unbounded, we observe that the FPTAS given by \cite{multimulti} for general (monotone) $f$ and additive $c$ can also be used to solve instances with subadditive $c$ (with best-response queries) (see \Cref{sec:FPTAS_subadditive_c}). 
Thus, for this family of instances, our results are tight also in this generalized model.
We leave the question of whether instances with supermodular $c$ (even with additive $f$) are amenable to approximation, even with best-response queries, as an intriguing open problem.

\subsection{Our Techniques}
\label{sec:techniques}
Our results are driven by structural insights regarding settings where one of the set functions is additive and the other is combinatorial (either submodular or supermodular). 
For ease of presentation, we present our techniques for the case of submodular $f$ (and additive $c$) leading to \Cref{thm:main1} (first bullet) and \Cref{thm:main2} (first bullet). The arguments for supermodular $c$ (and additive $f$) are analogous.

We identify two key properties of instances with submodular $f$ and additive $c$, which can be used in a black-box manner to establish 
lower bounds on query and communication complexity.
\begin{enumerate}[label=\textbf{P.\arabic*}]
    \setlength\itemsep{-0.25em}
    \item\label{prop:eqRev} Equal-Revenue: There are exponentially-many distinct contracts, each incentivizing the agent 
    to take a different set of actions, 
    but all are optimal, yielding the same principal's utility.
    \item\label{prop:demand} Sparse Demand: The reward function $f$ satisfies the property that, for any price vector $p$, the number of sets $S$ that approximately maximize $f(S) - \sum_{i \in S} p_i$ is small.
\end{enumerate}
Property \ref{prop:eqRev} enables a standard ``hide a special set'' argument, leading to value query hardness. We establish that together with \ref{prop:demand}, this argument can be extended to imply demand query hardness. 
Additionally, an instance that exhibits these two properties can be augmented with a single additional action such that (i) both $f$ and $c$ are submodular, and (ii) the resulting instance admits a reduction from set disjointness (\Cref{def:disjointness}) --- a well-known benchmark problem in communication complexity.

\paragraph{Establishing~\ref{prop:eqRev}.}
To establish that an instance with this property exists, we consider actions with exponential costs, where $c_i = 2^{i-1}$ is the cost of action $i$. 
We then identify each integer $t \in \{0,\dots,2^n-1\}$, with a set of actions $S_t$, that corresponds to its binary representation. Observe that the (additive) cost of a set is exactly its index.

We show that in this setting, constructing an equal revenue instance boils down to satisfying a specific relation between two consecutive breakpoints $\alpha_t, \alpha_{t+1}$ of the agent's piece-wise linear utility function (see Equation~\eqref{eq:RR_t}).
In this instance, the value associated with a set is $f_t = f(S_t) = \frac{1}{1-\alpha_t}$. 
We show that the sequence of $\alpha_t$'s satisfying the emerging recurrence relation is such that the discrete derivative, $f_{t+1}-f_t$, is a decreasing function of $t$, implying the submodularity of $f$.

\paragraph{Establishing~\ref{prop:demand}.}

Using a delicate counting argument, we show that the same construction used for the equal-revenue property also satisfies sparse demand. 

More precisely, we show that for a small enough $\sigma >0$, for any price vector $p$ the number of sets which maximize the demand $f(S) - \sum_{i \in S} p_i$ up to an additive factor of $\sigma$ is $O(n^2)$.

We denote the collection of sets that approximately maximize the demand by $\demSigP$, and prove that for any $p$, if a set $S_t \in \demSigP$ contains some action $i \in S_t$, then any other set $S_{t'} \in \demSigP$ which is significantly cheaper to incentivize than $S_t$, must also contain $i$.

A sketch of the argument is as follows:
$S_t \in \demSigP$ implies that $p_i < f(S_t)-f(S_t \setminus \{i\}) + \sigma$.
We upper bound this term by utilizing
\ref{prop:eqRev}: As every set is incentivizable, there exists a contract $\alpha(S_t \setminus \{i\})$, for which the agent picks the set $S_t \setminus \{i\}$.
Equivalently, for this contract the marginal reward of action $i$ is smaller than its marginal cost: $\alpha(S_t\setminus \{i\})\cdot  (f(S_t)-f(S_t \setminus \{i\})) < c_i$, and thus $p_i < \frac{c_i}{\alpha(S_t\setminus \{i\})} + \sigma$.
To complete the argument we observe that for any set $S_{t'}$ such that $i \notin S_{t'}$, and $\alpha(S_{t'}\cup \{i\})$ is much smaller than $\alpha(S_t\setminus \{i\})$, it holds that
$p_i < \frac{c_i}{\alpha(S_t\setminus \{i\})} + \sigma
<
\frac{c_i}{\alpha(S_{t'}\cup \{i\})} - \sigma
<
f(S_{t'} \cup \{i\}) - f(S_{t'}) - \sigma
$,
where the last inequality follows from the fact that $\alpha(S_{t'}\cup \{i\})$ incentivizes the set $S_{t'} \cup \{i\}$. This implies that $S_{t'} \notin \demSigP$, as $S_{t'} \cup \{i\}$ is better by at least $\sigma$.

Now, for any price vector $p$, and any action $i$, let $r_i$ be the largest index such that $S_{r_i} \in \demSigP$ and $i \in S_{r_i}$. By definition, any set with higher a index than $r_i$ cannot be approximately optimal and contain $i$. 
On the other hand, by the argument above, any approximately optimal set with index smaller than some threshold $l_i<r_i$ must contain $i$. See \Cref{fig:intervals} for an illustration.
We then utilize this observation to identify every set $S_t$ in the approximate demand with the minimal action $i$ such that $l_i \le t \le r_i$; and show that an action can only be identified with $O(n)$ different sets.
Since there are $n$ actions in total, this yields an upper bound of $O(n^2)$ on the size of the approximate demand.

\subsection{Related Work}

Query and communication complexity has been extensively studied in the mechanism design literature  \cite[e.g.,][]{nisan2006communication,commDobz,settlingCC}. We provide an overview of related work in  algorithmic contract design, with a special focus on combinatorial contracts.

\paragraph{Algorithmic Contract Design and Combinatorial Contracts.}

There are several ways in which contracting problems can be combinatorial; this includes settings in which the agent has many actions \cite{dutting2022combinatorial}, the principal interacts with more than one agent \cite{babaioff2006combinatorial,babaioff2012combinatorialJET,duetting2022multi}, or there are many outcomes \cite{dutting2021complexity}.

The combinatorial-action contracting problem that we study in this work was introduced in~\cite{dutting2022combinatorial}. They show that the optimal contract is tractable with value queries when $f$ is gross-substitutes, and $\textsf{NP}$-hard to compute when $f$ is submodular (both with additive cost). 
They also give a weakly poly-time FPTAS for any (monotone) $f$ and $c$ with access to best-response queries.
More recently, \cite{multimulti} 
give a strongly poly-time FPTAS for any (monotone) $f$ and additive $c$ (with access to value and demand queries). 
Building upon this model, ~\cite{deo2024supermodular} and 
\cite{dutting2024combinatorial} give an algorithm to enumerate all the breakpoints in the agent's piece-wise linear utility given access to a best-response oracle, whose running time is polynomial in the number of breakpoints.
Together with the observation that for supermodular $f$ and submodular $c$ a best-response query can be computed with poly-many value queries, and an argument that shows that there are at most $n$ breakpoints, this gives an efficient algorithm (in the value query model). The tractability frontier was recently extended by \cite{FY25}, who give a poly-time algorithm in the value-query model for the case where $f$ is ultra (a strict super-class of gross-substitutes) and $c$ is additive.
The work of \cite{ezra2023Inapproximability} shows an inapproximability result with value queries when $f$ is submodular and $c$ is additive, assuming $\textsf{P} \neq\textsf{NP}$. 
In a related single-agent model, \cite{contractsSequential} consider an agent that takes actions sequentially, observing the (non-binary) outcome after each step. 
The agent then picks the outcome to present to the principal, and according to which he will get paid.
For independent actions, they give a poly-time algorithm for the optimal linear contract, and for the optimal general contract if the number of outcomes is a constant. For an arbitrary number of outcomes they show that computing the optimal general contract is $\textsf{NP}$-hard.

The work of \cite{contractsInspection} extends the classic hidden-action model by allowing the principal to inspect which action the agent actually took, at some combinatorial cost. The inspection introduces negative incentives: the principal can verify compliance and withhold payment if the agent deviates, but the inspection itself is costly.
For submodular inspection costs, they give an efficient algorithm that computes the optimal contract and randomized inspection scheme using value queries. If inspection costs are XOS (a strict super-class of submodular), they show that exponentially-many value and demand queries may be required.

Another combinatorial setting is when the principal is faced with \textit{multiple agents} and may hire any subset of those. 
The work of~\cite{babaioff2006combinatorial,babaioff2012combinatorialJET} considers a setting where each agent may or may not exert effort, and a Boolean function maps the individual outcomes to success or failure of the project. Two natural examples are the OR and AND Boolean functions, whose corresponding computational complexities are studied in \citep{babaioff2006combinatorial,EmekF12}.
In follow-up work, \cite{babaioff2009free,babaioff2006mixed} study mixed strategies and free-riding in this model.
In contrast with these works, which considered a success probability function encoded as a read-once network, the more recent work of \cite{duetting2022multi} explores reward functions from the complement-free hierarchy, where a function $f$ maps each set of agents that exert effort to a success probability. 
They achieve $O(1)$-approximation to the optimal utility of the principal with poly-many value and demand oracle calls, when rewards are XOS. If the reward function is submodular, they show that poly-many value queries are sufficient for a constant-factor approximation.
The same multi-agent model was also considered by \cite{deo2024supermodular}, who show hardness of approximation when the reward function is supermodular. A pair of recent papers \cite{FeldmanTPS25} and \cite{Aharoni0T25} explores other objectives and budgeted settings, and establishes connections between these settings.

The work of \cite{multimulti} generalizes both the combinatorial-action and the multi-agent model. They show that when $f$, which maps the collection of actions taken by the agents to a success probability, is submodular and $c$ is additive, an approximately optimal contract can be computed in poly-time, with access to value and demand queries. They also strengthen the impossibility of \cite{duetting2022multi} for the multi-agent model with binary actions, by showing that with submodular rewards no algorithm that uses polynomially many value and demand queries can achieve a better than constant-approximation. \cite{cacciamani2024multi} study a related (but non-succinct) 
multi-agent multi-action model, and point to the power of randomized contracts. Another recent work studies multi-agent settings, in which actions choose from a continuum of actions \cite{DasarathaGS25}. Finally, \cite{AlonCCELT25} consider a model in which there are multiple projects, but each agent can work on at most one project, and each project's success probability depends on the agents that work on it.

An alternative multi-agent model was previously considered by \cite{CastiglioniM023}. In their model, the principal can observe each agent's individual outcome and make the contract contingent on this information. The work of \cite{hann2024optimality} considers a similar environment, but is focused on maximizing the principal's reward under budget constraints. 

\paragraph{Further Directions in Algorithmic Contract Design.}
Several studies explore the tradeoffs between simple and optimal contracts. The seminal paper in this direction is \cite{carroll2015robustness}. Another influential paper is \cite{dutting2019simple}. Additional work in this direction includes \cite{dutting2022combinatorial,CarrollW22,DaiToikka22,AnticG23,Kambhampati23,peng2024optimal}. 
Another important line of work, initiated by \cite{ho2014adaptive}, has explored contracting from an online/offline learning perspective. 
This includes work on contract design with bandit feedback  \cite{ZhuBYWJJ23,BacchiocchiC0024,ivanov2024principal}, and work that assumes expert advice \cite{DuettingGuruganeshSchneiderWang23,duetting2025pseudodimensioncontracts}.
The literature on contracts with typed agents  \cite{guruganesh2021contracts, GuruganeshSW023,CastiglioniM021, castiglioni2022designing,alon2021contracts,AlonDLT23,CastiglioniCLXZ25}
studies the combination of moral hazard (hidden action) and screening (hidden types). Recent work has explored the joint design of information structures and contracts \cite{BabichenkoEtAl2024,CastiglioniC25}. Finally, \cite{dutting2024ambiguous,DFR25} study structural and algorithmic properties of ambiguous contracts.

\subsection{Organization}

The paper is organized as follows:
In \Cref{sec:model}, we introduce the model. 
In \Cref{sec:eqRev}, we construct an equal-revenue instance for submodular $f$ and additive $c$, adhering to~Property~\ref{prop:eqRev}. In
\Cref{sec:demandHardness}, we show that this instance also possesses~Property~\ref{prop:demand}, leading to demand query hardness. 
In \Cref{sec:CC_submod_fc}, we present our communication complexity model, augmented with a best-response oracle, and show how to (i) perturb the hard instance for submodular $f$ and additive $c$ rendering $c$ strictly submodular and (ii) add an additional action to prove an exponential lower bound for the case of submodular $f$ and submodular $c$. 

\section{Model and Preliminaries}\label{sec:model}

\paragraph{The Optimal Contract Problem.}
A principal (she) delegates the execution of a task to an agent (he).
The delegated task may have one of two outcomes: success, in which case the principal receives a reward of $1$, and failure, for which she receives $0$. This choice of rewards is without loss of generality. 
The agent may perform any subset of a known set of actions, denoted $[n] = \{1, \ldots, n\}$, where each set of actions incurs him a cost and yields a success probability for the project.

The \emph{cost} of a set of actions is given by the set function $c:2^{[n]} \to \reals_{\ge 0}$, the success probability is given by $f:2^{[n]} \to [0,1]$\footnote{For simplicity, we consider reward functions $f$ such that the image of $f$ is $[A,B]$. One can scale down and shift $f$, 
along with the costs $c$, without affecting our results.}.
Observe that $f(S)$ also specifies the expected reward of the principal when $S$ is performed, and we often refer to $f$ as the (expected) \emph{reward} function. 

The agent's actions cannot be observed by the principal, only the outcome, so the principal may only incentivize the agent using an outcome-contingent payment scheme.
That is, the \emph{contract} specifies the non-negative\footnote{We impose the standard limited liability constraint, meaning that payments can only go from the principal to the agent.} 
transfer from the principal to the agent for each possible outcome, namely two numbers, $t(\text{failure})$ and $t(\text{success})$.

Given a contract, the agent responds with a set of actions $S \subseteq [n]$ that maximize his utility.
We take the perspective of the principal and seek a contract $t$ which maximizes her expected utility, given the agent's best-response.
In this setting, it is without loss of generality that the optimal contract satisfies $t(\text{failure})=0$ \cite{dutting2022combinatorial} and can thus be expressed using a single parameter $t(\text{success}) = \alpha \in [0,1]$ (the payment will never exceed the reward of the principal). 
Hereafter we will refer to $\alpha$ as the contract.

We denote the utility functions of the agent and the principal  from a contract $\alpha \in [0,1]$ and a chosen set of actions $S$ by
$$
u_a(\alpha, S) = \alpha f(S) - c(S) \quad \mbox{ and }
\quad u_p(\alpha, S) = (1-\alpha) f(S),
$$
respectively.
We slightly abuse notation and denote the agent's and principal utilities for contract $\alpha$, while also accounting for the agent's best-response, by
$$
u_a(\alpha) = u_a(\alpha, S_\alpha) = \max_{S\subseteq [n]} (\alpha f(S) - c(S)) \quad \mbox{ and } \quad
u_p(\alpha) =  u_p(\alpha, S_\alpha) = (1-\alpha) f(S_\alpha),
$$
respectively, where $S_\alpha$ denotes the agent's best response for contract $\alpha$. 

As standard in the literature, we assume the agent is breaking ties in favor of the principal. 
So for any two sets $S,S' \subseteq [n]$ maximizing the agent's utility for a given contract $\alpha$, $S_\alpha$ is taken to be the set with the higher value of $f$. 

\begin{definition}[Optimal Contract Problem]\label{def:instance}
    An instance of the optimal contract problem is a triplet $\instance$, specifying the (number of) actions, reward function, and costs.
    The solution to the problem is a contract $\alpha^\star \in [0,1]$ which maximizes the principal's reward, given that the agent responds optimally. Namely,
    \begin{align*}
        \alpha^\star \in \argmax_{\alpha \in [0,1]} (1-\alpha)f(S_\alpha),
    \end{align*}
    where
    $S_\alpha = \argmax_{S \subseteq [n]} (\alpha f(S) - c(S))$, breaking ties in favor of the principal. 
\end{definition}

\paragraph{Structure of the Optimal Contract.}
For a fixed set of actions $S$, the agent's utility, $u_a(\alpha, S)= \alpha f(S) - c(S)$, is an affine function of $\alpha$. As the agent always picks the set $S_\alpha$ that maximizes his utility, the function $u_a(\alpha) = u_a(\alpha, S_\alpha) =  \max_{S \subseteq [n]} u_a(\alpha, S)$ is piece-wise linear, monotonically non-decreasing and convex (see illustration in \Cref{fig:agentPrinUtil}).
The ``breakpoints" of $u_a(\alpha)$ are those values of $\alpha \in [0,1]$ in which the agent's best response changes, i.e. values of $\alpha$ such that $S_\alpha \ne S_{\alpha - \eps}$ for any $\eps>0$. 
It is easy to see (and has been observed by \cite{dutting2022combinatorial}) that the optimal contract must reside at such a breakpoint, since any other contract that incentivizes the same set of actions must be costlier for the principal.
These breakpoints are also referred to as \emph{critical values}. 

\begin{definition}[Incentivizable Set of Actions]\label{def:incentivizableSets}
    A set of actions $S' \subseteq [n]$ is incentivizable if there exists a contract $\alpha \in [0,1]$ such that $S'$ maximizes the agent's utility, after accounting for tie breaking. Namely, there exists $\alpha$ such that
    $$
    S' \in \argmax_{S \subseteq [n]} \left(\alpha f(S) - c(S)\right).
    $$
\end{definition}

\begin{definition}[Critical Value]\label{def:criticalValue}
    $\alpha' \in [0,1]$ is a critical value if it is the minimal payment which incentivizes some set $S$. Namely, there exists an incentivizable set $S'$ such that
    $S' \in \argmax_{S \subseteq [n]} u_a(\alpha',S)$ and for any $\eps >0$, $S' \notin \argmax_{S \subseteq [n]} u_a(\alpha'-\eps,S)$.
\end{definition}

For a given instance $\instance$, we denote by $\IS_{\instance}$ (and omit the subscript when it is clear from context), the collection of incentivizable sets. Often we will index the sets in this collection according to the order of their corresponding critical values. Namely, if the critical values critical values $\alpha < \alpha'$ incentivize the sets $S_t$ and $S_{t'}$ respectively, then $t < t'$.

\begin{observation} [\cite{dutting2022combinatorial}]\label{obs:critVals}
    For any cost function $c:2^{[n]}\to \reals_{\ge 0}$,
    and every reward function $f:2^{[n]} \to [0,1]$, there exist $k \le 2^n -1$ critical values, $0=\alpha_0 < \alpha_1 < \hdots < \alpha_k \le 1$, such that 
    \begin{enumerate}
        \setlength\itemsep{-0.25em}
        \item Each critical value $\alpha_t$ incentivizes a different set $S_t$.
        \item For every contract $x \in [0,1]$, $S_x = S_{\alpha_{t(x)}}$, where $t(x) = \argmax_{t}\{\alpha_t \mid \alpha_t \le x\}$.
        \item For every $0 \le t < t' \le k$, $f(S_t) < f(S_{t'})$, and $c(S_t) < c(S_{t'})$. 
        \item The optimal contract satisfies $\alpha^\star \in \{\alpha_0,\alpha_1,\ldots,\alpha_k\}$. 
    \end{enumerate}
    \end{observation}

Because any critical value $\alpha$ is, by definition, a contract for which the agent's best response changes, the agent is indifferent between (at least) two sets of actions when facing a contract $\alpha$. 
This implies the following observation.
\begin{observation}[\cite{dutting2022combinatorial}]\label{obs:critValStructure}
Let $\alpha$ be a critical value, and
let $S_\alpha$ be the agent's best response at $\alpha$.
The critical value that immediately succeeds $\alpha$ is $\min\{\alpha', 1\}$, where
$$
\alpha' = \min_{S \subseteq [n], f(S) > f(S_\alpha)} \frac{c(S) - c(S_\alpha)}{f(S) - f(S_\alpha)}
$$
\end{observation}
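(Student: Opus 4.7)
The plan is to characterize the next breakpoint by directly comparing the agent's utility under $S_\alpha$ against every other candidate set $S$, viewed as an affine function of the contract. For any $S \subseteq A$ and any contract $x$,
$$
u_a(x, S) - u_a(x, S_\alpha) = x\bigl(f(S) - f(S_\alpha)\bigr) - \bigl(c(S) - c(S_\alpha)\bigr),
$$
which is affine in $x$ with slope $f(S) - f(S_\alpha)$. Since $\alpha$ is a critical value with best response $S_\alpha$, we have $u_a(\alpha, S) \le u_a(\alpha, S_\alpha)$ for every $S$, and by the tie-breaking rule equality holds only for $S$ with $f(S) \le f(S_\alpha)$.

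First I would handle the sets $S$ with $f(S) \le f(S_\alpha)$. For these, the difference $u_a(x, S) - u_a(x, S_\alpha)$ has nonpositive slope and is nonpositive at $x = \alpha$, so it remains nonpositive for all $x > \alpha$. In particular, no such $S$ can become the agent's best response at any $x > \alpha$, and even in case of ties, the tie-breaking rule keeps $S_\alpha$ (or a set with the same $f$-value, which yields the same principal utility). Hence the next critical value must come from a set with $f(S) > f(S_\alpha)$.

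Next I would analyze sets $S$ with $f(S) > f(S_\alpha)$. For these, the slope is positive, so $u_a(x,S) = u_a(x, S_\alpha)$ at the unique crossing point
$$
x_S \;=\; \frac{c(S) - c(S_\alpha)}{f(S) - f(S_\alpha)},
$$
and $S$ strictly beats $S_\alpha$ for $x > x_S$. Taking $\alpha' = \min_{S : f(S) > f(S_\alpha)} x_S$, the minimizer $S^\star$ is the first alternative set to become weakly preferred to $S_\alpha$ as $x$ increases. For every $x \in (\alpha, \alpha')$, no set with higher $f$-value has reached its crossing point yet, so $S_\alpha$ remains the strict best response. At $x = \alpha'$, the set $S^\star$ ties with $S_\alpha$ but has $f(S^\star) > f(S_\alpha)$, so by the tie-breaking rule the best response jumps to $S^\star$, making $\alpha'$ a breakpoint. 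Finally, if $\alpha' > 1$, no such jump occurs within the feasible range $[0,1]$, and the immediate successor in $[0,1]$ is simply $1$, which accounts for the outer $\min\{\alpha', 1\}$. This establishes the claimed formula.
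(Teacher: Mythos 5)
The paper states \Cref{obs:critValStructure} as a cited fact from \cite{dutting2022combinatorial} and does not supply its own proof, only the informal remark that a critical value is a contract at which the agent is indifferent between two sets. Your argument is a correct, self-contained derivation of that fact: you exploit that $u_a(x,S)-u_a(x,S_\alpha)$ is affine in $x$ with slope $f(S)-f(S_\alpha)$, rule out sets with $f(S)\le f(S_\alpha)$ by sign considerations at $x=\alpha$, compute the crossing point for each set with higher reward, and invoke the paper's tie-breaking rule to show the best response jumps exactly at the minimal crossing point $\alpha'$; the clamp $\min\{\alpha',1\}$ and the empty-minimum case are handled correctly. This is the standard piece-wise-linearity proof and fills in exactly what the paper leaves implicit.
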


\paragraph{Combinatorial Set Functions and Oracles.}
For any combinatorial set function $v:2^{[n]} \to \reals$ we denote by $v(i \mid S)$ the \emph{marginal value} of action $i \in [n]$ given the set $S \subseteq [n]$, i.e. $v(i \mid S) = v(S \cup \{i\}) - v(S)$. Unless otherwise mentioned, we assume set functions are monotone, i.e., $v(S) \subseteq v(T)$ whenever $S \subseteq T$. For brevity, we will mostly omit this assumption.
In this work we consider submodular and supermodular set functions, which are characterized by decreasing and increasing marginal returns, respectively.
\begin{definition}\label{def:submodularity}
    $v:2^{[n]} \to \reals$ is \emph{submodular} if for every two sets $S,T \subseteq [n]$ such that $S\subseteq T$ and every $i \in [n] \setminus T$, $v(i \mid S)\ge v(i \mid T)$.
    A function $v$ is \emph{supermodular} if and only if $-v$ is submodular.
\end{definition}

Representing a combinatorial set function may require exponential size in the number of actions, $n$. Thus, we assume that the principal accesses $v$ via an oracle. We consider four types of queries:
\begin{itemize}
    \setlength\itemsep{-0.25em}
    \item \emph{Value queries:} The oracle accepts a set of actions $S \subseteq [n]$ and returns $v(S)$.
    \item \emph{Demand queries:} The oracle accepts a price vector $p \in \reals^n_{+}$ and returns the set $S \subseteq [n]$ that maximizes 
    $v(S) - \sum_{i \in S} p_i$.
    Ties are broken in favor of set with the higher value of $v$.
    \item \emph{Supply queries:} The oracle accepts a price vector $p \in \reals^n_{+}$ and returns the set $S \subseteq [n]$, that maximizes 
    $\sum_{i \in S} p_i - v(S)$.
    Ties are broken in favor of the set with the higher value of $v$.
    \item \emph{Best-response queries:} Given two set functions, $f:2^{[n]} \to \reals$ and $c:2^{[n]} \to \reals$, the oracle accepts a contract $\alpha \in [0,1]$ and returns the set of actions $S \subseteq [n]$, that maximizes $\alpha \cdot f(S)-c(S)$. Namely, it returns the set that maximize the agent's utility for a contract $\alpha$ when rewards are given by $f$ and costs are given by $c$. Ties are broken in favor of the set with higher value of $f$.
\end{itemize}
Observe that when $c$ is additive, with costs $\{c_i\}_{i \in [n]}$, a demand query to $f$ with $p_i = c_i / \alpha$ gives the agent's best response for contract $\alpha$. This is because a set $S$ maximizes $f(S) - \sum_{i \in S}c_i/\alpha$ if and only if $\alpha f(S) - \sum_{i \in S} c_i$. 
Thus, when $c$ is additive, the set that maximizes the agent's utility can be computed using a single demand query to $f$. 
The same holds for the additive $f$ case, with respect to a supply query to $c$.

\begin{observation}
    When $c$ is additive, with costs $\{c_i\}_{i \in [n]}$, a demand query to $f$ with prices $p_i = \frac{c_i}{\alpha}$ returns the agent's best response for contract $\alpha$.
    When $f$ is additive, with rewards $\{f_i\}_{i \in [n]}$, a supply query to $c$ with prices $p_i =  \alpha f_i$ returns the agent's best response for contract $\alpha$.
\end{observation}
Following this observation, and drawing from the literature on combinatorial auctions, we sometimes refer to the collection of sets that maximize the agent's utility as the agent's \emph{demand}.

\paragraph{Computational Model.} For ease of presentation, we describe our results for a model where each number takes $O(1)$ space, and basic operations involving numbers take $O(1)$ time.
In \Cref{apx:rounded} we show how to round our construction so that all numbers can be represented with poly-many bits (in the number of actions). So all our results continue to hold in a model in which the representation size of a number $x$ is $O(\log x)$.

\section{An Equal Revenue Instance for Submodular $f$ and Additive $c$}\label{sec:eqRev}
In this section we define a combinatorial equal-revenue instance --- in which all non-empty sets of actions are incentivizable via distinct contracts, and each of them yields the same utility for the principal. 
We construct such an instance when $f$ is submodular and $c$ is additive. An analogous construction for additive $f$ and supermodular $c$ appears in \Cref{sec:eqRev_SupMod_c}. 

\begin{definition}[Equal-Revenue Instance]\label{def:eqRev}
An instance $\instance$, is called \emph{equal-revenue} if 
\begin{enumerate}
    \setlength\itemsep{-0.25em}
    \item Every non-empty set of actions is incentivizable, i.e., $\IS_{\instance} = \IS = \{S_1, \dots, S_{2^n-1}\} =  2^{[n]} \setminus \emptyset$. 
    \item There are $2^n-1$ distinct optimal contracts $\{\alpha_{t}\}_{t=1}^{2^n-1}$, where $\alpha_t$ incentivizes $S_t$ and for all $t$, $u_p(\alpha_t) = (1-\alpha_t)f(S_t) = 1$
\end{enumerate}
\end{definition}

\begin{theorem}\label{thm:eqRev}
For every $n \in \mathbb{N}$, there exists an equal-revenue optimal contract instance, $\instance$, such that $f:2^n \to \reals_+$ is submodular and $c:2^n \to \reals_+$ is additive.
\end{theorem}
We dedicate the rest of the section to construct an equal revenue instance as per \Cref{thm:eqRev}.
In order to specify our construction, we associate every subset of actions $S \subseteq [n]$ with an index $t \in \{0,1,...,2^n-1\}$, which is the decimal value of the $n$-bit characteristic vector of $S$. We order the sets by those indexes $S_0 = \emptyset, S_1=\{1\}, S_2=\{2\}, S_3=\{1,2\},S_4=
\{3\}, S_5=\{1,3\},\ldots,S_{2^n-1} = \{1,...,n\}$. 
We set the cost for each action $i \in [n]$, to be $c_i = 2^{i-1}$. Note that the cost of the set $S_t$ is exactly its index, $t$, as, $c(S_t) = \sum_{i=1}^{n} I[i\in S_t] \cdot 2^{i-1} = t$.
Thus, the difference in costs between two consecutive sets $S_t$ and $S_{t-1}$ is exactly $c(S_t) - c(S_{t-1}) = t - (t-1) = 1$.
Let $\alpha_t$ be the critical value that incentivizes $S_t$. 
An equal revenue instance must satisfy the following requirements:
\begin{itemize}
    \setlength\itemsep{-0.25em}
    \item The revenue from each critical value is exactly 1. Namely, for any $t$, $(1-\alpha_t)f(S_t) = 1$. Let $f_t := f(S_t)$, this is equivalent to 
    $f_t = \frac{1}{1-\alpha_t}$.   
    \item By \Cref{obs:critValStructure}, for any $t \in \{0,\dots,2^n-2\}$, it holds that $\alpha_{t+1} = \frac{c(S_{t+1})-c(S_t)}{f(S_{t+1})-f(S_t) }=\frac{1}{f_{t+1}-f_t}$. 
\end{itemize}
By combining the two requirements, we get
\begin{equation}\label{eq:RR_t}
\alpha_{t+1} = \frac{(1-\alpha_{t+1})(1-\alpha_t)}{\alpha_{t+1}-\alpha_t}.    
\end{equation}
Rearranging gives the following recurrence relation, with the initial condition that $\alpha_0 = 0$:

\begin{equation}\label{eqn:alphaRR}
\alpha_0 = 0; \;\;\;\; \alpha_{t+1} = \alpha_t + \frac{1}{2}\left(\sqrt{4\alpha_t^2 - 8\alpha_t + 5} - 1\right) \qquad t \in \{0,...,2^n-2\}.    
\end{equation}
An illustration of the instance is depicted in \Cref{fig:agentPrinUtil}.

\definecolor{ao}{rgb}{0.0, 0.5, 0.0}
\begin{figure}[t]
    \centering
    \begin{tikzpicture}
    \begin{axis}[
        width=\textwidth,
        height=5cm,
        xlabel={\small{Contract}},
        ylabel={\small{Expected Utility}},
        xtick={0, 0.618, 0.747, 0.807, 0.843, 0.867, 0.884, 
        0.897, 1},
        xticklabels={
        0, $\alpha_1$, $\alpha_2$, $\alpha_3$, 
        {$\alpha_4$}, 
        {}, 
        {$\alpha_6$}, 
        {}, 
        1},
        ytick={1},
        yticklabels={\footnotesize{1}},
        xticklabel style={
        font=\footnotesize, 
        rotate=0, 
        yshift={-2pt}
        },
        xmajorgrids=true,
        grid style=dashed,
        xmin=0, xmax=1, ymin=0, ymax=3,
        legend pos=north west,
        line width=1pt
    ]

    \pgfplotsset{cycle list={{ao},{blue}}}
    \foreach \cost/\xstart/\xend/\rew in {
        0/0/0.618/1, 
        1/0.618/0.747/2.618, 
        2/0.747/0.807/3.956, 
        3/0.807/0.843/5.195, 
        4/0.843/0.867/6.381, 
        5/0.867/0.884/7.534, 
        6/0.884/0.897/8.664, 
        7/0.897/1/9.778
    } {
        \addplot[domain=\xstart:\xend, color=ao] {\rew * x - \cost};
        \addplot[domain=\xstart:\xend, color=blue] {\rew - \rew * x};
    }
    \addplot[domain=0:1, opacity=0.5, dashed, line width=0.25pt] {1};

    \legend{\small{Agent}, \small{Principal}}
    \end{axis}
    \end{tikzpicture}
    \caption{The utilities of the players for $f$ and $\{c_i\}_{i\in [n]}$ as defined above, for the case of $n=3$. Observe that each of the critical values yields the principal the same expected utility of 1.}
    \label{fig:agentPrinUtil}
\end{figure}

Below, we establish the basic properties of our construction. Most importantly
we show that $f$ is submodular (\Cref{prop:fSubmodular}), and that each of the $2^n-1$ values of $\alpha_t$ defined above is a distinct critical value, that gives the principal the same utility (\Cref{prop:CVs}).
Our first lemma  shows that the $\alpha_t$ values
as per \Cref{eqn:alphaRR} are distinct, form an increasing sequence, and are bounded in $[0,1)$.
Our second lemma establishes a
key property of the set of rewards, namely that the discrete derivative $f_{t+1}-f_t$ is negative. 
The proofs of these two lemmas are deferred to \Cref{apx:missingProof}.

\begin{lemma}\label{lem:alphaMon}
    For every $t \in \{0,\dots,2^n-2\}$, $0 \le \alpha_t < \alpha_{t+1} < 1$.
\end{lemma}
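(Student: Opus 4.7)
The plan is to prove all three inequalities simultaneously by induction on $t$, using the explicit recurrence for $\alpha_{t+1}$. Let $g(\alpha) := \alpha + \tfrac{1}{2}\bigl(\sqrt{4\alpha^2 - 8\alpha + 5} - 1\bigr)$, so the recurrence reads $\alpha_{t+1} = g(\alpha_t)$. I would first observe the algebraic identity $4\alpha^2 - 8\alpha + 5 = 4(\alpha-1)^2 + 1 \ge 1$, which shows that the radicand is always positive (so $g$ is well-defined on $\reals$) and, more importantly, that $\sqrt{4\alpha^2 - 8\alpha + 5} \ge 1$ with equality iff $\alpha = 1$.

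For the induction, the base case $t=0$ is trivial since $\alpha_0 = 0 \in [0,1)$. For the inductive step, assume $\alpha_t \in [0,1)$. The strict monotonicity $\alpha_{t+1} > \alpha_t$ follows immediately from the identity above: since $\alpha_t \ne 1$, we have $\sqrt{4(\alpha_t-1)^2 + 1} > 1$, so $\alpha_{t+1} - \alpha_t = \tfrac{1}{2}(\sqrt{4(\alpha_t-1)^2 + 1} - 1) > 0$. Combined with the inductive hypothesis $\alpha_t \ge 0$, this gives $\alpha_{t+1} > 0$ as well.

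The remaining step is the upper bound $\alpha_{t+1} < 1$. Rearranging, this is equivalent to
\[
\sqrt{4\alpha_t^2 - 8\alpha_t + 5} < 3 - 2\alpha_t.
\]
Since $\alpha_t < 1 < 3/2$, the right-hand side is positive, so both sides are nonnegative and squaring preserves the inequality. Squaring and simplifying yields $-8\alpha_t + 5 < 9 - 12\alpha_t$, i.e. $\alpha_t < 1$, which holds by the inductive hypothesis. This closes the induction.

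I do not anticipate any real obstacle here; the only subtlety is checking that the right-hand side $3 - 2\alpha_t$ is nonnegative before squaring, which is why it matters that the induction carries the upper bound $\alpha_t < 1$ alongside the monotonicity claim rather than proving them separately.
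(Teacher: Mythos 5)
Your proof is correct and takes essentially the same approach as the paper: both establish monotonicity by observing that the increment $\tfrac{1}{2}(\sqrt{4\alpha_t^2-8\alpha_t+5}-1)$ is strictly positive for $\alpha_t\in[0,1)$, and both derive $\alpha_{t+1}<1$ from $\alpha_t<1$ by algebraic manipulation of the recurrence. Your version is slightly more explicit (the clean identity $4\alpha^2-8\alpha+5=4(\alpha-1)^2+1$, the direct inequality rather than contradiction, and the explicit check that $3-2\alpha_t>0$ before squaring), but the substance is identical.
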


\begin{lemma}\label{lem:fMarginalDecrease}
     $f_{t+1} - f_t$ is a decreasing function of $t$, for any $t \ge 0$.
\end{lemma}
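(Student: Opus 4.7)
The plan is to reduce $f_{t+1} - f_t$ to a very clean closed form in terms of $\alpha_{t+1}$ alone, and then invoke \Cref{lem:alphaMon} directly. I expect the marginal difference to collapse to $1/\alpha_{t+1}$, after which monotonicity is immediate.

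First I would massage the recurrence in \Cref{eqn:alphaRR} into an algebraic identity between consecutive critical values. Rewriting the recurrence as $2(\alpha_{t+1} - \alpha_t) + 1 = \sqrt{4\alpha_t^2 - 8\alpha_t + 5}$ and squaring both sides, I would obtain, after routine simplification,
$$
\alpha_{t+1}^2 - 2\alpha_t \alpha_{t+1} + \alpha_{t+1} + \alpha_t - 1 = 0,
$$
which can be rearranged as
$$
\alpha_{t+1}(\alpha_{t+1} - \alpha_t) \;=\; (1 - \alpha_t)(1 - \alpha_{t+1}).
$$

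Next, using $f_t = 1/(1-\alpha_t)$, the marginal difference becomes
$$
f_{t+1} - f_t \;=\; \frac{1}{1-\alpha_{t+1}} - \frac{1}{1-\alpha_t} \;=\; \frac{\alpha_{t+1} - \alpha_t}{(1-\alpha_t)(1-\alpha_{t+1})} \;=\; \frac{1}{\alpha_{t+1}},
$$
where the last equality applies the identity above. By \Cref{lem:alphaMon}, $\alpha_{t+1}$ is strictly increasing in $t$ and strictly positive from $t=0$ onward (since $\alpha_1 > 0$), so $1/\alpha_{t+1}$ is strictly decreasing in $t$, which is exactly the claim.

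The only mildly delicate step is the algebraic manipulation producing $\alpha_{t+1}(\alpha_{t+1}-\alpha_t) = (1-\alpha_t)(1-\alpha_{t+1})$; the rest is bookkeeping. As a sanity check, the resulting identity $\alpha_{t+1}(f_{t+1}-f_t) = 1 = c(S_{t+1}) - c(S_t)$ is exactly the indifference condition one expects at a critical value of the agent's best response, which is consistent with the equal-revenue design of the construction and is reassuring that the computation is correct.
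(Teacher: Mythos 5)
Your proof is correct, and it takes a genuinely different and notably cleaner route than the paper's. The paper proves \Cref{lem:fMarginalDecrease} in \Cref{apx:missingProof} by a continuity argument: it extends the discrete difference $f_{t+1}-f_t$ to a function $g(x)=z(x)/[(1-x-z(x))(1-x)]$ of a real variable $x$, differentiates, clears the square root, and after several lines of algebra reduces to the positivity of the quartic $x^4-3x^3+6x^2-4x+1$. You instead square the recurrence once, obtain the algebraic identity $\alpha_{t+1}(\alpha_{t+1}-\alpha_t)=(1-\alpha_t)(1-\alpha_{t+1})$, and read off the closed form $f_{t+1}-f_t=1/\alpha_{t+1}$, after which monotonicity is immediate from \Cref{lem:alphaMon} (noting $\alpha_1>0$ and $\alpha_{t+1}>\alpha_t$, so the division is legitimate and $1/\alpha_{t+1}$ is well-defined and strictly decreasing). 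Interestingly, the paper itself re-derives the same identity $\tfrac{1}{f_{t+1}-f_t}=\alpha_{t+1}$ later, inside the proof of \Cref{prop:CVs}, without using it to shortcut this lemma; your approach surfaces that fact earlier and uses it where it does the most work. Your argument is shorter, avoids calculus and the quartic sign check entirely, and also gives a sharper conclusion (an exact closed form rather than mere monotonicity), which is the kind of identity the appendix bounds in \Cref{apx:rounded} could also have leveraged.
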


Next we show that all $\alpha_t$ are critical values, and yield the same principal utility of $1$.
\begin{proposition}\label{prop:CVs}
    The set of critical values is $0 = \alpha_0 < \alpha_1 < \ldots < \alpha_{2^n-1} < 1$ and for any $t \in \{0, \ldots, 2^n-2\}$, $S_t$ is the agent's best response for contract $\alpha \in [\alpha_t, \alpha_{t+1})$. Additionally, each of the critical values yields the same utility for the principal.
\end{proposition}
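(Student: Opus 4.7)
The plan is to prove the proposition by induction on $t$, walking through the critical values one at a time using Observation~\ref{obs:critValStructure}. The base case $t=0$ is essentially free: at $\alpha_0=0$ the agent receives zero payment, so any nonempty set yields strictly negative utility, and the unique best response is $S_0=\emptyset$. Suppose the claim holds at $\alpha_t$, i.e., $S_t$ is the agent's best response there. By Observation~\ref{obs:critValStructure}, the next critical value is
\[
\min_{S:\,f(S)>f(S_t)}\frac{c(S)-c(S_t)}{f(S)-f(S_t)} \;=\; \min_{t'>t}\frac{t'-t}{f_{t'}-f_t},
\]
where we used $c(S_{t'})=t'$ together with the fact (immediate from Lemma~\ref{lem:alphaMon}, since $f_t = 1/(1-\alpha_t)$) that $f_{t'}>f_t$ iff $t'>t$. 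Lemma~\ref{lem:fMarginalDecrease} says the forward differences $f_{t+1}-f_t$ are strictly decreasing, so $t\mapsto f_t$ is strictly concave, and hence the minimum above is attained uniquely at $t'=t+1$, with value $1/(f_{t+1}-f_t)$.

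The algebraic heart is to check that $1/(f_{t+1}-f_t)$ equals $\alpha_{t+1}$ as given by the recurrence (\ref{eqn:alphaRR}). Writing $\beta_s:=1-\alpha_s$ so that $f_s=1/\beta_s$, the indifference equation $\alpha_{t+1}(f_{t+1}-f_t)=1$ rearranges to
\[
\beta_{t+1}^2 - (2\beta_t+1)\beta_{t+1} + \beta_t \;=\; 0,
\]
whose smaller root (the one satisfying $\beta_{t+1}<\beta_t$) is $\beta_{t+1}=\tfrac12\bigl((2\beta_t+1)-\sqrt{4\beta_t^2+1}\bigr)$. Substituting back $\beta_s=1-\alpha_s$ and simplifying the discriminant $4\beta_t^2+1 = 4\alpha_t^2-8\alpha_t+5$ recovers exactly Eq.~(\ref{eqn:alphaRR}). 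This is the calculation the recurrence was designed to produce.

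It remains to verify the tie-breaking at $\alpha_{t+1}$ so the induction closes with $S_{t+1}$ as the best response. For $k<t$, the function $\alpha\mapsto u_a(\alpha,S_t)-u_a(\alpha,S_k)$ has strictly positive slope $f_t-f_k$ and is non-negative at $\alpha_t$ (by the inductive hypothesis), hence is strictly positive at $\alpha_{t+1}>\alpha_t$. For $k>t+1$, strict concavity of $t\mapsto f_t$ gives strict inequality in the minimum computation above, so $S_k$ is strictly worse than $S_t$ at $\alpha_{t+1}$. Therefore the only sets tied at $\alpha_{t+1}$ are $S_t$ and $S_{t+1}$, and since $f_{t+1}>f_t$ the tie-breaking rule selects $S_{t+1}$. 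The equal-revenue claim is then immediate: at each critical value,
\[
u_p(\alpha_t)\;=\;(1-\alpha_t)\,f_t\;=\;(1-\alpha_t)\cdot\frac{1}{1-\alpha_t}\;=\;1.
\]
The main obstacle is really just the one quadratic-root computation above; the inductive skeleton and the equal-revenue identity are forced by the choices $c(S_t)=t$ and $f_t=1/(1-\alpha_t)$ once concavity of $t\mapsto f_t$ (Lemma~\ref{lem:fMarginalDecrease}) is in hand.
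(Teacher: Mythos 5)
Your proof is correct and follows essentially the same route as the paper's: induction on $t$, applying Observation~\ref{obs:critValStructure} and using Lemma~\ref{lem:fMarginalDecrease} to show the minimum is attained at $k=1$, then verifying $1/(f_{t+1}-f_t)=\alpha_{t+1}$ and noting $u_p(\alpha_t)=1$. You simply make explicit two steps the paper compresses — the quadratic-root calculation behind the recurrence and the tie-breaking check at $\alpha_{t+1}$ — both of which are sound.
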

\begin{proof}
    Observe that for $\alpha_0=0$, $S_0=\emptyset$ is the agent's best response, as any other set has positive costs.
    Assume that $S_t$ is the agent's best response at $\alpha_t$, for some $t\ge 0$. By \Cref{obs:critVals}, the agent's best response can only change to a set with a higher reward, i.e., $S_{t+k}$ for some $k>1$. 
    By \Cref{obs:critValStructure}, the next critical value satisfies
\begin{eqnarray*}
    \alpha =  
    \min_{k \in \{1,\ldots,2^n-t-1\}} \frac{c_{t+k} - c_{t}}{f_{t+k}-f_{t}} 
    =
    \min_{k \in \{1,\ldots,2^n-t-1\}}
    \frac{k}{f_{t+k}-f_{t}}
    =
    \min_{k \in \{1,\ldots,2^n-t-1\}}
    \frac{k}{\sum_{i=0}^{k-1}f_{t+i+1}-f_{t+i}},
\end{eqnarray*}
    where the first equality follows from the definition of $\{c_i\}_{i \in [n]}$ and the second from telescoping sum.
    By \Cref{lem:fMarginalDecrease}, it holds that for any $k$,
    $\sum_{i=0}^{k-1}f_{t+i+1}-f_{t+i} \le k \cdot (f_{t+1}-f_t)$. Thus, 
    $$
    \alpha \ge \min_{k \in \{1,\ldots,2^n-t-1\}}
    \frac{k}{k(f_{t+1}-f_{t})} = \frac{1}{f_{t+1}-f_{t}},
    $$
    and the next critical value will be $\frac{1}{f_{t+1}-f_{t}} = \frac{c_{t+1}-c_t}{f_{t+1}-f_{t}}$ with the agent best response being $S_{t+1}$.
    Finally, observe that for any $t \ge 0$
    $$
    \frac{1}{f_{t+1}-f_t} 
    = \frac{(1-\alpha_{t+1})(1-\alpha_t)}{\alpha_{t+1}-\alpha_t}
    = \alpha_{t+1},
    $$
    where the first equality follows from the definition of $f_t$ and the second from \Cref{eq:RR_t}.

    The fact that for $t \ge 0$, every contract $\alpha_t$ yields the same utility for the principal follows immediately from the definition of $f_t$,
    $
    u_p(\alpha_t) = (1-\alpha_t)f_t = 1,
    $
    which concludes the proof.
\end{proof}
We conclude by showing that the reward function we described is monotone and submodular.
\begin{proposition}\label{prop:fSubmodular}
    The function $f$ defined above is monotonically non-decreasing and submodular.
\end{proposition}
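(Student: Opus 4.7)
The plan is to derive monotonicity immediately from \Cref{lem:alphaMon} and to reduce submodularity to the discrete concavity already established in \Cref{lem:fMarginalDecrease}. The bridge between the linear indexing and the Boolean lattice is that the indexing is order-preserving: if $S_s \subseteq S_t$, then the characteristic vector of $S_s$ is pointwise dominated by that of $S_t$, so reading characteristic vectors as binary expansions gives $s \le t$.

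Monotonicity is then immediate: by \Cref{lem:alphaMon} the sequence $\alpha_t$ is strictly increasing and lies in $[0,1)$, so $f_t = 1/(1-\alpha_t)$ is strictly increasing in $t$, and therefore $S \subseteq T$ implies $f(S) \le f(T)$.

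For submodularity I need $f(i \mid S) \ge f(i \mid T)$ whenever $S \subseteq T$ and $i \in A$. The cases $i \in S$ (both marginals vanish) and $i \in T \setminus S$ (in which $f(i \mid T)=0$ while $f(i \mid S) \ge 0$ by the monotonicity just established) are trivial. For the main case $i \notin T$, write $S = S_s$, $T = S_t$, and $d := 2^{i-1}$; since $i \notin S$ and $i \notin T$, flipping the $i$-th bit gives $S \cup \{i\} = S_{s+d}$ and $T \cup \{i\} = S_{t+d}$, so the goal becomes
\[
f_{s+d} - f_s \;\ge\; f_{t+d} - f_t.
\]
I will expand each side as a telescoping sum $\sum_{j=0}^{d-1}(f_{k+j+1} - f_{k+j})$ with $k = s$ on the left and $k = t$ on the right. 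Since $s \le t$, each summand on the left corresponds, via \Cref{lem:fMarginalDecrease} applied to the indices $s+j \le t+j$, to a no-larger summand on the right, and summing closes the argument.

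The main conceptual observation is that the one-dimensional concavity along $0, 1, 2, \ldots$ of \Cref{lem:fMarginalDecrease} is strong enough to imply genuine submodularity over the Boolean lattice $2^A$. This works precisely because the cost encoding $c_i = 2^{i-1}$ makes adding any action $i$ to a set not containing it a uniform translation of $d = 2^{i-1}$ in the index, so the two telescoping windows $[s, s+d)$ and $[t, t+d)$ are exact translates of one another and can be compared summand by summand. No obstacle beyond bookkeeping is anticipated; the heavy lifting has already been absorbed into \Cref{lem:fMarginalDecrease}.
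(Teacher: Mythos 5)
Your proposal is correct and matches the paper's proof in essence: monotonicity from \Cref{lem:alphaMon}, the same case split on whether $i$ is in $S$, in $T\setminus S$, or outside $T$, and the same reduction of the main case to a term-by-term comparison of two telescoping sums over windows of length $2^{i-1}$, using \Cref{lem:fMarginalDecrease}. The paper packages the last two cases as a "slightly stronger claim" holding for any $t<t'$ (not just nested sets), but the underlying mechanism is identical.
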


\begin{proof}
    Monotonicity follows immediately from \Cref{lem:alphaMon}, as $\alpha_t < \alpha_{t+1}$ and $f_t= \frac{1}{1-\alpha_t}$ for any $t$.

    To show submodularity, we prove a slightly stronger claim: for any two sets $S_t, S_{t'} \subseteq [n]$ such that $t<t'$ and any action $i \notin S_t$, it holds that $f(i \mid S_t) \ge f(i\mid S_{t'})$.
    Observe that submodularity follows:
    Since $S_t \subseteq S_{t'}$ implies $t<t'$, for any $i \in [n] \setminus S_{t'}$ we get $f(i \mid S_t) \ge f(i\mid S_{t'})$.
    
    To see why the stronger claim holds, observe that if $i \in S_{t'}$, then by the monotonicity of $f$, $f(i \mid S_{t})\ge 0 =f(i \mid S_{t'})$. Otherwise, $i \notin S_{t'}$, and we have $f( i \mid S_{t'} ) = f_{t'+2^{i-1}} - f_{t'}$.
    In addition, since $i \notin S_{t}$, it also holds that $f( i \mid S_t ) = f_{t+2^{i-1}} - f_t$. 
    For $k\in \{t,t'\}$, we have by telescoping sum that
    $f_{k+2^{i-1}} - f_{k} = \sum_{j=1}^{2^{i-1}} (f_{k+j} - f_{k+j-1}).$
    Therefore, it suffices to show that for every $j \in \{1,\ldots,2^{i-1}\}$, it holds that $f_{t+j} - f_{t+j-1} \ge f_{t'+j} - f_{t'+j-1}$.
    As $t' > t$, this follows immediately from \Cref{lem:fMarginalDecrease}.
\end{proof}

\section{Query Complexity Hardness for Submodular $f$ and Additive $c$}\label{sec:demandHardness}

In this section we give a hardness result for submodular rewards in the demand query model. The analogous result for additive $f$ and supermodular $c$ with supply queries appears in \Cref{sec:demandHardness_SupMod_c}.

\begin{theorem}\label{thm:expDemandQueries}
    When $f$ is submodular and $c$ is additive, any algorithm that computes the optimal contract requires exponentially-many demand queries to $f$.
\end{theorem}

To prove \Cref{thm:expDemandQueries} we proceed as follows. In \Cref{sec:eqToVal}, we show, using a standard ``hide a special set'' argument, that any equal-revenue instance, as per \Cref{def:eqRev}, can be utilized to show hardness in the value query model.
In \Cref{sec:SparsetoDemand}, we continue by extending the hardness results to the stronger demand query model.
Our key insight is a notion of ``sparse demand'', which, together with the equal-revenue property, enables a black box reduction from value-query hardness to demand-query hardness.
In \Cref{sec:eqRevWithSparse}, we use a delicate counting argument to show that the construction per \Cref{sec:eqRev} has sparse demand, and conclude that there are instances in which the principal must use exponentially-many demand queries to find the optimal contract.

\subsection{From Equal Revenue to Value Query Hardness}\label{sec:eqToVal}

\definecolor{ao}{rgb}{0.0, 0.5, 0.0}

\newcommand{\agentcolor}{ao}
\begin{figure}
    \centering
    \begin{tikzpicture}
    \begin{axis} 
    [   width=\textwidth,
        height=5cm,
        xtick={0.47, 0.5, 0.752, 0.81},  
        xticklabels={$\alpha'_{k}$, $\alpha_{k}$, $\alpha_{k+1}$, $\alpha'_{k+1}$}, 
        ymajorgrids=true,
        grid style=dashed,
        line width=1pt,
        ymajorticks=false,
        ylabel={\small{Expected Utility}},
        legend pos=north west
    ]
    
    \addplot[ domain=0.3:0.5, color=\agentcolor,]{0.2*x-0.05};
    \addplot[ domain=0.5:0.75, color=\agentcolor,]{0.4*x-0.15};
    \addplot[ domain=0.75:0.9, color=\agentcolor,]{0.6*x-0.3};

    \addplot[dashed, dash pattern=on 6pt off 3pt, line width=2pt, domain=0.4:0.85, color=\agentcolor]{0.415*x - 0.15};

    \addplot[color=gray, very thin, dashed] coordinates {(0.47, 0) (0.47, 0.3)};
    \addplot[color=gray, very thin, dashed] coordinates {(0.5, 0) (0.5, 0.3)};
    \addplot[color=gray, very thin, dashed] coordinates {(0.752, 0) (0.752, 0.3)};
    \addplot[color=gray, very thin, dashed] coordinates {(0.81, 0) (0.81, 0.3)};

    \legend{\small{Agent}}

    \end{axis}
    \end{tikzpicture}
    \caption{The agent's utility in an equal-revenue instance $\langle n, \feq, \ceq \rangle$ (solid), and the utility for the perturbed instance $\langle n, \feq_k, \ceq \rangle$ (dashed). Note that the perturbed critical values satisfy $\alpha'_k < \alpha_k$ and $\alpha_{k+1} < \alpha'_{k+1}$.}
    \label{fig:perturbation}
\end{figure}

In the following we show how, given an equal revenue instance $\eqRevInst$, one can define a family of optimal contract instances $\familyI$, such that (i) for any $k$, $\feq_k$ is monotone and submodular, 
and (ii) any algorithm that computes the optimal contract for any instance in $\I$ must use exponentially-many value queries.

Every instance $\perturbedInst \in \I$ is almost identical to the original equal revenue instance $\eqRevInst$, except that one of the sets gets a small additive bonus of $\eps$ to its reward. 
We show that when properly picking $\eps$, $\feq_k$ maintains the key properties of $\eqRevInst$. Namely, it is monotone and submodular (see \Cref{prop:feq_kSubmod}), and $\perturbedInst$ has exponentially-many critical values (see \Cref{prop:feq_k_CVs}).
The exponential lower bound on the number of value queries follows from a straight-forward ``hide a special set'' argument, as we show in \Cref{prop:valueQueryBound}.

Given an equal-revenue instance $\eqRevInst$, denote the collection of incentivizable sets with $\IS = \{S_t\}_{t=1}^{2^n-1}$, let $\eps>0$ be such that,
\begin{align}\label{eq:eps_0}
0 < 
\eps < \min \Bigg\{ 
& \min_{\substack{S \subseteq T \subseteq [n], \\ i \in [n]\setminus T}} (\feq(i \mid S) - \feq(i \mid T)), \nonumber \\
& \min_{t \in \{2,\dots,2^n-1\}} \left(\frac{\ceq(S_t) - \ceq(S_{t-1})}{\alpha_{t-1}} - (\feq(S_t) - \feq(S_{t-1}))\right), \\
& \min_{t \in \{1,\dots,2^n-1\}} \left(\feq(S_{t})-\feq(S_{t-1})\right) \Bigg\}.
\nonumber
\end{align}
The three terms in \Cref{eq:eps_0} correspond to the three properties $\perturbedInst$ should satisfy: Submodularity of $\feq_k$, exponential number of critical values, and monotonicity of $\feq_k$, respectively.
In \Cref{cla:eps_0_positive}, deferred to \Cref{apx:proofs_family_equalRev}, we show that the right-hand side of \Cref{eq:eps_0} is positive.
We define $\feq_k$ to be identical to $\feq$ everywhere, except it gives an additive ``bonus" of $\eps$ to the set $S_k$. Namely,
$$
\feq_k(S_t) = 
\begin{cases}
\feq(S_t) + \eps & t=k, \\
\feq(S_t) & t \ne k.
\end{cases}
$$
This completely defines the collection of instances $\familyI$, as they all share the same costs $\ceq$.
The following propositions describe the basic properties of a perturbed instance $\perturbedInst$. The proofs are deferred to \Cref{apx:proofs_family_equalRev}.
\begin{proposition}\label{prop:feq_kSubmod}
    For any $k$, let $\langle n, \feq_k, \ceq \rangle$ be a perturbed instance with $\eps$ satisfying \Cref{eq:eps_0}.
    Then $\feq_k$ is non-negative, monotone and submodular.
\end{proposition}

\begin{proposition}\label{prop:feq_k_CVs}
    For any $k$, let $\langle n, \feq_k, \ceq \rangle$ be a perturbed instance with $\eps$ satisfying \Cref{eq:eps_0}.
    Then $\langle n, \feq_k, \ceq \rangle$, has $2^n-1$ distinct critical values $\{\alpha'_t\}_{t=1}^{2^n-1}$, and $\alpha'_k$ is the unique optimal contract.
\end{proposition}

As shown in the proof of \Cref{prop:feq_k_CVs}, the critical values of the instance $\perturbedInst$ remain identical to the equal revenue instance, expect for $\alpha_k$ and $\alpha_{k+1}$, which are slightly perturbed (see \Cref{fig:perturbation} for an illustration).
For ease of presentation, we will continue to use $\alpha_k$, even when considering instances defined w.r.t. $\feq_k$, although the values $\alpha'_k $ and $\alpha_k$ are not exactly equal.

We show that when faced with an instance from the above collection, the principal must make exponentially-many value queries in expectation in order to find the optimal contract. 

\begin{proposition}\label{prop:valueQueryBound}
Let $I^* = \langle n, \feq^*, \ceq \rangle \in \I = \{\langle n, \feq_k, \ceq \}_{k=1}^{2^n-1}$.
Any algorithm that is only given access to a value oracle for $f^*$, must perform an exponential number of value queries (in expectation) to find the optimal contract.
\end{proposition}
\begin{proof}
    By Yao's principle, it is enough to show that when we draw $I^* =\langle n, \feq^*, \ceq \rangle \in \I$ uniformly at random, any deterministic algorithm requires an exponential number of value queries in expectation over the choice of $I^*$.

    First, observe that every value query $\feq^*(S_t)$ can lead to two different answers. If $\feq^* = \feq_{t}$, then $\feq^*(S_t) = \feq(S_t) + \eps$, otherwise $\feq^*(S_t) = \feq(S_t)$.
    
    Thus, without loss of generality, every deterministic algorithm makes a \emph{fixed} series of value queries and stops whenever the answer is $\feq^*(S_t) = \feq(S_t) + \eps$. If it stops beforehand, and makes less than $2^n-2$ value queries, there exist two sets $S_t,S_{t'}$ which the algorithm did not query and are both consistent with the oracle answers given. Thus, the algorithm cannot distinguish between the case in which $\alpha_t$ is optimal and the case in which $\alpha_{t'}$ is.
    It follows that any deterministic algorithm that makes $q$ value queries only stops for $q$ different choices of $\feq^*$. 
    
    Fix a deterministic algorithm for the optimal contract. Since $\feq^*$ is drawn uniformly at random, with probability $1/2$, the first $2^{n-1}$ queries, $S_{t_1},\ldots,S_{t_{2^{n}-1}}$, satisfy $\feq^*(S_{t_i}) = \feq(S_{t_i})$, and the algorithm does not stop.
    Thus, the expected number of queries of any deterministic algorithm is at least $\frac{1}{2} \cdot 2^{n-1} = 2^{n-2}$.
\end{proof}

\subsection{Extending Hardness to Demand Queries}\label{sec:SparsetoDemand}
In the following section we extend the argument from the previous part to show an impossibility result in the demand query model.
To this end, we introduce the property of sparse demand. We say that a set function $f$ possesses sparse demand if for any sufficiently small $\sigma>0$, and any price vector $p$ there are poly-many sets which maximize the (quasi-linear) utility with respect to $f$ and $p$, up to an additive factor of $\sigma$.
We show that if an equal revenue instance has a reward function with sparse demand, then finding an optimal contract for the family of instances defined in \Cref{sec:eqToVal}, $\familyI$, is hard even with access to demand queries.

\begin{definition}\label{def:sparseDemand}[$\sigma$-Sparse Demand]
For $\sigma >0$, we say that $f: 2^{[n]} \rightarrow \reals_+$ has $\sigma$-sparse demand for every price vector $p$, if $|\demSigP|$ is polynomial in $n$, where $\demSigP$ is the collection of sets of actions that maximize the quasi-linear utility, w.r.t. $f$ and $p$, up to an additive factor of $\sigma$:
$$
\demSigP = \left\{ S\subseteq [n] ~\middle|~ \max_{T\subseteq [n]} \left(f(T)-\sum_{i\in T} p_i\right) - \left(f(S) - \sum_{i\in S} p_i\right) \le \sigma \right\}.
$$
We refer to $\demSigP$ as the $\sigma$-approximate demand. 
Observe that for any $0 < \sigma' \le \sigma$, $\sigma$-sparse demand implies $\sigma'$-sparse demand, as $\demand{\sigma'}{p} \subseteq \demSigP$.
\end{definition}
When the specific value of the sparseness parameter $\sigma>0$ is not required, we omit it and say that $f$ has \textit{sparse demand}.
We now turn to show a demand query lower bound for computing the optimal contract for the family $\familyI$.
\begin{theorem}\label{thm:DemandHardnessForPerturbed}
    Let $\eqRevInst$ be an equal revenue instance such that $\feq$ has sparse demand. Let $\familyI$ be the perturbed family of instances as defined in \Cref{sec:eqToVal}.
    Any algorithm that finds the optimal contract for any instance in $\familyI$, must make  exponentially-many demand queries in expectation.
\end{theorem}

To establish \Cref{thm:DemandHardnessForPerturbed}, we prove that any algorithm, even if augmented with full knowledge of the ``original'' equal revenue instance, $\eqRevInst$, still requires an exponential number of demand queries to find the optimal contract.
The idea is that the sparse demand property allows this augmented algorithm to simulate a demand query to the ``real'' $\feq_k$ with poly-many value queries, implying that, in this case, demand queries are no stronger than value queries (up to a polynomial factor).
The result then follows from fact that any algorithm (even an augmented one), requires exponentially-many value queries, as shown in \Cref{prop:valueQueryBound}.

\begin{proof}[Proof of \Cref{thm:DemandHardnessForPerturbed}]
    Let $\langle n,\feq,\ceq\rangle$ be an equal revenue instance where $\feq$ is submodular and has $\sigma$-sparse demand.
    Recall the family of perturbed instances, $\familyI$, which is defined in \Cref{sec:eqToVal}. Namely, $\feq_k(S_t) = \feq(S_t)$ for any $t \ne k$ and for $t=k$, $\feq_k(S_k) = \feq(S_k) + \eps$, for some $\eps>0$ which satisfies \Cref{eq:eps_0} and also $\eps \le \sigma$.
    
    Consider a specialized algorithm $\A$, which finds the optimal contract for all instances in $\I$ and is augmented with full knowledge of $\feq$, the reward function of the ``original" equal revenue instance.
    Clearly, showing that $\A$ requires exponentially many demand queries in expectation, is sufficient.
    
    Let $\langle n,\feq^*,\ceq\rangle \in \I$, be a perturbed instance drawn uniformly at random.
    First, observe that $\A$ can compute a demand query for $\feq^*$ using poly-many value queries (and perhaps exponentially-many computational steps):
    For any price vector $p$, if $S_p$ should be returned by the demand oracle, then $\feq^*(S_p) - p(S_p) \ge \feq^*(S)- p(S)$ for any $S \subseteq [n]$, which implies that,
    $$
    \feq(S_p) + \eps - p(S_p) \ge \feq^*(S_p) - p(S_p) \ge \feq^*(S)- p(S) \ge \feq(S)- p(S),
    $$
    and as such, $S_p \in \demand{\eps}{p}$. 
    Thus, for any price vector $p$, every set in the demand with respect to $\feq^*$ must also belong to $\demand{\eps}{p}$ with respect to $\feq$.
    
    Because the approximate demand $\demand{\eps}{p}$ does not depend on the identity of $\feq^*$, it can be computed without any value queries -- but perhaps using a super-polynomial number of computational steps.
    Thus, to solve the demand query we can pick the best set in the collection $\demand{\eps}{p}$ using $|\demand{\eps}{p}| \le |\demSigP|$ value queries. As $\feq$ has $\sigma$-sparse demand, the number of queries is polynomial.
    Thus, if $\A$ is able to find the optimal contract with poly-many demand queries, it can also do it with poly-many value queries, due to the above argument. This will contradict \Cref{prop:valueQueryBound}, and the claim follows.
\end{proof}

\subsection{An Equal Revenue Instance with Sparse Demand}\label{sec:eqRevWithSparse}

In what follows we show that the equal revenue instance of \Cref{sec:eqRev} has a reward function that admits sparse demand, which, together with \Cref{thm:DemandHardnessForPerturbed}, implies \Cref{thm:expDemandQueries} --- the main result of this section. Recall the equal revenue instance, $\instance$, presented in \Cref{sec:eqRev}. 
For the purpose of proving that $f$ admits sparse demand we reiterate two important properties of this instance:
\begin{enumerate}
    \setlength\itemsep{-0.25em}
    \item The cost of each action $i \in [n]$ is $c_i = 2^{i-1}$. The cost of a set $S \subseteq [n]$ is $ c(S)= \sum_{i \in S} 2^{i-1}$.
    \item Let $S_t \subseteq [n]$ be the set whose cost is $t$,\footnote{Note that there exists only one such set, the set whose characteristic vector encodes the integer $t$.} and let $\alpha_t$ be the critical value which incentivizes the set $S_t$. 
    Since $\instance$ is equal-revenue, all sets $S_1,\dots,S_{2^n-1}$ have distinct critical values $0<\alpha_1 < \dots < \alpha_{2^n-1}<1$, each yields a utility of 1 to the principal.
\end{enumerate}

\begin{proposition}[Establish Sparse Demand]\label{prop:BoundApproxDemand}
    Let $f$ be as defined in \Cref{sec:eqRev} and let $0 < \sigma < \min_{1 \le l < h} \frac{1}{2}\left(\frac{1}{\alpha_l} - \frac{1}{\alpha_h}\right)$, then
    $f$ has $\sigma$-sparse demand. 
\end{proposition}

In the remainder of the section we prove \Cref{prop:BoundApproxDemand} by showing that for any price vector $p \in \reals_{+}^n$, the size of the approximate demand set for $f$, $\demSigP$, is $O(n^2)$.
We use ambiguity intervals (\Cref{def:amgInt}) to prove this upper bound. 
The key idea is this: 
if action $i$ appears in some approximately optimal set $S$ with respect to $p$, then $i$ must also appear in any approximately optimal set whose cost is significantly smaller than $c(S)$.
Hence, if $S_t$ is the most expensive approximately optimal set containing $i$, the ambiguity interval $[c(S_t) - 2^i, c(S_t)]$ captures the collection of sets in which action $i$ may or may not appear. Indeed, we show that any approximately optimal set with cost below $c(S_i) - 2^i$ must include $i$ (\Cref{lem:RLintervals}), while any set with cost above $c(S_t)$ excludes $i$, by definition. 
This structure allows us to map each approximately optimal set $S$ to the minimal action $i$, such that $c(S)$ lies in $i$’s ambiguity interval. By proving that each ambiguity interval can host at most $O(n)$ sets (\Cref{lem:minAmb}), we obtain the desired $O(n^2)$ bound.

\begin{definition}[Ambiguity Interval]\label{def:amgInt}
    Fix an action $i\in [n]$, a price vector $p \in \reals_{+}^n$ and $\sigma > 0$.
    The ambiguity interval with respect to $i$ and $p$ is $[l_{i,p},r_{i,p}] \subseteq [0,2^n-1]\cap \mathbb{N}$, where 
    \begin{align*}
        r_{i,p} &= \max \{t\in [2^n-1] \mid S_t \in \demSigP, \; i \in S_t\} \quad\text{ and }\quad l_{i,p} = \max\{r_{i,p} - 2^i, 0\}.
    \end{align*}
    If action $i$ does not belong to any set in $\demSigP$, we define $r_{i,p} = l_{i,p} = 0$. For brevity, we omit the subscript $p$ when clear from the context.
\end{definition}

\newcommand{\minAmb}[1]{m(#1)}
\begin{definition}[Minimal Ambiguous Action]\label{def:minAmgAction}
    For any price vector $p \in \reals_{+}^n$ and $\sigma >0$, the minimal ambiguous action of an approximately optimal set $S_t \in \demSigP$ is $\minAmb{S_t} = \min\{i \in [n] \mid t \in [l_{i,p}, r_{i,p}] \}$, with $\minAmb{S_t} = n+1$ if $t$ does not belong to any such interval.
\end{definition}

\begin{figure}
    \centering
    \begin{tikzpicture}
        \coordinate (li) at ($(4cm,0)$) {};
        \draw ($(li)+(0,5pt)$) -- ($(li)-(0,5pt)$);
        \node at ($(li)+(0,3ex)$) {$l_i$};

        \coordinate (ri) at ($(8cm,0)$) {};
        \draw ($(ri)+(0,5pt)$) -- ($(ri)-(0,5pt)$);
        \node at ($(ri)+(0,3ex)$) {$r_i$};
        
        \node at ($(6cm,0)+(0,4ex)$) {ambiguous};
        \node at ($(2cm,0)+(0,4ex)$) {\Large{$\substack{S_t \in \demSigP \\ \text{implies }i \in S_t}$}};
        \node at ($(10cm,0)+(0,4ex)$) {\Large{$\substack{S_t \in \demSigP \\ \text{implies }i \notin S_t}$}};

        \draw[thick,arrows=->] ($(0,0)$) -- ($(12cm,0)$);
    \end{tikzpicture}

    \vspace{.5cm}
    
    \begin{tikzpicture}
        \coordinate (l3) at ($(1cm,0)$) {};
        \draw ($(l3)+(0,5pt)$) -- ($(l3)-(0,5pt)$);
        \node at ($(l3)+(0,3ex)$) {$l_3$};
        
        \coordinate (l1) at ($(2.5cm,0)$) {};
        \draw ($(l1)+(0,5pt)$) -- ($(l1)-(0,5pt)$);
        \node at ($(l1)+(0,3ex)$) {$l_1$};

        \coordinate (r1) at ($(3.5cm,0)$) {};
        \draw ($(r1)+(0,5pt)$) -- ($(r1)-(0,5pt)$);
        \node at ($(r1)+(0,3ex)$) {$r_1$};

        \coordinate (r3) at ($(5cm,0)$) {};
        \draw ($(r3)+(0,5pt)$) -- ($(r3)-(0,5pt)$);
        \node at ($(r3)+(0,3ex)$) {$r_3$};

        \coordinate (l2) at ($(8cm,0)$) {};
        \draw ($(l2)+(0,5pt)$) -- ($(l2)-(0,5pt)$);
        \node at ($(l2)+(0,3ex)$) {$l_2$};

        \coordinate (r2) at ($(9.5cm,0)$) {};
        \draw ($(r2)+(0,5pt)$) -- ($(r2)-(0,5pt)$);
        \node at ($(r2)+(0,3ex)$) {$r_2$};

        \draw[thick,arrows=->] ($(0,0)$) -- ($(12cm,0)$);
        \draw[ultra thick,blue] (l3) -- (r3);
        \draw[ultra thick,magenta] (l1) -- (r1);
        \draw[ultra thick,darkpastelgreen] (l2) -- (r2);

        \draw[thick, arrows=->] ($(4.25,0) - (0,3ex)$) -- ($(4.25,0)$);
        \node at ($(4.25,0) - (0,4.5ex)$) {$S_{t'}$};
    \end{tikzpicture}
    
    \caption{Illustration of ambiguity intervals (top), and minimal ambiguous actions for a fixed price vector $p$ (bottom).
    The minimal ambiguous action of a set $S_t$ for $t \in [l_3, l_1) \cup (r_1,r_3]$, is $i(S_t) = 3$. By \Cref{lem:RLintervals} for $S_{t'}$ such that $t'\in (r_1,r_3]$ as in the figure, $S_{t'}$ does not contain action 1, must contain action 2, and may or may not contain action 3.}
    \label{fig:intervals}
\end{figure}

In the following lemma we establish a key property of the ambiguity intervals (as defined in \Cref{def:amgInt}), which hold for any price vector and any $\sigma$ satisfying the conditions in the lemma: For any $S_t \in \demSigP$ such that $t < l_{i,p}$ it must hold that $i \in S_t$, where as $t > r_{i,p}$ implies $i \notin S_t$. For $t \in [l_{i,p}, r_{i,p}]$ there is ambiguity as to whether $i$ is in $S_t$ or not (hence the name). See \Cref{fig:intervals} for an illustration.
 
\begin{lemma}\label{lem:RLintervals}
    For any price vector $p \in \reals_{+}^n$, any $0 < \sigma < \min_{1 \le l < h}\frac{1}{2}\left(\frac{1}{\alpha_l} - \frac{1}{\alpha_h}\right)$, any action $i \in [n]$ and any $S_t \in \demSigP$, if $t > r_{i,p}$, then $i \notin S_t$. If $t < l_{i,p}$, then $i \in S_t$.
\end{lemma}
\begin{proof}
    The first claim is immediate from the definition of $r_{i,p}$.
    To prove the second claim, fix some $\sigma > 0$ as above and $p \in \reals_{+}^n$. We show that any two sets $S_t,S_{t'} \subseteq [n]$ such that $i \in S_{t'}\setminus S_t$ and $t < t' - 2^i$, cannot be simultaneously be approximately optimal (i.e., $\{S_t,S_{t'}\} \not\subseteq \demSigP$).
    Since by definition $S_{r_{i,p}} \in \demSigP$, replacing $t'$ with $r_{i,p}$, proves the claim for any $t < r_{i,p} - 2^i = l_{i,p}$.

    For any set $S$, let $\alpha_S$ be the critical value that incentivizes $S$.
    Fix $S_t, S_{t'}$ and $i$ as above. 
    As $S_{t'} \setminus \{i\}$ is the agent's best response for $\alpha_{S_{t'}\setminus \{i\}}$, we have
    $$
    \alpha_{S_{t'}\setminus\{i\}}f(i \mid S_{t'}\setminus\{i\}) \le c_i = 2^{i-1}
    \quad
    \Longrightarrow
    \quad
    f(i \mid S_{t'}\setminus\{i\}) \le \frac{2^{i-1}}{\alpha_{S_{t'}\setminus\{i\}}}.
    $$
    Thus, if $p_i > \frac{2^{i-1}}{\alpha_{S_{t'}\setminus\{i\}}}  + \sigma$ we get that $S_{t'} \notin \demSigP$ and the claim holds.
    Similarly, $\alpha_{S_t \cup \{i\}} f(i \mid S_t) \ge c_i = 2^{i-1}$, and if $p_i < \frac{2^{i-1}}{\alpha_{S_t \cup \{i\}}} - \sigma$, then $S_t \notin \demSigP$ and the claim holds.
    Thus, it suffices to show that,
    $$
    \frac{2^{i-1}}{\alpha_{S_t \cup \{i\}}} - \sigma \geq \frac{2^{i-1}}{\alpha_{S_{t'}\setminus\{i\}}}  + \sigma.
    $$
    First, observe that by assumption $t' - 2^{i-1} > t + 2^{i-1}$, so the index of the set $S_{t'} \setminus \{i\}$ is greater than the index of the set $S_t \cup \{i\}$ and by \Cref{lem:alphaMon}, $\alpha_{S_{t'}\setminus \{i\}} > \alpha_{S_t\cup \{i\}}$. 
    By our choice of $\sigma$,
    \begin{align*}
        \sigma 
        &<
        \min_{1 \le l < h}
        \frac{1}{2}\left(\frac{1}{\alpha_l} - \frac{1}{\alpha_h}\right) 
        \le 
        \frac{1}{2}\left(\frac{2^{i-1}}{\alpha_{S_t \cup \{i\}}} - \frac{2^{i-1}}{\alpha_{S_{t'}\setminus\{i\}}}\right),
    \end{align*}
    and the claim follows.
\end{proof}

Next, for any $i^* \in [n+1]$,  we establish an upper bound on the number of sets $S_t \in \demSigP$ for which $i^*$ is the minimal ambiguous action (see \Cref{def:minAmgAction}), i.e., $\minAmb{S_t} = i^*$.

\begin{lemma}\label{lem:minAmb}
Let $\sigma > 0$ satisfy the conditions of \Cref{lem:RLintervals}. For any $i^* \in \{1,...,n,n+1\}$, there can be at most $4i^*$ sets $S_t \in \demSigP$ with $\minAmb{S_t}=i^*$.
\end{lemma}

\begin{proof}
    Fix price vector $p \in \reals^n_+$ and $\sigma >0$ as above.
    First, we prove the statement for the case $i^*=n+1$.
    Any set $S_t \in \demSigP$ for which $\minAmb{S_t}=n+1$ belongs to an interval in which there is no ambiguity about the actions it contains. There are at most $n+1$ such intervals (as there are at most $n$ ambiguity intervals), and, as there's no ambiguity, at most one set per interval.

    For $i^* \in \{1,\dots,n\}$, let $t < t'$ be two indexes such that $S_t, S_{t'} \in \demSigP$ and $\minAmb{S_t} = \minAmb{S_{t'}} = i^*$.
    We give a counting argument for why there cannot be more than $4i^*$ sets of the above form. 
    First, we show that as we increase the index from $t$ to $t'$, the set $S_{t'}$ does not introduce any new actions $i$ such that $i<i^*$. 
    Then, we turn to consider the characteristic vectors of $S_t$ and show that for any suffix (the $i^* + 1$ least significant bits), there may only exists a single prefix (the $n-i^*-1$ most significant bits), such that $t \in [l_{i^*}, r_{i^*}]$.    

    Fix $i < i^*$. Since $\minAmb{S_t}=i^*$ it must be that $t \notin [l_i,r_i]$, and if $t < r_i$, then $t < l_i$. 
    Together with \Cref{lem:RLintervals} we have that $i \in S_t$ if and only if $t < r_i$, and similarly for $t'$.
    Thus, $S_t$ and $S_{t'}$ disagree on action $i$ if only if $l_{i^*} \le t < r_i < t' \le r_{i^*}$, and in this case $i \in S_t$ and $i \notin S_{t'}$ (see illustration in \Cref{fig:smallActionsBound}). 
    Let $S_{t_1},\dots,S_{t_k} \in \demSigP$ be the sets with $\minAmb{S_{t_j}}=i^*$, ordered by index, the above implies that 
    $(S_{t_1} \cap [i^*-1]) \supseteq \dots \supseteq (S_{t_{k}} \cap [i^*-1])$. 
    The number of distinct sets in this chain is at most $|S_{t_1} \cap [i^*-1]| + 1\le i^*$. We conclude that there are at most $4i^*$ different sets of the form $S_t \cap [i^*+1]$.
    
    To complete the argument, fix a set $X \subseteq [i^*+1]$. We show that there can be one set $S_t$ such that $\minAmb{S_t}=i^*$, and $X =S_t \cap [i^*+1]$. To see that, aiming for contradiction, assume there are two different sets $S_t$ and $S_{t'}$ which satisfy the above. Because $S_t$ and $S_{t'}$ only disagree on actions greater than $i^*+1$, the difference $|t-t'|$ is at least $2^{i^*+2-1} = 2^{i^*+1}$, so they cannot both satisfy $t,t' \in [l_{i^*},r_{i^*}]$, as $r_{i^*} - l_{i^*} = 2^{i^*}$. This contradicts the fact that $\minAmb{S_t}=\minAmb{S_{t'}}=i^*$, and the claim immediately follows.
\end{proof}

\begin{figure}
    \centering
    
    \begin{tikzpicture}
        \coordinate (li*) at ($(1cm,0)$) {};
        \draw ($(li*)+(0,5pt)$) -- ($(li*)-(0,5pt)$);
        \node at ($(li*)+(0,3ex)$) {$l_{i^*}$};
        
        \coordinate (li) at ($(3cm,0)$) {};
        \draw ($(li)+(0,5pt)$) -- ($(li)-(0,5pt)$);
        \node at ($(li)+(0,3ex)$) {$l_i$};

        \coordinate (ri) at ($(5cm,0)$) {};
        \draw ($(ri)+(0,5pt)$) -- ($(ri)-(0,5pt)$);
        \node at ($(ri)+(0,3ex)$) {$r_i$};

        \coordinate (ri*) at ($(9cm,0)$) {};
        \draw ($(ri*)+(0,5pt)$) -- ($(ri*)-(0,5pt)$);
        \node at ($(ri*)+(0,3ex)$) {$r_{i^*}$};
      
        \draw[thick,arrows=->] ($(0,0)$) -- ($(12cm,0)$);
        \draw[ultra thick,blue] (li*) -- (ri*);
        \draw[ultra thick,magenta] (li) -- (ri);

        \draw[thick, arrows=->] ($(2,0) - (0,3ex)$) -- ($(2,0)$);
        \node at ($(2,0) - (0,6ex)$) {\Large{$\substack{S_t \\ i \in S_t}$}};
        \draw[thick, arrows=->] ($(7,0) - (0,3ex)$) -- ($(7,0)$);
        \node at ($(7,0) - (0,6ex)$) {\Large{$\substack{S_{t'} \\ i \notin S_{t'}}$}};
    \end{tikzpicture}
    
    \caption{Illustration of the proof of \Cref{lem:minAmb}. Let $S_t$ and $S_{t'}$ have the same minimal ambiguous action $i^*$ and $t'>t$. If both $t$ and $t'$ are on the same side of the ambiguity interval of $i < i^*$, then $S_t$ and $S_{t'}$ agree on action $i$. Otherwise, as in the figure, it must be that $i \in S_t$ and $i \notin S_t$. }
    \label{fig:smallActionsBound}
\end{figure}

We are now ready to prove \Cref{prop:BoundApproxDemand}.

\begin{proof}[Proof of \Cref{prop:BoundApproxDemand}]
    \Cref{lem:minAmb} immediately implies the claim, as 
    \[
    |\demSigP| = \sum_{i^* \in \{1,\dots,n,n+1\}} |\{S \in \demSigP \mid \minAmb{S} = i^*\}| \le \frac{4(n+1)(n+2)}{2} = O(n^2) \qedhere
    \]
\end{proof}

\section{Communication Complexity Hardness for Submodular $f$ and $c$}\label{sec:CC_submod_fc}

In this section, we exhibit another use of combinatorial equal-revenue instances with sparse demand.
Namely, we demonstrate how they can be utilized to establish communication complexity lower bounds, in a contracting model where the $f$ is held by one party, the $c$ is held by another party, and both parties have access to a best-response oracle.
Specifically, we show how to modify an equal revenue construction with submodular $f$ and additive $c$, to establish an exponential lower bound on the communication required to compute the optimal contract when $f$ and $c$ are submodular. We extend these results to other classes in \Cref{sec:cc_lower_bound_supmod_f_c} and \Cref{sec:cc_lower_bound_submod_f_supmod_c}.

\subsection{Model and Preliminaries}\label{sec:CC_model}
\paragraph{The Optimal Contract Problem for Two Parties.}

We consider the problem of computing the optimal contract when the input is divided between two parties.
Recall that an optimal contract problem, as per \Cref{def:instance}, is a 3-tuple $\instance$, where for each subset of actions $f:2^{[n]} \to \reals_{+}$ specifies its reward, and $c:2^{[n]} \to \reals_{+}$ specifies its cost.
In the model presented below, two parties hold the two combinatorial functions which define the instance. Namely, Alice get as input a description $f$ and Bob gets a description of $c$. Additionally, each of the parties has access to a best-response oracle (see \Cref{sec:model}), which given a contract $\alpha$ returns the set that maximizes the agent's utility, $\alpha f(S) - c(S)$.
We are interested in measuring the minimal number of bits Alice and Bob need to communicate in order to compute the optimal contract , while only allowing poly-many oracles calls.
Observe that whenever $c$ or $f$ can be represented with $poly(n)$ bits (e.g., an additive function), 
the problem reduces to the query-complexity model discussed in previous sections, as one party can send the function in its entirety to the other.
In this section we will focus on the case where $f$ and $c$ submodular and are thus not necessarily succinctly representable.

\paragraph{{Vanilla} Communication Complexity and Set Disjointness.}
We describe the two-party communication complexity model, introduced by \cite{yao1979some} and the set disjointness problem.
Let $X$ and $Y$ be two finite sets. Alice and Bob wish to compute a function $\varphi: X \times Y \to Z$, where Alice receives some $x \in X$ as input, and Bob receives $y \in Y$.
For this purpose, Alice and Bob may communicate bits back and forth according to a protocol, which maps the input of a party and the history of messages to a new message. 
Moreover, Alice and Bob share random bits, which they can use to execute a randomized protocol.
A protocol \textit{computes} $\varphi$ with error $\delta$ if for every pair $(x,y)$, the probability that the output is $\varphi(x,y)$ is at least $1-\delta$.
The cost of a protocol is the number of bits exchanged on the worst-case input $(x,y)$, and the communication complexity of $\varphi$ is the cost of the cheapest protocol that computes $\varphi$ with error\footnote{This is without loss of generality, as for any constants $\eps,\eps' \in (0,1/2)$, the error can be reduced from $\eps$ to $\eps'$ by repeating the protocol a constant number of times and output the majority.} $1/3$.

One of the most studied problems in communication complexity theory is the set disjointness problem (often denoted $\DISJ{k}$), see survey \cite{sherstov2014communication}.
In this problem each party gets a subset of $\{1,\dots,k\}$, and the parties need to determine if they have no common elements. Namely,
\begin{definition}[The $\DISJ{k}$ Problem]\label{def:disjointness}
    The problem of set disjointness with parameter $k$ is computing for every $A \subseteq [k]$ and $B \subseteq [k]$ the function
    $$
    \DISJ{k}(A,B) = \begin{cases}
        1 & \text{if } A \cap B = \emptyset, \\
        0 & \text{otherwise.} 
    \end{cases}
    $$
\end{definition}
\begin{fact}[\cite{kalyanasundaram1992probabilistic, Razborov92}]
    Any protocol that solves $\DISJ{k}$ requires communicating $\Omega(k)$ bits.
\end{fact}

\paragraph{Adding Best-Response Oracle.}

In the context of an optimal contract problem $\instance$, we augment the communication complexity model above by allowing each of the parties (Alice who gets $f$ and Bob who gets $c$) to make poly-many queries to a best response oracle.
To establish our results, we generalize the concept of Sparse Demand (see \Cref{def:sparseDemand}), to settings when both $f$ and $c$ are combinatorial.

\begin{definition}\label{def:sparseBestREsponse}[$\sigma$-Sparse Best-Response]
For $\sigma >0$, instance $\instanceI$ has $\sigma$-sparse best-response if for and every contract $\alpha \in [0,1]$, $|\BrSigAlpha|$ is polynomial in $n$, where $\BrSigAlpha$ is the collection of sets of actions that maximize the agent's utility for contract $\alpha$, up to an additive factor of $\sigma$:
$$
\BrSigAlpha = \left\{ S\subseteq [n] ~\middle|~ \max_{T\subseteq [n]} (\alpha f(T) - c(T)) - \left(\alpha f(S) - c(S)\right) \le \sigma \right\}.
$$
We refer to $\BrSigAlpha$ as the $\sigma$-approximate best-response to $\alpha$.
When the instance $I$ is clear from context, we sometimes omit it and use $\demand{\sigma}{\alpha}$.
Observe that for any $0 < \sigma' \le \sigma$, $\sigma$-sparse best-response implies $\sigma'$-sparse best-response, as $\BRI{\sigma'}{\alpha} \subseteq \BrSigAlpha$.
\end{definition}
When the specific value of the sparseness parameter $\sigma>0$ is not required, we omit it and say that the instance $\instanceI$ has \textit{sparse best-response}.

\subsection{CC Lower Bound for Submodular $f$ and $c$}\label{subsec:cc_lower_bound_submod_fc}

We now prove the main result of this section, an exponential lower bound on the communication required to compute the optimal contract when both $f$ and $c$ are submodular, even if poly-many best-response queries are allowed.
\begin{theorem}[CC lower bound, submodular $f$ and $c$]\label{thm:cc_submod_f_c}
    When $f$ and $c$ are submodular, any protocol which computes the optimal contract and makes $poly(n)$ best-response queries, requires $\Omega(2^n/\sqrt{n})$ bits of communication.
\end{theorem}

The sketch of the proof is as follows:
We start from an equal-revenue instance with sparse demand, for example the one described in \Cref{sec:eqRev}, and add small perturbations to the additive cost function $c$, such that the new function, $\tc$, is strictly submodular.
The perturbations are small enough so that the new instance $\tilde{I} = \inst{n}{f}{\tc}$ possesses the following properties:
\begin{enumerate}
    \item There are $2^n-1$ critical values, where 
    for any $t \in [2^n-1]$, the set $S_t$ that encodes the integer $t$ is incentivized by the contract $\talpha_t$.
    \item The utility the principal gets from each of these contracts is approximately $1$, i.e., this is an ``almost'' equal revenue instance.
    \item The new instance $\tI$ has sparse best-response.
\end{enumerate}

Then, we introduce a new action, numbered $n+1$. Its marginal rewards (respectively, costs) are defined using a binary vector $x_f$ (resp., $x_c$) of size $\binom{n}{n/2}$, indicating ``special'' sets of size $n/2$ among actions $\{1,\dots,n\}$. This defines a family of instances, all based on $\InstTildeI$ and differ only in the values of $x_f$ and $x_c$.
We show that if there exists a special set with respect to both costs and rewards, i.e. $x_f \cap x_c \ne \emptyset$, then one of these sets, $S \subseteq [n]$, is such that $S \cup \{n+1\}$ is optimal. Conversely, if $x_f \cap x_c = \emptyset$, a set of the form $S \cup \{n+1\}$ cannot be incentivized by the principal.

We augment Alice and Bob with knowledge of the ``base instance'' $\tI$, and show, using the sparse best-response property, that a best-response oracle can be simulated with polynomial communication.
Thus, any efficient protocol executed by Alice and Bob and uses (poly-many) oracle calls can be executed efficiently by augmented Alice and Bob, with no oracle calls.
Nonetheless, augmented Alice and Bob still need to know whether $x_f \cap x_c \ne \emptyset$ in order to find the optimal contract, i.e., solve $\DISJ{\binom{n}{n/2}}$. This requires exponential communication, implying \Cref{thm:cc_submod_f_c}.

Our construction relies on an equal-revenue instance $\instanceI$ with sparse demand, as the one presented in \Cref{sec:eqRev}. The observation below follows from \Cref{obs:critValStructure} and \Cref{lem:f_strictly_submodolar}:
\begin{observation}\label{obs:equalrev_subModf_prop}
    Any equal revenue instance $\instanceI$ with submodular $f$ an additive $c$ must satisfy:
    (i) $c$ and $f$ are strictly monotone, and (ii) $f$ is strictly submodular.
\end{observation}

Denote the set of incentivizable sets with $\IS = \{S_1,\dots,S_{2^n-1}\}$. Recall that $c(S_{t-1})<c(S_t)$ and $f(S_{t-1})<f(S_t)$, for any $t$.
We denote $S_0 = \emptyset$ and assume $c$ and $f$ are normalized such that $c(\emptyset)=f(\emptyset)=0$.
The perturbations introduced to the costs are defined with respect to 
$\phi_f = \min_{t} \{f(S_{t+1})-f(S_{t})\} > 0$ and
$\phi_c = \min_{t} \{c(S_{t+1})-c(S_{t})\} > 0$, the discrete derivatives of $f$ and $c$ with respect to $t$. 
By the above, these are well defined for an equal-revenue instance.

\subsubsection{Making $c$ strictly submodular}
The perturbed cost function is defined as follows, $\tc(S)=c(S) - \delta \cdot |S|^2$, where $\delta$ satisfies
\begin{align}\label{eq:delta_for_tc}
0 < 
\delta < \min \Bigg\{ 
& \frac{\phi_c}{n^2}, \alpha_1 \cdot \frac{f_1}{n^2}, \frac{1}{n^2}(1-\alpha_{2^n-1})(f(S_{2^n-1})-f(S_{2^n-2})), \\
& \min_{t \in \{1,\dots,2^n-2\}} (\alpha_{t+1}-\alpha_t)\cdot \frac{1}{n^2} \cdot \left(\frac{(f(S_t)-f(S_{t-1}))(f(S_{t+1})-f(S_t))}{(f(S_{t+1})-f(S_{t-1}))}\right) \nonumber \bigg\}
\end{align}

The next two propositions establish the basic properties of the cost function $\tc$. Namely, that for a proper choice of $\delta$, $\tc$ is non-negative, monotone, submodular, and additionally, the instance $\InstTildeI$ has $2^n-1$ critical values.
We delegate their proofs to \Cref{sec:Omitted_CC_submod_fc}.

\begin{proposition}\label{prop:ctilde_basix_props}
For $0 < \delta < \frac{\phi_c}{n^2}$, $\tc$ is non-negative, strictly monotone and strictly submodular.
\end{proposition}

\begin{proposition}\label{prop:exp_many_cvs_maintained}
Let $\delta > 0$ be such that \Cref{eq:delta_for_tc} holds, then, $\inst{n}{f}{\tc}$ has $2^n-1$ critical values.
\end{proposition}

\subsubsection{Sparse Best-Response}
\begin{proposition}\label{prop:sparse_demand_submod_f_c}
    If the instance $\instanceI$ has $\sigma$-sparse demand for some $\sigma > 0$, then for any $0 < \delta < \frac{\sigma}{2n^2}$, the perturbed instance $\InstTildeI$ has $(\sigma/2)$-sparse best-response.
\end{proposition}
\begin{proof}
\newcommand{\sa}{S_\alpha}
\newcommand{\tsa}{\tilde{S}_\alpha}
Fix a contract $\alpha$ and some $0 < \delta < \frac{\sigma}{2n^2}$. We show that 
$\BR{\sigma/2}{\alpha}{\tI} \subseteq \BRI{\sigma/2+n^2\delta}{\alpha} \subseteq \BRI{\sigma}{\alpha}$, which implies the claim.
Let $\sa$ be the set of actions which maximizes $\alpha f(S) - c(S)$ and let $\tsa$ be the set of which maximizes $\alpha f(S) - \tc(S)$.
Observe that for any $S \in \BR{\sigma/2}{\alpha}{\tI}$
it holds that,
\begin{align*}
    \alpha f(S) - \tc(S) 
    &\ge \alpha f(\tsa) - \tc(\tsa) - \sigma/2 \\
    &\ge \alpha f(\sa) - \tc(\sa) - \sigma/2 \\
    &\ge \alpha f(\sa) - c(\sa) - \sigma/2,
\end{align*}
where the last inequality follows from the fact that $\tc(\sa) = c(\sa) - \delta|\sa|^2$.
On the other hand, 
$$\alpha f(S) - \tc(S) = \alpha f(S) - c(S) + \delta|S|^2 \le \alpha f(S) - c(S) + \delta n^2,$$ 
we conclude that 
\[
\alpha f(S) - c(S) \ge \alpha f(\sa) - c(\sa) - (\sigma/2 + \delta n^2),
\]
which implies that $S \in \BRI{\sigma/2+n^2\delta}{\alpha}$.
By our choice of $\delta$, it holds that $\sigma/2+n^2\delta \le \sigma$, and thus $\BRI{\sigma/2+n^2\delta}{\alpha} \subseteq \BRI{\sigma}{\alpha}$, which concludes the proof.
\end{proof}

\subsubsection{Adding the ($n+1$)th action}
In order to introduce the marginal rewards and costs of the ($n+1$)th action, we consider a small quantity $z$, with respect to which marginals will be defined.
We pick $z$ so that the new instance with $n+1$ actions has submodular costs and rewards, has sparse best-response, and that the optimal contract problem can be reduced to $\DISJ{\binom{n}{n/2}}$.
We define $z = \min \{\phi_{\tc},\psi_{\tc},\phi_f,\psi_f, \zeta, \sigma\}$, as the minimum of these positive quantities:
\begin{itemize}
    \setlength\itemsep{-0.25em}
    \item $\phi_{\tc} = \min_{\tc} \tc(S_{t+1})-\tc(S_{t}) > 0$, note that for any $S \subsetneq T \subseteq [n]$, $\tc(T)-\tc(S) \ge \phi_{\tc}$),
    \item $\phi_f = \min_{t} f(S_{t+1})-f(S_{t}) > 0$, note that for any $S \subsetneq T \subseteq [n]$, $f(T)-f(S) \ge \phi_{f}$),
    \item $\psi_{\tc} = \min_{S\subseteq T, i \notin T} \tc(i \mid S)-\tc(i \mid T) > 0$,
    \item $\psi_f = \min_{S\subseteq T, i \notin T} f(i \mid S)-f(i \mid T) > 0$,
    \item $\zeta = \frac{\delta \cdot (1-\alpha_{2^n-1}) \phi_f}{16\cdot n^2 \cdot f_{2^n-1}}> 0$.
    \item $\sigma>0$, the sparseness parameter for which $\InstTildeI$ has sparse demand, see \Cref{prop:sparse_demand_submod_f_c}.
\end{itemize}

We first show that the perturbed instance (with only $n$ actions), $\InstTildeI$, is ``almost'' equal revenue, that is, we bound the principal's utility from any critical value $\talpha_1, \dots \talpha_{2^n-1}$.
\begin{proposition}\label{prop:eps_for_revenue_bound}
For any $0 < z \le \zeta = \frac{\delta \cdot (1-\alpha_{2^n-1}) \phi_f}{16\cdot n^2 \cdot f_{2^n-1}}$, for any $t \in [2^n-1]$, 
$$
(1-\talpha_t)f(S_t) \in \left[1 - \frac{z(1-\alpha_{2^n-1})}{16}, 1 + \frac{z(1-\alpha_{2^n-1})}{16} \right]
$$
\end{proposition}
\begin{proof}
    Fix $t$. Denote $f_t = f(S_t)$. By the definition of $\tc$ we have,
    \begin{align*}
        (1-\talpha_t)f(S_t)
        &\in
        \left[(1-\alpha_t)f_t - \frac{\delta n^2}{f_t-f_{t-1}}f_t, (1-\alpha_t)f_t + \frac{\delta n^2}{f_t-f_{t-1}}f_t \right] \\
        &=
        \left[1 - \frac{\delta n^2}{f_t-f_{t-1}}f_t, 1 + \frac{\delta n^2}{f_t-f_{t-1}}f_t \right] \\
        &\subseteq
        \left[1 - \frac{\delta n^2}{\phi_f}f_{2^n-1}, 1 + \frac{\delta n^2}{\phi_f}f_{2^n-1} \right] \\
        &\subseteq
        \left[1 - \frac{z(1-\alpha_{2^n-1})}{16}, 1 + \frac{z(1-\alpha_{2^n-1})}{16} \right],
    \end{align*}
    where the last transition follows plugging in the definition of $\zeta$.
\end{proof}
We are ready to define the new instance $\InstHatI$, where 
\begin{align*}
\hf(S)&=f(S\setminus\{n+1\})+\indicator{n+1 \in S}\cdot\hf(n+1 \mid S), \text{ and } \\
\hc(S)&=\tc(S\setminus\{n+1\})+\indicator{n+1 \in S}\cdot\hc(n+1 \mid S),
\end{align*}

where the marginal rewards and costs of action $n+1$ are defined with respect to two binary vectors $x_f,x_c \in \{0,1\}^{\binom{n}{n/2}}$, as follows:
$$
\hf(n+1 \mid S_t) = 
\begin{cases}
    z/4 & |S_t| < \frac{n}{2} \\
    z/4 & |S_t| = \frac{n}{2} \land S_t \in x_f \\
    0 & |S_t| = \frac{n}{2} \land S_t \notin x_f \\
    0 & |S_t| > \frac{n}{2}
\end{cases}
\qquad 
\hc(n+1 \mid S_t) = 
\begin{cases}
    z/2 & |S_t| < \frac{n}{2} \\
    z/2 & |S_t| = \frac{n}{2} \land S_t \notin x_c \\
    (\talpha_t\cdot z)/4 & |S_t| = \frac{n}{2} \land S_t \in x_c \\
    (\talpha_1\cdot z)/8 & |S_t| > \frac{n}{2}
\end{cases}.
$$

\begin{proposition}\label{prop:monotone_submod_hf_hc}
    Both $\hf$ and $\hc$ are monotone and submodular.
\end{proposition}
The proof of \Cref{prop:monotone_submod_hf_hc} is delegated to \Cref{sec:Omitted_CC_submod_fc}.

\begin{observation}\label{obs:n+1Incentivizable}   
    If a set of the form $S_t \cup \{n+1\}$, for some $S_t \subseteq [n]$, can be incentivized by the principal, then $|S_t| = n/2$ and $S_t \in x_c \cap x_f$.
\end{observation}
\begin{proof}
Given some set $S_t$, we consider $\hf(n+1 \mid S_t) - \hc(n+1 \mid S_t)$, which upper bounds the marginal contribution of $n+1$ to the agent's utility for any contract $\alpha \in [0,1]$. 
We show that unless $S_t \in x_c \cap x_f$, this quantity is negative, which implies the claim. 

    If $|S_t|<n/2$, the marginal contribution of $n+1$ is $z/4 - z/2 < 0$.
    
    If $|S_t| > \frac{n}{2}$, the marginal contribution of $n+1$ is $0 - \frac{\talpha_1\cdot z}{8} < 0$.
    
    If $|S_t|=n/2$ and $S_t \notin x_f$, the marginal contribution of $n+1$ is at most $0 - \frac{\talpha_t\cdot z}{4} < 0$.
    
    If $|S_t|=n/2$ and $S_t \notin x_c$, the marginal contribution of $n+1$ is at most $z/4 - z/2 < 0$.

    This concludes the proof.
\end{proof}

\begin{proposition}\label{prop:n+1_is_optimal}
    Let $S_t$ be the set with the minimal index such that $S_t \in x_c \cap x_f$. 
    The contract $\halpha_t = \frac{\hc(S_t)-\hc(S_{t-1})}{\hf(S_t)-\hf(S_{t-1})}$ incentivizes the set $S_t \cup \{n+1\}$ and
    the principal's revenue from $\halpha_t$, $u_p(\halpha_t,S_t\cup\{n+1\})$, exceeds the revenue from any incentivizable set of action of the form $S_{t'} \subseteq [n]$.
\end{proposition}
\begin{proof}
To see why $\halpha_t$ incentivizes $S_t \cup \{n+1\}$, note that the three linear function in $\alpha$, $u_a(\alpha, S_{t-1})$, $u_a(\alpha, S_t)$ and $u_a(\alpha, S_t \cup \{n+1\})$, intersect at the same point $\halpha_t$.
By definition, the intersection between $u_a(\alpha, S_t)$ and $u_a(\alpha, S_{t-1})$ is
$$
\halpha_t = 
\frac{\hc(S_t) - \hc(S_{t-1})}{\hf(S_t) - \hf(S_{t-1})}
= \frac{\tc(S_t) - \tc(S_{t-1})}{f(S_t) - f(S_{t-1})}
= \talpha_t
$$
Similarly for $u_a(\halpha, S_t \cup \{n+1\})$ and $u_a(\halpha, S_t)$,
$$
\halpha_t = \frac{\hc(S_t \cup \{n+1\}) - \hc(S_t)}{\hf(S_t  \cup \{n+1\}) - \hf(S_t)} = \frac{\talpha_t \cdot z/4}{z/4} = \talpha_t.
$$
As the agent's utility is identical to the one in the instance $\InstTildeI$ for any contract $\alpha < \halpha_t$, and
since $\hf(S_t \cup \{n+1\}) > \hf(S_t) > \hf(S_{t-1})$, the agent's best-response for $\halpha_t$ is $S_t \cup \{n+1\}$.

We now bound the principal's utility from the contract $\halpha_t$.
First, observe that since $|S_t|=\frac{n}{2}$ and $|S_{2^n-1}|=n$, it must be that $\talpha_t \le \talpha_{2^n-2} < \talpha_{2^n-1} \le \alpha_{2^n-1}$, where the last inequality follows from the choice of $\delta$, as can be seen in the proof of \Cref{prop:exp_many_cvs_maintained}.
Now we bound the principal's utility from $\halpha_t$:
\begin{align*}
    (1-\halpha_t)\hf(S_t \cup \{n+1\}) 
    &=
    (1-\halpha_t)f(S_t) + (1-\halpha_t)\frac{z}{4} \\
    &>
    (1-\talpha_t)f(S_t) + (1-\alpha_{2^n-1})\frac{z}{4} \\
    &\ge
    1 - \frac{z(1-\alpha_{2^n-1})}{16} + (1-\alpha_{2^n-1})\frac{z}{4} && (\Cref{prop:eps_for_revenue_bound})\\
    &>
    1 + \frac{z(1-\alpha_{2^n-1})}{16}
\end{align*}
Fix any $t' \in \{1,\dots,2^n-1\}$. In order to incentivize a set $S_{t'} \subseteq [n]$ with contract $\alpha$, it must yield more utility to the agent than $S_{t'-1}$ for this contract, and thus it must be that
$\alpha \ge 
\frac{\hc(S_{t'})-\hc(S_{t'-1})}{\hf(S_{t'})-\hf(S_{t'-1})}
=
\frac{\tc(S_{t'})-\tc(S_{t'-1})}{f(S_{t'})-f(S_{t'-1})} = \talpha_{t'}$.
So the principal's utility from incentivizing $S_{t'}$ is at most
\begin{align*}
    (1-\talpha_{t'})\hf(S_{t'}) = 
    (1-\talpha_{t'})f(S_{t'}) \le 1 + \frac{z(1-\alpha_{2^n-1})}{16} < (1-\halpha_t)f(S_t \cup \{n+1\}),
\end{align*}
where the first inequality follows from \Cref{prop:eps_for_revenue_bound}. This concludes the proof.
\end{proof}
\begin{corollary}\label{cor:opt_set_if_non_empty_intersection}
    If $x_f \cap x_c$ is non-empty, then the optimal contract incentivizes some set $S_t \cup \{n+1\}$, where $S_t \subseteq [n]$ and $S_t \in x_f \cap x_c$. 
\end{corollary}

Below we show that if $S_\alpha \subseteq [n+1]$ is the agent's best response for some contract $\alpha$ in the instance $\InstHatI$, then $S_\alpha \setminus \{n+1\}$ is the best response for $\alpha$ in $\InstTildeI$, up to an additive factor of $\sigma/2$, where $\sigma$ is the sparseness parameter for the equal-revenue instance $\instanceI$.

\begin{proposition}\label{prop:BR_in_approxDemand_f_c_submod}
    Let $S_\alpha$ be the agent's best response for contract $\alpha$ in the instance $\InstHatI$, then 
    $S_\alpha \setminus\{n+1\} \in \BR{\sigma/2}{\alpha}{\tI}$.
\end{proposition}
\begin{proof}
Fix a contract $\alpha \in [0,1]$.
    If $S_\alpha \subseteq [n]$, the claim holds trivially, as for sets which do not contain $n+1$, $\hf$ is identical to $f$ and $\hc$ is identical to $\tc$.

     If $S_\alpha = S_t \cup \{n+1\}$ for some $S_t \subseteq [n]$, then by \Cref{obs:n+1Incentivizable}, $S_t \in x_f \cap x_c$.
     Thus, for any $S \subseteq [n]$,
     \begin{align*}
     \alpha f(S) - \tc(S) 
     &=
     \alpha \hf(S) - \hc(S) \\
     &\le
     \alpha \hf(S_\alpha) - \hc(S_\alpha) \\
     &=
     \alpha f(S_t) - \tc(S_t) + \alpha \hf(n+1 \mid S_t) - \hc(n+1 \mid S_t)\\
     &=
     \alpha f(S_t) - \tc(S_t) + \frac{\alpha z}{4} && (c \text{ is non-negative}) \\
     &\le
     \alpha f(S_t) - \tc(S_t) + z/4 \\
     &\le 
     \alpha f(S_t) - \tc(S_t) + \sigma/4,
     \end{align*}
     where the last inequality follows from our choice of $z$.
\end{proof}

\begin{corollary}\label{cor:simulate_BR_f_c_submod}
    Consider an instance $\InstHatI$ defined using $x_f, x_c \in \{0,1\}^{\binom{n}{n/2}}$.
    If Alice and Bob are given full knowledge of $f$, $\tc$ and $\sigma$, they can compute a best response query for $\InstHatI$ using $poly(n)$ communication.
\end{corollary}
\begin{proof}
    To compute a best-response for $\alpha$, Alice can compute $\BR{\sigma/2}{\alpha}{\tI}$ with no communication.
    Then, for any $S \in \BR{\sigma/2}{\alpha}{\tI}$, Alice can compute $\alpha \hf(S)-\hc(S)$ and $\alpha \hf(S\cup\{n+1\})-\hc(S\cup\{n+1\})$, with polynomial communication (in particular, she can use Bob to answer a value query to $\hc$).
    By \Cref{prop:BR_in_approxDemand_f_c_submod}, the best set among the above is the agent's best response for $\alpha$. 
    Also, by \Cref{prop:sparse_demand_submod_f_c}, there are only poly-many such sets, so $poly(n)$ communication suffices.
\end{proof}

\begin{corollary}\label{cor:opt_exp_CC_f_c_submod}
    If Alice and Bob are given full knowledge of $f$, $\tc$ and $\sigma$, then finding the optimal contract in $\InstHatI$ requires exponential communication.
\end{corollary}

\begin{proof}
    We proceed by reducing $\DISJ{\binom{n}{n/2}}$ to the optimal contract problem.
    Given $x_c, x_f \in \{0,1\}^{\binom{n}{n/2}}$, we construct an instance $\InstHatI$ as described above. 
    Let $\alpha^*$ be the optimal contract for $\hI$ and let $S^* \subseteq [n+1]$ be the set which it incentivizes. As Alice and Bob have access to a best-response oracle, we can assume that the protocol returns $(\alpha^*,S^*)$. We show that $n+1 \in S^*$ if and only if $x_f \cap x_f \ne \emptyset$.
    If $n+1 \in S^*$, then by \Cref{obs:n+1Incentivizable}, $S^* \setminus \{n+1\} \in x_f \cap x_c$, and in particular $x_f \cap x_c \ne \emptyset$.
    If $n+1 \notin S^*$, then by \Cref{cor:opt_set_if_non_empty_intersection}, it must be that $x_c \cap x_f = \emptyset$.
    This concludes the proof. \qedhere
\end{proof}

\begin{proof}[Proof of \Cref{thm:cc_submod_f_c}]
Aiming for contradiction, assume there exists a protocol in which Alice and Bob use poly-many best response oracles and polynomial communication to find the optimal contract.
By \Cref{cor:simulate_BR_f_c_submod}, there exists an efficient protocol for the optimal contract if Alice and Bob receive $f,\tc$ and $\sigma$ as input. However this contradicts \Cref{cor:opt_exp_CC_f_c_submod}, which concludes the proof.
\end{proof}

\bibliographystyle{alpha}
\bibliography{refs.bib}

\appendix

\section{Inapproximability Without a Best-Response Oracle}\label{sec:CC_inapprox}
In this section we show the inapproximability of the optimal contract when $f$ and $c$ are both submodular or both supermodular, in the vanilla communication complexity model (see \Cref{sec:CC_model}), that is, without access to a best-response oracle.
In particular, we show that if the reward function $f$ is held by one party and the cost function $c$ is held by another, finding a contract for which the principal's utility is positive requires communicating an exponential number of bits, which implies multiplicative inapproximability. 

While these results are not too surprising, they do generalizes previous hardness results for this problem.
Specifically, a slight variation on the hardness result of \cite{iyer2012algorithms} (Theorem 5.2) implies that $\max_S f(S)-c(S)$ does not admit a constant-factor approximation in poly-time using value queries when both $f$ and $c$ are submodular. 
Our results in this section strengthen the above in two ways: (i) we establish inapproximability for any multiplicative approximation, even if it depends on $n$, 
and (ii) it is proved in a communication complexity model which generalizes value queries.

\subsection{Submodular $f$ and $c$} 
\begin{theorem}\label{thm:cc_inapprox_sub_sub}
    When $f$ and $c$ are submodular, any protocol which computes any multiplicative approximation to the optimal contract requires $\Omega(2^n/\sqrt{n})$ bits of communication.
\end{theorem}

We construct a family of instances with submodular $f$ and $c$ such that $\DISJ{\binom{n}{n/2}}$ can be reduced to finding a set of actions which yields positive utility for the principal.
\Cref{thm:cc_inapprox_sub_sub} follows directly from the following proposition.

\begin{proposition}
    There exists two monotone submodular functions $f$, parametrized by the vector $x_f \in \{0,1\}^{\binom{n}{n/2}}$, and $c$, parametrized by $x_c \in \{0,1\}^{\binom{n}{n/2}}$ such that $f(S) - c(S) > 0$ if and only if $|S|=n/2$ and $S \in x_f \cap x_c$.
\end{proposition}
\begin{proof}
Consider the following functions which define a family of instances parameterized by $x_f,x_c \in \{0,1\}^{\binom{n}{n/2}}$.

\begin{equation}\label{eq:sub-sub}
    f(S) = \begin{cases}
    8|S| & |S|<n/2 \\
    4n-4 & |S|=n/2 \land S \notin x_f\\
    4n-3 & |S|=n/2 \land S \in x_f \\
    2|S| + 3n - 3 & |S|>n/2
\end{cases}
\quad
c(S) = \begin{cases}
    8|S| & |S|<n/2 \\
    4n-2 & |S|=n/2 \land S \notin x_c\\
    4n-4 & |S|=n/2 \land S \in x_c \\
    2|S|+3n-2 & |S|>n/2
\end{cases}
\end{equation}

The fact that $f(S) - c(S) > 0$ if and only if $|S|=n/2$ and $S \in x_f \cap x_c$ can be easily verified. Also, the monotonicity and submodularity of $f$ and $c$ are trivial for $|S| < n/2-1$ and $|S|>n/2+1$. For $|S|\in \{n/2-1,n/2,n/2+1\}$, the possible values of $f(S)$ and $c(S)$ are depicted in \Cref{fig:sub-sub}, alongside the marginals which appear in parentheses.
\end{proof}
\begin{figure}[H]
    \centering
\begin{tikzpicture}[node distance=1cm, every node/.style={draw, circle}, >=Stealth]
\begin{scope}[xshift=10cm]
  \node (top) [draw=none] {\Large $f(S)$};
  \node (A) [below=of top] {$4n-8$};
  \node (B) [below left=of A] {$4n-4$};
  \node (C) [below right=of A, pattern=north east lines, pattern color=gray] {$4n-3$};
  \node (D) [below=2.2cm of A] {$4n-1$};
  \node (E) [draw=none, below=of D] {};

  \draw[->] (top) -- node[draw=none, right] {$(8)$} (A);
  \draw[->] (A) -- node[draw=none, left] {$(4)$} (B);
  \draw[->] (A) -- node[draw=none, right] {$(5)$} (C);
  \draw[->] (B) -- node[draw=none, left] {$(3)$} (D);
  \draw[->] (C) -- node[draw=none, right] {$(2)$} (D);
  \draw[->] (D) -- node[draw=none, right] {$(2)$} (E);
\end{scope}

\begin{scope}[xshift=16cm]
  \node (top2) [draw=none] {\Large $c(S)$};
  \node (A2) [below=of top2] {$4n-8$};
  \node (B2) [below left=of A2] {$4n-2$};
  \node (C2) [below right=of A2, pattern=north east lines, pattern color=gray] {$4n-4$};
  \node (D2) [below=2.2cm of A2, minimum size=1.48cm] {$4n$};
  \node (E2) [draw=none, below=of D2] {};

  \draw[->] (top2) -- node[draw=none, right] {$(8)$} (A2);
  \draw[->] (A2) -- node[draw=none, left] {$(6)$} (B2);
  \draw[->] (A2) -- node[draw=none, right] {$(4)$} (C2);
  \draw[->] (B2) -- node[draw=none, left] {$(2)$} (D2);
  \draw[->] (C2) -- node[draw=none, right] {$(4)$} (D2);
  \draw[->] (D2) -- node[draw=none, right] {$(2)$} (E2);
\end{scope}

\begin{scope}
  \node (minus1) [draw=none, left=3cm of A] {$|S|=\frac{n}{2}-1$};  
  \node (zero) [draw=none, left=1.95cm of B] {$|S|=\frac{n}{2}$};  
  \node (plus1) [draw=none, left=3cm of D] {$|S|=\frac{n}{2}+1$};  
\end{scope}
\end{tikzpicture}
\caption{The values and marginal values of $f$ and $c$ in \Cref{eq:sub-sub} when $|S|\in \{n/2-1,n/2,n/2+1\}$. The possible values of $f(S)$ (left) and $c(S)$ (right) appear inside the circles, the marginals appear in parentheses. The marked circles correspond to sets $S$ such that $S \in x_f$ or $S \in x_c$.}
\label{fig:sub-sub}
\end{figure}

\subsection{Supermodular $f$ and $c$}
We also give an analogous result when $f$ and $c$ are supermodular.

\begin{theorem}\label{thm:cc_inapprox_sup_sup}
    When $f$ and $c$ are supermodular, any protocol which computes any multiplicative approximation to the optimal contract requires $\Omega(2^n/\sqrt{n})$ bits of communication.
\end{theorem}

\begin{proposition}
    There exists two monotone supermodular functions $f$, parametrized by $x_f \in \{0,1\}^{\binom{n}{n/2}}$, and $c$, parametrized by $x_c \in \{0,1\}^{\binom{n}{n/2}}$ such that  $f(S) - c(S) > 0$ if and only if $|S|=n/2$ and $S \in x_f \cap x_c$.
\end{proposition}

\begin{proof}
    Consider the following family of instances parameterized by $x_f,x_c \in \{0,1\}^{\binom{n}{n/2}}$.
\begin{equation}\label{eq:sup-sup}
f(S) = \begin{cases}
    2|S| & |S|<n/2 \\
    n & |S|=n/2 \land S \notin x_f\\
    n+1 & |S|=n/2 \land S \in x_f \\
    6|S|-2n-1 & |S|>n/2
\end{cases}
\quad
c(S) = \begin{cases}
    2|S| & |S|<n/2 \\
    n+2 & |S|=n/2 \land S \notin x_c\\
    n & |S|=n/2 \land S \in x_c \\
    6|S|-2n & |S|>n/2
\end{cases}    
\end{equation}

The fact that $f(S) - c(S) > 0$ if and only if $|S|=n/2$ and $S \in x_f \cap x_c$ can be easily verified. Also, the monotonicity and supermodularity of $f$ and $c$ are trivial For $|S| < n/2-1$ and $|S|>n/2+1$. For $|S|\in \{n/2-1,n/2,n/2+1\}$, the possible values of $f(S)$ and $c(S)$ are depicted in \Cref{fig:sup-sup}, alongside the marginals which appear in parentheses.
\end{proof}
\begin{figure}[H]
    \centering
\begin{tikzpicture}[node distance=1cm, every node/.style={draw, circle}, >=Stealth]
\begin{scope}[xshift=10cm]
  \node (top) [draw=none] {\Large $f(S)$};
  \node (A) [below=of top] {$n-2$};
  \node (B) [below left=of A, minimum size=1.25cm] {$n$};
  \node (C) [below right=of A, pattern=north east lines, pattern color=gray] {$n+1$};
  \node (D) [below=2.2cm of A] {$n+5$};
  \node (E) [draw=none, below=of D] {};

  \draw[->] (top) -- node[draw=none, right] {$(2)$} (A);
  \draw[->] (A) -- node[draw=none, left] {$(2)$} (B);
  \draw[->] (A) -- node[draw=none, right] {$(3)$} (C);
  \draw[->] (B) -- node[draw=none, left] {$(5)$} (D);
  \draw[->] (C) -- node[draw=none, right] {$(4)$} (D);
  \draw[->] (D) -- node[draw=none, right] {$(6)$} (E);
\end{scope}

\begin{scope}[xshift=16cm]
  \node (top2) [draw=none] {\Large $c(S)$};
  \node (A2) [below=of top2] {$n-2$};
  \node (B2) [below left=of A2] {$n+2$};
  \node (C2) [below right=of A2, minimum size=1.25cm, pattern=north east lines, pattern color=gray] {$n$};
  \node (D2) [below=2.2cm of A2] {$n+6$};
  \node (E2) [draw=none, below=of D2] {};

  \draw[->] (top2) -- node[draw=none, right] {$(2)$} (A2);
  \draw[->] (A2) -- node[draw=none, left] {$(4)$} (B2);
  \draw[->] (A2) -- node[draw=none, right] {$(2)$} (C2);
  \draw[->] (B2) -- node[draw=none, left] {$(4)$} (D2);
  \draw[->] (C2) -- node[draw=none, right] {$(6)$} (D2);
  \draw[->] (D2) -- node[draw=none, right] {$(6)$} (E2);
\end{scope}

\begin{scope}
  \node (minus1) [draw=none, left=3cm of A] {$|S|=\frac{n}{2}-1$};  
  \node (zero) [draw=none, left=1.95cm of B] {$|S|=\frac{n}{2}$};  
  \node (plus1) [draw=none, left=3cm of D] {$|S|=\frac{n}{2}+1$};  
\end{scope}
\end{tikzpicture}
\caption{The values and marginal values of $f$ and $c$ in \Cref{eq:sup-sup} when $|S|\in \{n/2-1,n/2,n/2+1\}$. The possible values of $f(S)$ (left) and $c(S)$ (right) appear inside the circles, the marginals appear in parentheses. The marked circles correspond to sets $S$ such that $S \in x_f$ or $S \in x_c$.}
\label{fig:sup-sup}
\end{figure}

\section{FPTAS with Value and Best-Response Queries}\label{sec:FPTAS_subadditive_c}

\newcommand{\actions}[0]{[n]}
\newcommand{\opt}[0]{\text{OPT}}
In this appendix, we observe that the FPTAS of \cite{multimulti} for monotone $f$ and additive $c$ can also be used for the case where $c$ is subadditive,
when instead of a demand query we have access to a best-response query. We present the full argument here for completeness.

\begin{claim}[FPTAS]\label{cla:FPTAS}
For monotone $f$ and monotone and subadditive $c$, \Cref{alg:fptas} gives a $(1-\epsilon)$-approximation to the optimal principal utility with $O\left(\frac{n^2}{\epsilon} \right)$ many value and best-response queries.
\end{claim}

Denote the optimal welfare by $\opt = \max_{\hat{S} \subseteq [n]} (f(\hat{S}) - c(\hat{S})).$ Denote the optimal contract and the agent's choice of actions under this contract by $\alpha^\star$ and $S^\star$.

Our key lemma (\Cref{lem:4m}) shows upper and lower bounds on  $\alpha^\star$.

To bound $\alpha^\star$, we use that, by a reduction to the non-combinatorial model of \cite{dutting2019simple}, the gap between the optimal welfare and the principal's utility under the optimal contract is at most $2^n$.

\begin{observation}[\cite{dutting2019simple}]\label{obs:welfare-single}
    For general $f$ and general $c$, there exists a contract $\alpha$ that guarantees a utility for the principal of at least  $\frac{\opt}{2^n}$. 
\end{observation}

Using this observation, we now show our key lemma, which sets lower and upper bounds on $\alpha^\star$.
\begin{lemma}\label{lem:4m}
    Consider monotone $f$ and monotone and subadditive $c$. Let $j^\star \in \arg\max_{j \in S^\star} c(\{j\})$ and assume $\opt > 0$.
    Then we have
    $\alpha_{\min} \leq \alpha^\star \leq  \alpha_{\max}$, where $\alpha_{\min} =1- \frac{\opt}{ c(\{j^\star\}) + \opt}$ and $\alpha_{\max} = 1- \frac{\opt}{n\cdot 2^n (c(\{j^\star\}) + \opt )}$.
\end{lemma}

\begin{proof}
The set that maximizes $f(S)-c(S)$ is also the best response of the agent under contract $\alpha=1$.
We note that since $\opt>0$, it must be that $\alpha^\star <1 $.
By \Cref{obs:welfare-single}, it holds that 
\begin{equation}
    \frac{\opt}{2^n}\leq (1-\alpha^\star)  f(S^\star)   \leq   f(S^\star) -c(S^\star) \leq \opt,\label{eq:welfare3}
\end{equation}
where the second inequality holds since the agent's expected payment $\alpha^\star \cdot f(S^\star)$ covers the agent's cost $c(S^\star)$,  as the agent's expected utility is non-negative.

We first show that $\alpha^\star \geq \alpha_{\min}$. It holds that, 
$$
\alpha^\star \cdot (  c(S^\star) + \opt  )-c(S^\star) \geq \alpha^\star \cdot (  c(S^\star) + f(S^\star)-c(S^\star) )-c(S^\star) =  \alpha^\star \cdot f(S^\star) -c(S^\star) \geq 0, 
$$
where the first holds by the last inequality in  Equation~\eqref{eq:welfare3}, and the last inequality is since the utility of the agent under $\alpha^\star$ and $S^\star$ is non-negative.

By rearranging, we get that 
$$
\alpha^\star \geq \frac{c(S^\star)}{ c(S^\star) + \opt} \geq \frac{c(\{j^\star\})}{c(\{j^\star\}) +\opt } = 1- \frac{\opt}{ c(\{j^\star\}) + \opt} = \alpha_{\min},
$$
where we used that $c$ is monotone and $z/(z+\opt)$ is non-decreasing in $z$. 

To show that $\alpha^\star \leq \alpha_{\max}$, observe that 
$$ (1-\alpha^\star) (c(S^\star) + \opt) \geq   (1-\alpha^\star) (c(S^\star) + f(S^\star) - c(S^\star)) =  (1-\alpha^\star) f(S^\star) \geq  \frac{\opt}{2^n},  $$
where the first and last inequality hold by the last and first inequality in Equation~\ref{eq:welfare3}, respectively.

Rearranging this, we obtain
$$  \alpha^\star \leq 1- \frac{\opt}{2^n (c(S^\star) + \opt )} \leq 1- \frac{\opt}{n\cdot 2^n (c(\{j^\star\}) + \opt )}  = \alpha_{\max},$$
where we used that $c$ is subadditive so that $c(S^\star) \leq n \cdot c(\{j^\star\})$. 
\end{proof}

\begin{algorithm}
\caption{FPTAS for a single agent using value and demand oracles}\label{alg:fptas}
   \hspace*{\algorithmicindent} \textbf{Parameter:}  $\epsilon \in (0,1) $ \\
   \hspace*{\algorithmicindent} \textbf{Input:} Reward function $f:2^{[n]} \rightarrow \reals_{\geq 0}$, cost function $c:2^{[n]} \rightarrow \reals_{\geq 0}$ \\
    \hspace*{\algorithmicindent} \textbf{Output:}  A contract $\alpha$ and a best response set $S$ 
\begin{algorithmic}[1]
\State Let $\alpha =0$ and $S = \arg\max_{\hat{S}: c(\hat{S}) = 0} f(\hat{S})$ \label{st:init}
\State Let $\opt = \max_{\hat{S} \subseteq [n]} (f(\hat{S}) - c(\hat{S}))$ \label{st:opt}
\For{$j\in \actions$ with $c(\{j\})>0$}
\For{$k=0,\ldots,\lceil \log_{1/(1-\epsilon)} n \cdot 2^n \rceil $}
\State Let $\alpha_{j, k} = 1 - (1-\epsilon)^{k+1} \cdot  \frac{\opt}{ c(\{j\}) + \opt}$
\State Let $S_{j,k} = \arg\max_{\hat{S}\subseteq [n]}  \left(\alpha_{j,k} \cdot  f(\hat{S}) - c(\hat{S})\right)$ \label{st:br}
\If{$(1-\alpha_{j, k}) f(S_{j,k}) > (1-\alpha)f(S)$}
\State $\alpha=\alpha_{j, k}$
\State $S= S_{j,k}$
\EndIf
\EndFor
\EndFor
\State \Return $\alpha,S$
\end{algorithmic}
\end{algorithm}

We are now ready to prove \Cref{cla:FPTAS}.

\begin{proof}[Proof of \Cref{cla:FPTAS}]
It is clear that \Cref{alg:fptas} can be implemented with access to  value queries and best-response queries (we need best-response queries in Steps~\ref{st:init}, \ref{st:opt}, and \ref{st:br}). The query complexity of \Cref{alg:fptas} is upper bounded by $O(n \cdot \log_{1/(1-\epsilon)}n\cdot 2^n)$, which is $O(\frac{n^2}{\epsilon})$. 

It remains to show that \Cref{alg:fptas} achieves the claimed approximation guarantee.
Recall that we use $\alpha^\star,S^\star$ to denote the optimal contract and the agent's choice of actions under this contract. Note that if $u_p(S^\star,\alpha^\star) \leq 0$, the claim holds trivially because Algorithm~\ref{alg:fptas} ensures that $u_p(S, \alpha) \geq 0$. Otherwise, $u_p(S^\star,\alpha^\star) = (1-\alpha^\star)f(S^\star) > 0$. It must then hold that $\alpha^\star < 1$ and $f(S^\star) > 0$. 
So, in particular, $S^\star \neq \emptyset$, so that $j^\star \in \arg\max_{j \in S^\star} c(\{j\})$ is well defined.  Also note that $\opt = \max_{\hat{S} \subseteq [n]} (f(\hat{S}) - c(\hat{S})) \geq f(S^\star) - c(S^\star) \geq (1-\alpha^\star) f(S^\star) >0$, because the agent's utility from $S^\star$ is non-negative. 

If $c(\{j^\star\})=0$, then, by subadditivity of $c$, we must have $c(S^\star) = 0$ and the optimal contract is $\alpha=0$, which is the contract considered in Step~\ref{st:init} of the algorithm. 
Otherwise, $c(\{j^\star\}) > 0$. Consider the iteration of Algorithm~\ref{alg:fptas} in which $j = j^\star$.
Let us denote $\alpha_{\min} =1- \frac{\opt}{ c(\{j^\star\}) + \opt}$ and $\alpha_{\max} = 1- \frac{\opt}{n\cdot 2^n (c(\{j^\star\}) + \opt )}$.
We claim that then there must be a value of  $k \in \{0,\ldots,\lceil \log_{1/(1-\epsilon)} n\cdot 2^n \rceil\}$ such that $1 - \alpha_{j, k}  \leq 1 - \alpha^\star \leq  \frac{1 - \alpha_{j, k}}{1-\epsilon}$.
Indeed, for $k = 0$ we have $\frac{1-\alpha_{j^\star,0}}{1-\epsilon} =  1- \alpha_{\min} \geq  1-\alpha^\star$, where the inequality follows by Lemma~\ref{lem:4m}, while for $k =\lceil \log_{1/(1-\epsilon)}m\cdot 2^m \rceil$ we have $1-\alpha^\star \geq 1- \alpha_{\max} \geq 1-\alpha_{j^\star,\lceil \log_{1/(1-\epsilon)}n\cdot 2^n \rceil}$, where the first inequality follows again by Lemma~\ref{lem:4m}. So there must be a $k \in \{0,\ldots,\lceil \log_{1/(1-\epsilon)}n \cdot 2^n \rceil\}$ with the desired properties.

We claim that for this choice of $j,k$, contract $\alpha_{j, k}$ provides a $(1-\epsilon)$-approximation to the optimal contract.
To see this, let $S$ be the choice of the agent under $\alpha_{j, k}$.
By \Cref{obs:critVals}, since $\alpha_{j, k} \geq \alpha^\star$, it must hold that $f(S) \geq f(S^\star)$. We thus obtain,
$(1-\alpha_{j, k}) f(S) \geq (1-\alpha_{j, k}) f(S^\star) \geq  (1-\epsilon)(1-\alpha^\star) f(S^\star)$,
which completes the proof.
\end{proof}

\section{Omitted Proofs}
\subsection{Missing Proofs from \Cref{sec:eqRev}} \label{apx:missingProof} 

\begin{proof}[Proof of \Cref{lem:alphaMon}]
    One can easily show that $\frac{1}{2}\left(\sqrt{4\alpha_t^2 - 8\alpha_t + 5} - 1\right)$ is always positive for $\alpha_t \in [0,1)$, which implies that $\alpha_t$ is monotonically increasing in $t$. 

    The fact that $\alpha_{t+1} < 1$ can be  derived from the recurrence relation inductively.
    Aiming for contradiction, assume $\alpha_t < 1$ and  $\alpha_{t+1} > 1$, it implies that
    $$
    1 < \alpha_t + \frac{1}{2}\left(\sqrt{4\alpha_t^2 - 8\alpha_t + 5} - 1\right). 
    $$
    By subtracting $\alpha_t - \frac{1}{2}$ and then squaring  both sides of the equation, one can show that $\alpha_t > 1$, contradicting the induction hypothesis.
\end{proof}

\begin{proof}[Proof of \Cref{lem:fMarginalDecrease}]
For $t=0$ we have $f_1 - f_0 = f_1 \approx 1.618$ and for $t=1$, $f_2 - f_1 \approx 1.1338$. 
For $t>1$, let $z_t = \frac{1}{2}\left( \sqrt{4\alpha_t^2-8\alpha_t+5} - 1 \right)$, so that $\alpha_{t+1} = \alpha_t + z_t$.
    Rearranging we get
    \begin{eqnarray*}
        f_{t+1} - f_t &=&
        \frac{z_t}{(1-\alpha_t - z_t)(1-\alpha_t)},
    \end{eqnarray*}
    we will show that is a decreasing function of $t$.
    Substituting $\alpha_t$ with $x$, and $z_t$ with $z(x)$, we'll show that $g(x) = \frac{z(x)}{(1-x - z(x))(1-x)}$ 
    is monotonically decreasing for $x \in [0,1)$. The claim follows from continuity.
    It is enough to show that the numerator of $\frac{\partial g(x)}{\partial x}$, which we denote by $\hat{g}(x)$,  is negative for $x \in [0,1)$.
    \begin{eqnarray*}
        \hat{g}(x) &=& 
        z'(x)(1-x-z(x))(1-x) - z(x)[(-1-z'(x))(1-x) - (1-x-z(x))] \\
        &=& 
        z'(x)(1-x)^2 - z'(x)z(x)(1-x) 
        +z(x)(1+z'(x))(1-x)+z(x)(1-x) -z^2(x) \\
        &=& 
        z'(x)(1-x)^2 + z(x)(1-x)[-z'(x)+1+z'(x)+1] -z^2(x) \\
        &=& 
        z'(x)(1-x)^2 + 2z(x)(1-x) -z^2(x).
    \end{eqnarray*}
    Replacing $z$ with the relevant expressions:
    \begin{eqnarray*}
        \hat{g}(x)&=&
        \frac{-2(1-x)^3}{\sqrt{4x^2-8x+5}} + (\sqrt{4x^2-8x+5}-1)(1-x)-\frac{1}{4}(4x^2-8x+6 -2\sqrt{4x^2-8x+5}) \\
        &=&
        \frac{-2(1-x)^3}{\sqrt{4x^2-8x+5}} + 
        \sqrt{4x^2-8x+5}\left(\frac{3}{2}-x\right)-(1-x)-\left(x^2-2x+\frac{3}{2}\right) \\
        &=&
        \frac{-2(1-x)^3}{\sqrt{4x^2-8x+5}} + 
        \sqrt{4x^2-8x+5}\left(\frac{3}{2}-x\right) -x^2+3x-\frac{5}{2}.
    \end{eqnarray*}
    Since in the range $x \in [0,1)$ it holds that $\sqrt{4x^2-8x+5}>0$, $\hat{g}(x) < 0$ follows if $h(x) < 0$, where $h(x) := \sqrt{4x^2-8x+5} \cdot \hat{g}(x)$.
    \begin{eqnarray*}
        h(x) &=&
        -2(1-x)^3 + (4x^2-8x+5)\left(\frac{3}{2}-x\right)-\left(x^2-3x+\frac{5}{2}\right)\sqrt{4x^2-8x+5} \\
        &=&
        -2x^3+8x^2-11x+\frac{11}{2}-\left(x^2-3x+\frac{5}{2}\right)\sqrt{4x^2-8x+5}, 
    \end{eqnarray*}
   $h(x)$ is negative if and only if the following holds
   \begin{eqnarray*}
       -2x^3+8x^2-11x+\frac{11}{2}
       &<&
       \left(x^2-3x+\frac{5}{2}\right)\sqrt{4x^2-8x+5} \\
       \left(-2x^3+8x^2-11x+\frac{11}{2}\right)^2
       &<&
       \left(x^2-3x+\frac{5}{2}\right)^2(4x^2-8x+5) \\
      4x^6 - 32x^5 + 108x^4 -198x^3 + 209x^2 - 121x + \frac{121}{4}
      &<&
      \left(x^4-6x^3+14x^2-15x+\frac{25}{4}\right)(4x^2-8x+5) \\
      108x^4 -198x^3 + 209x^2 - 121x + \frac{121}{4}
      &<&
      109x^4 -202x^3 + 215x^2 - 125x + \frac{125}{4} \\
      0
      &<&
      x^4 -3x^3 + 6x^2 - 4x + 1.
   \end{eqnarray*}
   One can easily verify that this is true for any $x \ge 0$.
\end{proof}

\subsection{Missing Proofs from \Cref{sec:eqToVal}}\label{apx:proofs_family_equalRev}

\begin{claim}\label{cla:eps_0_positive}
    The right-hand side of equation \Cref{eq:eps_0} is strictly positive. Namely, the following three inequalities hold for any equal revenue instance $\eqRevInst$:
    
\begin{equation}\label{eq:apx_eps_1}
0< \min_{t \in \{1,\dots,2^n-1\}} \feq(S_{t})-\feq(S_{t-1}) 
\end{equation}
\begin{equation}\label{eq:apx_eps_2}
    0 < \min_{t \in \{2,\dots,2^n-1\}} \frac{\ceq(S_t) - \ceq(S_{t-1})}{\alpha_{t-1}} - (\feq(S_t) - \feq(S_{t-1}))
\end{equation}
\begin{equation}\label{eq:apx_eps_3}
    0 < \min_{\substack{S \subseteq T \subseteq [n], \\ i \in [n]\setminus T}} \feq(i \mid S) - \feq(i \mid T)
\end{equation}
\end{claim}

\begin{proof}
Inequality \ref{eq:apx_eps_1} follows from the fact that there $2^n-1$ different incentivizable set and by \Cref{obs:critValStructure}, their rewards are strictly increasing.
Inequality \ref{eq:apx_eps_2} follows from from the fact that there are $2^n-1$ distinct critical values, which implies that $\alpha_{t-1} < \alpha_t = \frac{\ceq(S_t) - \ceq(S_{t-1})}{\feq(S_t) - \feq(S_{t-1})}$. 
Inequality \ref{eq:apx_eps_3} follows immediately from \Cref{lem:f_strictly_submodolar} below, which establishes that $\feq$ is strictly submodular.
\end{proof}

\begin{lemma}\label{lem:f_strictly_submodolar}
    Let $\eqRevInst$ be an equal revenue instance with submodular $\feq$ and additive $\ceq$, then $\feq$ is strictly submodular.
\end{lemma}
\begin{proof}
\newcommand{\ST}{S_t}
\newcommand{\STP}{S_{t'}}
\newcommand{\Sti}{S_t \cup \{i\}}
\newcommand{\Stpi}{S_{t'} \cup \{i\}}
Fix any two sets $\ST \subsetneq \STP \subseteq [n]$, and $i \notin \STP$. We will show that
$$
\feq(i \mid \ST) - \feq(i \mid \STP) > 0.
$$
Observe that $t < t'$. 
Let $\alpha_{ti}, \alpha_{t'}$ the critical value that incentivize the sets $\Sti$ and $S_{t'}$, respectively.
Aiming for contradiction, assume $\feq(\Sti) - \feq(\ST) = \feq(\Stpi) - \feq(\STP)$. Together with the additivity of $\ceq$ we get,
$$
\beta_{t'i} := \frac{\ceq(i \mid S_{t'})}{\feq(i \mid S_{t'})} =
\frac{\ceq(i \mid S_t)}{\feq(i \mid S_t)} =:\beta_{ti}.
$$
Observe that $\Stpi$ dominates $\STP$ for any contract $\alpha \ge \beta_{t'i}$, thus $\alpha_{t'} < \beta_{t'i}$. 
Also, $\ST$ dominates $\Sti$ for any contract $\alpha < \beta_{ti}$, thus $ \beta_{ti} \le \alpha_{ti}$. Together we get $\alpha_{t'} < \beta_{t'i} = \beta_{ti} \le \alpha_{ti}$.

Our assumption also implies that
$\feq(\STP) - \feq(\ST) = \feq(\Stpi) - \feq(\Sti)$, and thus, together with the additivity of $\ceq$, and a similar argument to the above,
\[
 \alpha_{ti} < \frac{\ceq(\Stpi) - \ceq(\Sti)}{ \feq(\Stpi) - \feq(\Sti)} =\frac{\ceq(\STP) - \ceq(\ST)}{\feq(\STP) - \feq(\ST)} \le \alpha_{t'},
\]
which concludes the proof.
\end{proof}

\begin{proof}[Proof of \Cref{prop:feq_kSubmod}]
    Fix $k$. Non-negativity follows as $\eps >0$. For submodularity, observe that for any two sets $S \subseteq T \subseteq [n]$ and any action $ i \in [n]\setminus T$, it holds that
    $$
    \feq_k(i \mid S) - \feq_k(i \mid T) \ge \feq(i \mid S) - \feq(i \mid T) - \eps > 0,$$
    where the first inequality follows from the definition of $\feq_k$ and last  follows from the definition of $\eps$.
    For monotonicity, observe that for any two sets of actions, $S \subsetneq T$ we have
    $$\feq_k(T) - \feq_k(S) \ge \feq(T) - \feq(S) - \eps > 0,$$
    where the last inequality follows from \Cref{eq:apx_eps_1}.
\end{proof}

\begin{proof}[Proof of \Cref{prop:feq_k_CVs}]
    First observe that for any $t < k$ and any $t>k+1$, $\alpha'_t = \alpha_t$, as
    $$
    \alpha'_t 
    =
    \frac{\ceq(S_t) - \ceq(S_{t-1})}{\feq_k(S_t) - \feq_k(S_{t-1})} 
    =
    \frac{\ceq(S_t) - \ceq(S_{k-1})}{\feq(S_t) - \feq(S_{t-1})} 
    =
    \alpha_t.
    $$  
    For $t=k+1$ we have for any $\eps >0$,
    \begin{align*}
        \alpha'_{k+1}  - \alpha'_k
        &=
        \frac{\ceq(S_{k+1}) - \ceq(S_{k})}{\feq_k(S_{k+1}) - \feq_k(S_{k})} 
        - 
        \frac{\ceq(S_{k}) - \ceq(S_{k-1})}{\feq_k(S_{k}) - \feq_k(S_{k-1})} \\
        &=
        \frac{\ceq(S_{k+1}) - \ceq(S_{k})}{\feq(S_{k+1}) - \feq(S_{k}) - \eps} 
        - 
        \frac{\ceq(S_{k}) - \ceq(S_{k-1})}{\feq(S_{k}) + \eps - \feq(S_{k-1})} \\
        &>
        \frac{\ceq(S_{k+1}) - \ceq(S_{k})}{\feq(S_{k+1}) - \feq(S_{k})} 
        - 
        \frac{\ceq(S_{k}) - \ceq(S_{k-1})}{\feq(S_{k}) - \feq(S_{k-1})} \\
        &=
        \alpha_{k+1}  - \alpha_k \\
        &>
        0.
    \end{align*}
    For $t=k$ we have,
    \begin{align*}
        \alpha'_k  - \alpha'_{k-1}
        &=
        \frac{\ceq(S_k) - \ceq(S_{k-1})}{\feq_k(S_k) - \feq_k(S_{k-1})} 
        - 
        \frac{\ceq(S_{k-1}) - \ceq(S_{k-2})}{\feq_k(S_{k-1}) - \feq_k(S_{k-2})} \\
        &=
        \frac{\ceq(S_k) - \ceq(S_{k-1})}{\feq(S_k) + \eps - \feq(S_{k-1})} 
        - 
        \frac{\ceq(S_{k-1}) - \ceq(S_{k-2})}{\feq(S_{k-1}) - \feq(S_{k-2})} \\
        &=
        \frac{\ceq(S_k) - \ceq(S_{k-1})}{\feq(S_k) - \feq(S_{k-1}) + \eps} 
        - 
        \alpha_{k-1}.\\
    \end{align*}
    Using the fact that $\eps < \feq_k(S_k) - \feq(S_{k-1})$, one can easily show that the above expression is greater than zero if and only if 
    $$
    \frac{\ceq(S_k) - \ceq(S_{k-1})}{ \alpha_{k-1}} - (\feq(S_k) - \feq(S_{k-1})) > \eps,
    $$
    which is implied by \Cref{eq:apx_eps_2}.
    We conclude that there are $2^n-1$ distinct critical values, and $\alpha'_t$ incentivizes $S_t$.

    We now proceed to show that $\alpha'_k$ is the unique optimal contract for the instance $\langle n, \feq_k, \ceq \rangle$.
    As for any $t \notin \{k,k+1\}$, as $\feq_k(S_t) = \feq(S_t)$ and $\alpha_t = \alpha'_{t}$, we get that 
    $$
    u_p(\alpha'_t) = (1-\alpha'_t)\feq(S_t) =(1-\alpha_t)\feq(S_t) = 1.
    $$
    For $t = k+1$ we have
    \begin{align*}
    u_p(\alpha'_{k+1})
    &=
    \left(1-\alpha'_{k+1})\feq_k(S_{k+1}\right) \\
    &=
    \left(1-\frac{\ceq(S_{k+1}) - \ceq(S_{k})}{\feq_k(S_{k+1}) - \feq_k(S_{k})}\right)\feq_k(S_{k+1}) \\
    &=
    \left(1-\frac{\ceq(S_{k+1}) - \ceq(S_{k})}{\feq(S_{k+1}) - \feq(S_{k}) - \eps}  \right)\feq(S_{k+1}) && (\text{definition of } \feq_k) \\
    &<
    \left(1-\frac{\ceq(S_{k+1}) - \ceq(S_{k})}{\feq(S_{k+1}) - \feq(S_{k})}  \right)\feq(S_{k+1}) \\
    &=
    (1-\alpha_{k+1})\feq(S_{k+1}) \\
    &=
    1.
    \end{align*}
    For $t = k$,
    \begin{align*}
    u_p(\alpha'_{k})
    &=
    \left(1-\alpha'_{k})\feq_k(S_{k}\right) \\
    &=
    \left(1-\frac{\ceq(S_{k}) - \ceq(S_{k-1})}{\feq_k(S_{k}) - \feq_k(S_{k-1})}\right)\feq_k(S_{k}) \\
    &=
    \left(1-\frac{\ceq(S_{k}) - \ceq(S_{k-1})}{\feq(S_{k}) + \eps - \feq(S_{k-1})}\right)(\feq(S_{k}) + \eps) && (\text{definition of } \feq_k) \\
    &>
    \left(1-\frac{\ceq(S_{k}) - \ceq(S_{k-1})}{\feq(S_{k}) - \feq(S_{k-1})}\right)(\feq(S_{k}) + \eps) \\
    &>
    \left(1-\alpha_{k})\feq(S_{k}\right) \\
    &=
    1.
    \end{align*}   
    This completes the proof.
\end{proof}

\subsection{Missing Proofs from \Cref{sec:CC_submod_fc}}\label{sec:Omitted_CC_submod_fc}

\begin{proof}[Proof of \Cref{prop:ctilde_basix_props}]
\textbf{Non-negativity:}
For any $S \subseteq n $ and $S \ne \emptyset$,
$$
\tc(S) = c(S) - \delta|S|^2 > c(S) - \phi_c \ge c(S_1) - \phi_c = c(S_1) - c(S_0) - \phi_c > 0
$$
\textbf{Monotonicity:}
For any $S \subsetneq T$,
$$
\tc(T) - \tc(S) = c(T)-c(S) + \delta(|S|^2-|T|^2) \ge \phi_c - \delta n^2 > 0
$$
\textbf{Submodularity:}
For any $S \subsetneq T$ and $i \notin T$,
\begin{align*}
\tc(i \mid S) - \tc(i \mid T) 
&=
c(i \mid S) - c(i \mid T) - \delta((|S|+1)^2-|S|^2) + \delta((|T|+1)^2-|T|^2) \\
&= 
\delta 
[(|T|+1)^2-|T|^2) - (|S|+1)^2-|S|^2)] \\
&>0,  
\end{align*}
where the last inequality follows from the strict convexity of $x^2$.
\end{proof}

\begin{proof}[Proof of \Cref{prop:exp_many_cvs_maintained}]
Fix $t\in \{2,\dots,2^n-2\}$. Denote $f_t = f(S_t)$ and similarly for $f_{t+1}$ and $f_{t-1}$.
To show that $\talpha_{t+1} > \talpha_t$, it suffices to show that 
$$
\alpha_{t+1} - \frac{\delta\cdot n^2}{f_{t+1}-f_t}
>
\alpha_{t} + \frac{\delta\cdot n^2}{f_{t}-f_{t-1}}
$$
which is equivalent to
\begin{align*}
     \alpha_{t+1} - \alpha_t
     &> \delta \cdot n^2 \left(\frac{1}{f_t-f_{t-1}} + \frac{1}{f_{t+1}-f_t}\right)
    = \delta \cdot n^2 \underbrace{\left(\frac{f_{t+1}-f_{t-1}}{(f_t-f_{t-1})(f_{t+1}-f_t)}\right)}_{> 0}.
\end{align*}
So setting,
$$
\delta < (\alpha_{t+1}-\alpha_t)\cdot \frac{1}{n^2} \cdot \left(\frac{(f_t-f_{t-1})(f_{t+1}-f_t)}{(f_{t+1}-f_{t-1})}\right),
$$
The last critical value must satisfy, $\talpha_{2^n-1} <1$:
\begin{align*}
\talpha_{2^n-1}
&=
\frac{\tc_{2^n-1} -\tc_{2^n-2}}{f_{2^n-1} - f_{2^n-2}}
<
\frac{c_{2^n-1} - c_{2^n-2} + \delta n^2}{f_{2^n-1} - f_{2^n-2}} 
=
\alpha_{2^n-1} + \frac{\delta n^2}{f_{2^n-1} - f_{2^n-2}}.
\end{align*}
Equivalently,
$$
\delta < \frac{(1-\alpha_{2^n-1})(f_{2^n-1}-f_{2^n-2})}{n^2}.
$$
Lastly, $\talpha_1 > 0$ if and only if
\begin{align*}
\talpha_{1}
&=
\frac{\tc_{1} - 0}{f_1 - 0}
>
\frac{c_1 - \delta n^2}{f_1} = \alpha_1 - \frac{\delta n^2}{f_1}.
\end{align*}
So setting,
$$
0 < \delta < \alpha_1 \frac{f_1}{n^2}
$$
suffices.
As each $\talpha_t = \frac{\tc(S_t)-\tc(S_{t-1})}{f(S_t)-f(S_{t-1})}$ corresponds to a contract for which the agent's best response change in the instance $(n,f,\tc)$, there are $2^n-1$ distinct critical values.
\end{proof}

\begin{proof}[Proof of \Cref{prop:monotone_submod_hf_hc}]
    \textbf{Monotonicity of $\hc$:} 
    Fix $S \subsetneq T \subseteq [n+1]$. 
    \begin{align*}
            \hc(T) - \hc(S) 
            &\ge 
            \hc(T \cap [n]) - \hc(S \cup \{n+1\}) 
            = \tc(T\cap [n]) - \tc(S\cap [n]) - \tc(n+1\mid S\cap[n]) \\
            &\ge \phi_{\tc} - \tc(n+1\mid S\cap[n]) 
            \ge z - z/2 
            > 0.
    \end{align*}
    The first inequality follows as the marginal cost of $n+1$ is non-negative.  
    
    \noindent\textbf{Submodularity of $\hc$:} 
    We will show that for any $S \subseteq [n+1]$ and any distinct $i,j \notin S$,
    \begin{equation}\label{eq:tcSubMod}
        \hc(S \cup \{i\}) + \hc(S \cup \{j\}) > \hc(S \cup \{i,j\}) + \hc(S)  
    \end{equation}

    \begin{itemize}
    \setlength\itemsep{-0.25em}
    \item If $n+1 \notin S \cup \{i,j\}$, then the claim follows from \Cref{prop:ctilde_basix_props}.

    \item If $i=n+1$ (and similarly for the case where $j=n+1$), by rearranging \Cref{eq:tcSubMod} one can see that it is equivalent to,
    $\hc(n+1 \mid S) \ge \hc(n+1 \mid S \cup \{j\})$. The claim follows by the fact that $\hc(n+1 \mid T)$ is weakly decreasing with the cardinality of $T$.

    \item If $n+1 \in S$, let $S' = S \setminus \{n+1\}$. 
    Then be rewriting $\hc(S \cup \{i\})$ as $\hc(n+1 \mid S' \cup \{i\}) + \hc(S'\cup\{i\})$, and similarly to the other expressions, we get that \Cref{eq:tcSubMod} is equivalent to
    \begin{align*}
    \hc(j \mid S') - \hc(j \mid S' \cup \{i\})
    = \tc(j \cup S') - \tc(S')
    - \tc(j \cup S' \cup \{i\}) &+ \tc(S' \cup \{i\}) \\
    \ge 
    \underbrace{\hc(n+1 \mid S'\cup\{i,j\}) - \hc(n+1 \mid S' \cup \{i\}) - \hc(n+1 \mid S' \cup \{j\}) }_{\le 0} &+
    \hc(n+1 \mid S')
    \end{align*}
    The left-hand side of the equation is lower bounded by $\psi_{\tc}$. 
    Since $\hc(n+1 \mid T)$ weakly decreases with the cardinality of $T$, $\hc(n+1 \mid S'\cup\{i,j\}) - \hc(n+1 \mid S' \cup \{i\}) \le 0$. Also, the marginal cost of $n+1$ is non-negative, so $\hc(n+1 \mid S' \cup \{j\}) \ge 0$. Thus, we can upper bound the right-hand side with
    $\hc(n+1 \mid S') \le z/2 < \psi_{\tc}$, which concludes the proof.
    \end{itemize}
    
    \noindent\textbf{Monotonicity of $\hf$:} 
    Fix $S \subsetneq T \subseteq [n+1]$. Observe that regardless of whether $n+1 \in S$, it holds that
    \begin{align*}
            \hf(T) - \hf(S) 
            &\ge \hf(T \cap [n]) - \hf(S \cup \{n+1\}) \\
            &= f(T\cap [n]) - f(S\cap [n]) - \hf(n+1\mid S\cap[n]) \\
            &\ge \phi_{f} - \hf(n+1\mid S\cap[n]) \\ 
            &\ge z - z/4 \\
            &> 0
    \end{align*}
    
    \textbf{Submodularity of $\hf$:}
    We will show that for any $S \subseteq [n+1]$ and any distinct $i,j \notin S$,
    \begin{equation}\label{eq:fSubMod}
        \hf(S \cup \{i\}) + \hf(S \cup \{j\}) \ge \hf(S \cup \{i,j\}) + \hf(S)  
    \end{equation}

    \begin{itemize}
    \setlength\itemsep{-0.25em}
    \item If $n+1 \notin S \cup \{i,j\}$, then the claim follows from \Cref{obs:equalrev_subModf_prop}.
    \item If $i=n+1$ (and similarly for the case where $j=n+1$), by rearranging \Cref{eq:fSubMod} one can see that it is equivalent to,
    $\hf(n+1 \mid S) \ge \hf(n+1 \mid S \cup \{j\})$. The claim follows by the fact that $\hf(n+1 \mid T)$ is weakly decreasing with the cardinality of $T$.
    \item If $n+1 \in S$, let $S' = S \setminus \{n+1\}$. 
    Then be rewriting $\hf(S \cup \{i\})$ as $\hf(n+1 \mid S' \cup \{i\}) + \hf(S'\cup\{i\})$, and similarly to the other expressions, we get that 
    \Cref{eq:fSubMod} is equivalent to
    $$
    f(j \mid S')
    - f(j \mid S' \cup \{i\})
    \ge 
     \hf(n+1 \mid S'\cup\{i,j\}) - \hf(n+1 \mid S' \cup \{i\}) + \hf(n+1 \mid S') - \hf(n+1 \mid S' \cup \{j\}) 
    $$
    The left-hand side of the equation is lower bounded by $\psi_{f}$. 
    Since $\hf(n+1 \mid T)$ weakly decreases with the cardinality of $T$, $\hf(n+1 \mid S'\cup\{i,j\}) - \hf(n+1 \mid S' \cup \{i\}) \le 0$, and we can upper bound the left hand side with
    \begin{align*}
    \hf(n+1 \mid S') - \hf(n+1 \mid S' \cup \{j\}) 
    \le \hf(n+1 \mid S') \le z/4 < \psi_{f},
    \end{align*}
    which concludes the proof. \qedhere
    \end{itemize}
\end{proof}

\section{An Equal Revenue Instance for Additive $f$, Supermodular $c$}\label{sec:eqRev_SupMod_c}

In this section we construct an equal revenue instance as per \Cref{thm:eqRev_SupMod_c}.
\begin{theorem}\label{thm:eqRev_SupMod_c}
For every $n \in \mathbb{N}$, there exists an equal-revenue optimal contract instance, $\langle n, \feq, \ceq \rangle$, such that $\feq:2^n \to \reals_+$ is additive and $\ceq:2^n \to \reals_+$ is supermodular.
\end{theorem}

In order to specify our construction, we associate every subset of actions $S \subseteq [n]$ with an index $t \in \{0,1,...,2^n-1\}$, which is the decimal value of the $n$-bit characteristic vector of $S$, that is, $S_0 = \emptyset, S_1=\{1\}, S_2=\{2\}, S_3=\{1,2\},S_4=
\{3\}, S_5=\{1,3\},\ldots,S_{2^n-1} = \{1,...,n\}$. 
We set the reward for each action $i \in [n]$, to be $f_i = 2^{i-1}$, and $f(\emptyset)=0$. Note that the reward of the set $S_t$ is exactly its index, $t$. Namely,
$$
f(S_t) = \sum_{i \in S_t} 2^{i-1} = \sum_{i=1}^{n} I[i\in S_t] \cdot 2^{i-1} = t.
$$
Moreover, the difference in rewards between two consecutive sets $S_t$ and $S_{t-1}$ is exactly $f(S_t) - f(S_{t-1}) = t - (t-1) = 1$.
Let $\alpha_t$ be the critical value that incentivizes $S_t$. 
An equal revenue instance must satisfy the following requirements for any $t \in \{1,\dots,2^n-1\}$:
\begin{itemize}
    \setlength\itemsep{-0.25em}
    \item The critical value $\alpha_t$ yields a revenue of 1, i.e., $(1-\alpha_t)f(S_t) = 1$, or 
    equivalently, $\alpha_{t} = \frac{t-1}{t}$.   
    \item By \Cref{obs:critValStructure}, we have $\alpha_{t} = \frac{c(S_{t})-c(S_{t-1})}{f(S_{t})-f(S_{t-1})}=c(S_{t})-c(S_{t-1})$. 
\end{itemize}
By combining the two requirements and denote $c_t=c(S_t)$, we get the following recurrence relation, with the initial condition that $c_0 = 0$:
\begin{equation}\label{eq:costs_RR}
c_{t} = c_{t-1} + \frac{t-1}{t}.
\end{equation}

\begin{observation}\label{obs:alphaMon_supermodular_c}
    For every $t \in \{0,...,2^n-2\}$, $0 \le \alpha_t < \alpha_{t+1} < 1$.
\end{observation}

\begin{lemma}\label{lem:cMonoSuperMod}
     The cost function $c$ is strictly monotone and strictly supermodular.
\end{lemma}
\begin{proof}
    Clearly $c$ is monotone as $\frac{t}{t+1} > 0$ for any $t>0$.
    Supermodularity is implied by the fact that the discrete second derivative is positive:
    $
    c_{t+1}-c_t - (c_t-c_{t-1})
    =
    \frac{t}{t+1} - \frac{t-1}{t} > 0
    $
\end{proof}

Next we show that all $\alpha_t$ are critical values, and yield the same principal utility of $1$.
\begin{proposition}\label{prop:CVs_supc_c}
    The set of critical values is $0 = \alpha_1 < \ldots < \alpha_{2^n-1} < 1$ and for any $t \in \{0, \ldots, 2^n-2\}$, $S_t$ is the agent's best response for any contract $\alpha \in [\alpha_t, \alpha_{t+1})$. Additionally, each of the critical values yields a utility of $1$ for the principal.
\end{proposition}
\begin{proof}
Observe that for $\alpha_1=0$, $S_1=\{1\}$ is the agent's best response, as any other set has positive costs or zero reward.
Assume that $S_t$ is the agent's best response at $\alpha_t$, for some $t\ge 0$. By \Cref{obs:critVals}, the agent's best response can only change to a set with a higher reward, i.e., $S_{t+k}$ for some $k\ge1$. 
By \Cref{obs:critValStructure}, the next critical value satisfies
\begin{align*}
    \alpha 
    &=  
    \min_{k \in \{1,\ldots,2^n-t-1\}} \frac{c_{t+k} - c_{t}}{f_{t+k}-f_{t}} 
    =
    \min_{k \in \{1,\ldots,2^n-t-1\}}
    \frac{c_{t+k} - c_{t}}{k} 
    =
    \min_{k \in \{1,\ldots,2^n-t-1\}}
    \frac{\sum_{i=0}^{k-1}c_{t+i+1}-c_{t+i}}{k} \\
    &=
    \min_{k \in \{1,\ldots,2^n-t-1\}}
    \frac{1}{k}\sum_{i=0}^{k-1} \frac{t+i}{t+i+1} 
    \ge
    \min_{k \in \{1,\ldots,2^n-t-1\}}
    \frac{1}{k}\cdot k \cdot \frac{t}{t+1} 
    =
    \alpha_{t+1}.
\end{align*}
Since by definition $\alpha_{t+1}=c(S_{t+1})-c(S_t)$ is the contract for which the agent is indifferent between taking $S_t$ and $S_{t+1}$, it is indeed a critical value. Thus, inductively, every $\alpha_t$ is a critical value.

The fact that for $t \ge 0$, every contract $\alpha_t$ yields the same utility for the principal follows immediately from the definition of $f_t$,
$u_p(\alpha_t) = (1-\alpha_t)f_t = 1$, which concludes the proof.
\end{proof}

\section{Query Complexity Hardness for Additive $f$, Supermodular $c$}\label{sec:demandHardness_SupMod_c}

In this section we prove a query complexity hardness result when $f$ is additive and $c$ is supermodular.
This result is analogous to the hardness result for additive $c$ and submodular $f$, given in \Cref{thm:DemandHardnessForPerturbed}.

\begin{theorem}\label{thm:expDemandQueries_SupMod_c}
    When $f$ is additive and $c$ is supermodular, any algorithm that computes the optimal contract requires exponentially-many supply queries to $c$.
\end{theorem}
Our proof technique is identical to the one given in \Cref{sec:demandHardness}, where we also provide more detail.

\subsection{From Equal Revenue to Value Query Hardness}\label{sec:eqToVal_SupMod_c}
\newcommand{\perturbedInstSupModC}{\langle n,\feq,\ceq_k \rangle}
\newcommand{\familyISupModC}{\I = \{ \perturbedInstSupModC \}_{k=1}^{2^n}}

Given an equal-revenue instance $\eqRevInst$, denote the collection of incentivizable sets with $\IS = \{S_t\}_{t=1}^{2^n-1}$, let $\eps>0$ be such that,
\begin{align}\label{eq:eps_0_SupMod_c}
0 < 
\eps < \min \Bigg\{ 
& \min_{\substack{S \subseteq T \subseteq [n], \\ i \in [n]\setminus T}} (\ceq(i \mid T) - \ceq(i \mid S)), \nonumber \\
& \min_{t \in \{1,\dots,2^n-1\}} \left(\ceq(S_{t})-\ceq(S_{t-1})\right),  \\
& \min_{t \in \{2,\dots,2^n-1\}} (\alpha_t-\alpha_{t-1})(\feq(S_t) - \feq(S_{t-1})) \nonumber
\Bigg\}.
\nonumber
\end{align}

\begin{claim}
    The right-hand side of \Cref{eq:eps_0_SupMod_c} is strictly positive.
\end{claim}
\begin{proof}
    The fact that the first term is positive is proved in \Cref{cla:eps_0_positive}. 
    The two other terms are positive by the fact that there $2^n-1$ distinct critical values, each incentivizes a different set.
    By \Cref{obs:critValStructure}, the costs and rewards of the incentivizable sets are strictly increasing.
\end{proof}

The following lemma is analogous to \Cref{lem:f_strictly_submodolar} and its proof is identical, except for a few necessary changes
\begin{lemma}\label{lem:c_eqrev_strictly_supermodolar}
    Let $\eqRevInst$ be an equal revenue instance with additive $\feq$ and supermodular $\ceq$, then $\ceq$ is strictly supermodular. 
\end{lemma}

We define a family of perturbed cost functions, where $\ceq_k$ is identical to $\ceq$ everywhere, except it gives an additive ``bonus" of $\eps$ to the set $S_k$. Namely,
$$
\ceq_k(S_t) = 
\begin{cases}
\ceq(S_t) - \eps & t=k, \\
\ceq(S_t) & t \ne k.
\end{cases}
$$
This completely defines the collection of instances $\familyISupModC$, that all share the same reward function $\feq$.

The following propositions are analogous to Propositions \ref{prop:feq_kSubmod} and \ref{prop:feq_k_CVs}. 
They establish that the perturbed instances keep two essential properties of the equal revenue instance. Namely, that (i) $\ceq_k$ is monotone and supermodular, and (ii) $\langle n, \feq, \ceq_k\rangle$ has $2^n-1$ critical values and $\alpha_k$ is the unique optimum. 
For readability, we defer their proofs to the end of this subsection.

\begin{proposition}\label{prop:ceq_kSupMod}
    Let $\ceq$ be non-negative, monotone and supermodular function, and let $\eps$ satisfy \Cref{eq:eps_0_SupMod_c}.
    Then for any $k$, $\ceq_k$ is non-negative monotone and supermodular.
\end{proposition}

\begin{proposition}\label{prop:ceq_k_CVs}
    For any $k$, let $\langle n, \feq, \ceq_k \rangle$ be a perturbed instance with $\eps$ satisfying \Cref{eq:eps_0_SupMod_c}.
    The it has $2^n-1$ distinct critical values $\{\alpha'_t\}_{t=1}^{2^n-1}$. Moreover, $\alpha'_k$ is the unique optimal contract.
\end{proposition}

\begin{proposition}\label{prop:valueQueryBound_SupMod_c}
Let $I^* = \langle n, \feq, \ceq^* \rangle \in \I = \{\langle n, \feq, \ceq_k \}_{k=1}^{2^n-1}$.
Any algorithm that is only given access to a value oracle for $c^*$, must perform an exponential number of value queries (in expectation) to find the optimal contract.
\end{proposition}
\begin{proof}
    By Yao's principle, it is enough to show that when we draw $\langle n, \feq, \ceq^* \rangle \in \I$ uniformly at random, any deterministic algorithm requires an exponential number of value queries in expectation over the choice of the instance.

    First, observe that every value query $\ceq^*(S_t)$ can lead to two different answers. If $\ceq^* = \ceq_{t}$, then $\ceq^*(S_t) = \ceq(S_t) - \eps$, otherwise $\ceq^*(S_t) = \ceq(S_t)$.
    
    Thus, without loss of generality, every deterministic algorithm makes a \emph{fixed} series of value queries and stops whenever the answer is $\ceq^*(S_t) = \ceq(S_t) - \eps$. If it stops beforehand, and makes less than $2^n-2$ value queries, there exist two sets $S_t,S_{t'}$ which the algorithm did not query and are both consistent with the oracle answers given. Thus, the algorithm cannot distinguish between the case in which $\alpha_t$ is optimal and the case in which $\alpha_{t'}$ is.
    It follows that any deterministic algorithm that makes $k$ value queries only stops for $k$ different choices of $\ceq^*$. 
    
    Fix a deterministic algorithm for the optimal contract. Since $\ceq^*$ is drawn uniformly at random, with probability $1/2$, the first $2^{n-1}$ queries, $S_{t_1}, \ldots ,S_{t_{2^{n}-1}}$, satisfy 
    $\ceq^*(S_{t_i}) = \ceq(S_{t_i})$, and the algorithm does not stop.
    Thus, the expected number of queries of any deterministic algorithm is at least $\frac{1}{2} \cdot 2^{n-1} = 2^{n-2}$.
\end{proof}

\begin{proof}[Proof of \Cref{prop:ceq_kSupMod}]
    Fix $k$. Non-negativity follows as $\eps > 0$. For supermodularity, observe that for any two sets $S \subseteq T \subseteq [n]$ and any action $ i \in [n]\setminus T$, it holds that
    \begin{align*}
    \ceq_k(i \mid T) - \ceq_k(i \mid S) \ge \ceq(i \mid T) - \ceq(i \mid S) - \eps > 0,
    \end{align*}
    where the first inequality follows from the definition of $\feq_k$ and last follows from \Cref{eq:eps_0_SupMod_c}.
    For monotonicity, observe that for every two sets of actions, $S \subsetneq T$, we have
    $$\ceq_k(T) - \ceq_k(S) \ge \ceq(T) - \ceq(S) - \eps > 0,$$
    where the last inequality follows from \Cref{eq:eps_0_SupMod_c}.
\end{proof}

\begin{proof}[Proof of \Cref{prop:ceq_k_CVs}]
    First observe that for any $t < k$ and any $t>k+1$, $\alpha'_t = \alpha_t$, as
    $$
    \alpha'_t 
    =
    \frac{\ceq(S_t) - \ceq(S_{t-1})}{\feq_k(S_t) - \feq_k(S_{t-1})} 
    =
    \frac{\ceq(S_t) - \ceq(S_{k-1})}{\feq(S_t) - \feq(S_{t-1})} 
    =
    \alpha_t.
    $$  
    For $t=k+1$ we have for any $\eps >0$,
    \begin{align*}
        \alpha'_{k+1}  - \alpha'_k
        &=
        \frac{\ceq_k(S_{k+1}) - \ceq_k(S_{k})}{\feq(S_{k+1}) - \feq(S_{k})} 
        - 
        \frac{\ceq_k(S_{k}) - \ceq_k(S_{k-1})}{\feq(S_{k}) - \feq(S_{k-1})} \\
        &=
        \frac{\ceq(S_{k+1}) - \ceq(S_{k}) + \eps}{\feq(S_{k+1}) - \feq(S_{k})} 
        - 
        \frac{\ceq(S_{k}) - \eps - \ceq(S_{k-1})}{\feq(S_{k}) - \feq(S_{k-1})} \\
        &>
        \frac{\ceq(S_{k+1}) - \ceq(S_{k})}{\feq(S_{k+1}) - \feq(S_{k})} 
        - 
        \frac{\ceq(S_{k}) - \ceq(S_{k-1})}{\feq(S_{k}) - \feq(S_{k-1})} \\
        &=
        \alpha_{k+1}  - \alpha_k \\
        &>
        0.
    \end{align*}
    For $t=k$ we have,
    \begin{align*}
        \alpha'_k  - \alpha'_{k-1}
        &=
        \frac{\ceq_k(S_k) - \ceq_k(S_{k-1})}{\feq(S_k) - \feq(S_{k-1})} 
        - 
        \frac{\ceq_k(S_{k-1}) - \ceq_k(S_{k-2})}{\feq(S_{k-1}) - \feq(S_{k-2})} \\
        &=
        \frac{\ceq(S_k) - \eps - \ceq(S_{k-1})}{\feq(S_k) - \feq(S_{k-1})} 
        - 
        \frac{\ceq(S_{k-1}) - \ceq(S_{k-2})}{\feq(S_{k-1}) - \feq(S_{k-2})} \\
        &=
        \alpha_k - \alpha_{k-1}
        -\frac{\eps}{\feq(S_k) - \feq(S_{k-1})} \\
        &> 0
    \end{align*}
    Where the last inequality follows from \Cref{eq:eps_0_SupMod_c}
    We conclude that there are $2^n-1$ distinct critical values, and $\alpha'_t$ incentivizes $S_t$.

    We now proceed to show that $\alpha'_k$ is the unique optimal contract for the instance $\langle n, \feq, \ceq_k \rangle$.
    As for any $t \ne k, k+1$, it holds that $\alpha_t = \alpha'_{t}$, we get that 
    $$
    u_p(\alpha'_t) = (1-\alpha'_t)\feq(S_t) =(1-\alpha_t)\feq(S_t) = 1.
    $$
    For $t = k+1$ we have,
    \begin{align*}
    u_p(\alpha'_{k+1})
    &=
    \left(1-\alpha'_{k+1})\feq(S_{k+1}\right) \\
    &=
    \left(1-\frac{\ceq_k(S_{k+1}) - \ceq_k(S_{k})}{\feq(S_{k+1}) - \feq(S_{k})}\right)\feq(S_{k+1}) \\
    &=
    \left(1-\frac{\ceq(S_{k+1}) - \ceq(S_{k}) + \eps}{\feq(S_{k+1}) - \feq(S_{k})}\right)\feq(S_{k+1}) \\
    &=
    \left(1-\alpha_{k+1} - \frac{\eps}{\feq(S_{k+1}) - \feq(S_{k})}\right)\feq(S_{k+1}) \\
    &<
    (1-\alpha_{k+1})\feq(S_{k+1}) \\
    &=
    1.
    \end{align*}
    For $t = k$,
    \begin{align*}
    u_p(\alpha'_{k})
    &=
    \left(1-\alpha'_{k})\feq(S_{k}\right) \\
    &=
    \left(1-\frac{\ceq_k(S_{k}) - \ceq_k(S_{k-1})}{\feq(S_{k}) - \feq(S_{k-1})}\right)\feq(S_{k}) \\
    &=
    \left(1-\frac{\ceq(S_{k}) - \eps - \ceq(S_{k-1})}{\feq(S_{k}) - \feq(S_{k-1})}\right)\feq(S_{k}) \\
    &=
    (1-\alpha_k)\feq(S_k) + \frac{\eps \cdot \feq(S_{k})}{\feq(S_{k}) - \feq(S_{k-1})} \\
    &>
    1.
    \end{align*}   
    This completes the proof.
\end{proof}

\subsection{Extending Hardness to Supply Queries}\label{sec:SparsetoDemand_SupMod_c}

We show how to extend our hardness result to the supply query model. The argument follows the footsteps of \Cref{sec:SparsetoDemand} and goes through the notion of \textit{sparse supply}, analogous to sparse demand (see \Cref{def:sparseDemand}).

\begin{definition}\label{def:sparseSupply}[$\sigma$-Sparse Supply]
For $\sigma >0$, we say that $c: 2^{[n]} \rightarrow \reals_+$ has $\sigma$-sparse supply for every price vector $p$, if $|\supSigP|$ is polynomial in $n$, where $\supSigP$ is the collection of sets of actions that maximize the quasi-linear utility, w.r.t. $f$ and $p$, up to an additive factor of $\sigma$:
$$
\supSigP = \left\{ S\subseteq [n] ~\middle|~ \max_{T\subseteq [n]} \left(\sum_{i\in T} p_i - c(T)\right) - \left(\sum_{i\in S} p_i - c(S) \right) \le \sigma \right\}.
$$
We refer to $\supSigP$ as the $\sigma$-approximate supply. 
Observe that for any $0 < \sigma' \le \sigma$, $\sigma$-sparse supply implies $\sigma'$-sparse supply, as $\supply{\sigma'}{p} \subseteq \supSigP$.
\end{definition}

The theorem below is the analog for \Cref{thm:DemandHardnessForPerturbed}, and the proof is essentially identical.
\begin{theorem}\label{thm:SupplyHardnessForPerturbed}
    Let $\eqRevInst$ be an equal revenue instance such that $\ceq$ has sparse supply. Let $\familyISupModC$ be the perturbed family of instances as defined in \Cref{sec:eqToVal_SupMod_c}.
    Any algorithm that finds the optimal contract for any instance in $\familyISupModC$, must make  exponentially-many supply queries in expectation.
\end{theorem}

\begin{proof}
    Let $\eqRevInst$ be an equal revenue instance where $\ceq$ is supermodular and has sparse supply.
    Recall the family of perturbed instances, $\I = \{\langle n,\feq,\ceq_k\rangle\}_{k=1}^{2^n-1}$ which is defined in \ref{sec:eqToVal_SupMod_c}. 
    Namely, $\ceq_k(S_t) = \ceq(S_t)$ for any $t \ne k$ and for $t=k$, $\ceq_k(S_k) = \ceq(S_k) - \eps$, for some $\eps>0$ which satisfies \Cref{eq:eps_0_SupMod_c} and also $\eps \le \sigma$.
    
    Consider a specialized algorithm $\A$, which finds the optimal contract for all instances in $\I$ and is augmented with complete knowledge of $\ceq$, the reward function of the ``original" equal revenue instance.
    Clearly, showing that $\A$ requires exponentially many supply queries in expectation suffice.
    
    Let $\langle n,\feq,\ceq^\star\rangle \in \I$, be a perturbed instance drawn uniformly at random.
    First, observe that $\A$ can compute a supply query for $\ceq^\star$ using poly-many value queries (and perhaps exponentially-many computational steps):
    For any price vector $p$, if $S_p$ should be returned by the supply oracle, then 
    $\sum_{i \in S^\star}p_i - \ceq^\star(S_p) \ge \sum_{i \in S} p_i - \ceq^\star(S)$ for any $S \subseteq [n]$, which implies that,
    $$
    \sum_{i \in S^\star} p_i - \ceq(S_p) + \eps
    \ge
    \sum_{i \in S^\star} p_i - \ceq^\star(S_p)
    \ge
    \sum_{i \in S} p_i - \ceq^\star(S)
    \ge
    \sum_{i \in S} p_i - \ceq(S)
    $$
    and as such, $S_p \in \supply{\eps}{p}$. 
    Thus, for any price vector $p$, every set in the which maximize the supply with respect to $\ceq^\star$ must also belong to $\supply{\eps}{p}$ with respect to $\ceq$.
    
    Because the approximate supply $\supply{\eps}{p}$ does not depend on the identity of $\ceq^\star$, it can be computed without any value queries -- but perhaps using a super-polynomial number of computational steps.
    Thus, to solve the supply query we can pick the best set in the collection $\supply{\eps}{p}$ using $|\supply{\eps}{p}|\le |\supSigP|$ value queries. As $\ceq$ has $\sigma$-sparse demand, the number of queries is polynomial.
    Thus, if $\A$ is able to find the optimal contract with poly-many supply queries, it can also do it with poly-many value queries, due to the above argument. This will contradict \Cref{prop:valueQueryBound_SupMod_c}, and the claim follows.
\end{proof}

\subsection{An Equal Revenue Instance with Sparse Supply}\label{sec:eqRevWithSparse_SupMod_c}

In this section we complete the proof of \Cref{thm:expDemandQueries_SupMod_c}, by showing that the equal-revenue instance described has sparse supply. 
The argument is mostly identical to the one given in \Cref{sec:eqRevWithSparse}, but slight changes are required in the proof of the following lemma, which is analogous to \Cref{lem:RLintervals}.
 \begin{lemma}\label{lem:RLintervals_SupMod_c}
    For any price vector $p \in \reals_{+}^n$, any 
    $0 < \sigma < \min_{l < h} \frac{1}{2}(\alpha_h - \alpha_l)$,
     any action $i \in [n]$ and any $S_t \in \supSigP$, if $t > r_{i,p}$, then $i \notin S_t$. If $t < l_{i,p}$, then $i \in S_t$.
\end{lemma}
\begin{proof}
    The first claim is immediate from the definition of $r_{i,p}$.
    To prove the second claim we show that there exists $\sigma > 0$ small enough such that for any price vector $p \in \reals_{+}^n$ and any two sets $S_t,S_{t'} \subseteq [n]$ such that $i \in S_{t'} \setminus S_t$ and $t < t' - 2^i$, it holds that $S_t$ and $S_{t'}$ cannot simultaneously be approximately optimal (i.e., $\{S_t,S_{t'}\} \not\subseteq \supSigP$).
    Since by definition $S_{r_{i,p}} \in \supSigP$, replacing $t'$ with $r_{i,p}$, proves the claim for any $t < r_{i,p} - 2^i = l_{i,p}$.

    Consider some $\sigma > 0$ and fix $p, S_t, S_{t'}$, and $i$ as above. 
    Since $S_{t'} \setminus \{i\}$ is the agent's best response for contract $\alpha_{S_{t'}\setminus \{i\}}$, it holds that
    $$
    \alpha_{S_{t'}\setminus\{i\}} \cdot 2^{i-1}
    =
    \alpha_{S_{t'}\setminus\{i\}} \cdot f(i \mid S_{t'} \setminus\{i\}) < c_i
    $$
    So if $p_i < \alpha_{S_{t'}\setminus\{i\}} \cdot 2^{i-1}  - \sigma$ we get that $S_{t'} \notin \supSigP$ and the claim holds.
    
    Similarly, $\alpha_{S_t \cup \{i\}} \cdot 2^{i-1}=\alpha_{S_t \cup \{i\}} f(i \mid S_t) \ge c_i$, and if $p_i > \alpha_{S_t \cup \{i\}} \cdot 2^{i-1} + \sigma$ we have that $S_t \notin \supSigP$ and the claim holds.

    Thus, it suffices to show that
    $$
    \alpha_{S_{t'}\setminus\{i\}} \cdot 2^{i-1}  - \sigma
    \ge
    \alpha_{S_t \cup \{i\}} \cdot 2^{i-1} + \sigma
    $$
    First, observe that by assumption $t' - 2^{i-1} > t + 2^{i-1}$, so the index of the set $S_{t'} \setminus \{i\}$ is greater than the index of the set $S_t \cup \{i\}$.    
    Then, by \Cref{lem:alphaMon} it holds that $\alpha_{S_{t'}\setminus \{i\}} > \alpha_{S_t\cup \{i\}}$. 
    By our choice of $\sigma$,
    \begin{align*}
        \sigma 
        &<
        \min_{l < h} \frac{1}{2}(\alpha_h - \alpha_l)
        \le 
        \frac{1}{2}(\alpha_{S_{t'}\setminus \{i\}} - \alpha_{S_{t}\cup \{i\}}) \le 2^{i-2}(\alpha_{S_{t'}\setminus \{i\}} - \alpha_{S_{t}\cup \{i\}}),
    \end{align*}
    and the claim follows.
\end{proof}

The rest of the argument is identical to the one presented in \Cref{sec:eqRevWithSparse}: using \Cref{lem:RLintervals_SupMod_c} one can use the notion of minimal ambiguous action (see \Cref{def:minAmgAction}) to establish an upper bound of $O(n^2)$ on the size of the approximate-supply.

\section{Communication Complexity Hardness for Submodular $f$ and Supermodular $c$}\label{sec:cc_lower_bound_submod_f_supmod_c}
In this section we show the communication complexity lower bound for submodular $f$ and supermodular $c$, analogous to the result of \Cref{sec:CC_submod_fc} for submodular $f$ and $c$.

\begin{theorem}[CC lower bound, submodular $f$ and supermodular $c$]\label{thm:cc_submod_f_supmod_c}
    When $f$ is submodular and $c$ is supermodular, any protocol which computes the optimal contract and makes $poly(n)$ best-response queries, requires $\Omega(2^n/\sqrt{n})$ bits of communication.
\end{theorem}

The argument is very similar to the one given in \Cref{subsec:cc_lower_bound_submod_fc}. We again start from an equal revenue instance with submodular $f$ and additive $c$, per \Cref{sec:eqRev_SupMod_c}.
We perturb the costs to create a \emph{supermodular} cost function $\tc$, and then adding another action whose marginal rewards and costs are parametrized by two binary vectors of length $\binom{n/2}{n}$, indicating ``special'' sets. 
Although the proof is similar, some details must be adjusted. In particular, the marginal reward and cost of action $n+1$ require a more careful handling than in the previous construction.

We describe how to perturb the costs of the equal revenue instance so that the new instance $\InstFTildeI$ has submodular rewards and supermodular costs, $2^n-1$ critical values and sparse best-response.
As in the analogous sections, we augment the instance with an additional action whose marginal depend on two vectors $x_f,x_c \in \{0,1\}^{\binom{n}{n/2}}$ such that (i) rewards and costs are supermodular, and (ii) if Alice and Bob know $\tI$, they can compute a best-response query using $poly(n)$ communication.

\subsection{Making $c$ Strictly Supermodular}
The perturbed cost function is defined as follows, $\tc(S)=c(S) + \delta \cdot |S|^2$, where $\delta$ satisfies
\begin{align}\label{eq:eps_for_tc_supermod}
0 < 
\delta < \min \Bigg\{ 
& 
\frac{1}{n^2}(1-\alpha_{2^n-1})(f(S_{2^n-1})-f(S_{2^n-2})), \\
& \min_{t \in \{1,\dots,2^n-2\}} (\alpha_{t+1}-\alpha_t)\cdot \frac{1}{n^2} \cdot \left(\frac{(f(S_t)-f(S_{t-1}))(f(S_{t+1})-f(S_t))}{(f(S_{t+1})-f(S_{t-1}))}\right) \nonumber \bigg\}
\end{align}

The next two propositions establish the basic properties of the cost function $\tc$. Namely, that for a proper choice of $\delta$, $\tc$ is non-negative, monotone, supermodular, and additionally, the instance $\inst{n}{f}{\tc}$ has $2^n-1$ critical values.

\begin{proposition}\label{prop:tc_nonNeg_monotone_supMod}
For any $\delta > 0$, $\tc$ is non-negative, strictly monotone and strictly supermodular.
\end{proposition}

\begin{proof}
\textbf{Non-negativity:} Follows immediately from the non-negativity of $c$ and the fact that $\delta >0$.
\textbf{Monotonicity:}
For any $S \subsetneq T$,
$$
\tc(T) - \tc(S) = c(T)-c(S) + \delta(|T|^2-|S|^2) > 0
$$
\textbf{Supermodularity:}
For any $S \subsetneq T$ and $i \notin T$,
\begin{align*}
\tc(i \mid T) - \tc(i \mid S)
&=
c(i \mid T) + \delta((|T|+1)^2-|T|^2) - c(i \mid S) - \delta((|S|+1)^2-|S|^2) +  \\
&\ge 
\delta 
[(|T|+1)^2-|T|^2) - (|S|+1)^2-|S|^2)] \\
&>0,  
\end{align*}
where the last inequality follows from the strict convexity of $x^2$.
\end{proof}

\begin{proposition}\label{prop:exp_many_cvs_maintained_SupMod_c}
Let $\delta > 0$ be such that \Cref{eq:eps_for_tc_supermod} holds, then, $\inst{n}{f}{\tc}$ has $2^n-1$ critical values, $\{\talpha_t\}_{t=1}^{2^n-1}$.
\end{proposition}

\begin{proof}
First, observe that
$\talpha_{1} = \frac{\tc_{1} - 0}{f_1 - 0} =\frac{c_1 + \delta n^2}{f_1} > \alpha_1 > 0$.

The same analysis as in the proof of \Cref{prop:exp_many_cvs_maintained} gives that 
picking
$$
0 < \delta < (\alpha_{t+1}-\alpha_t)\cdot \frac{1}{n^2} \cdot \left(\frac{(f_t-f_{t-1})(f_{t+1}-f_t)}{(f_{t+1}-f_{t-1})}\right)
$$
is sufficient to guarantee that $\talpha_{t+1} > \talpha_t$, for any $t\in \{2,\dots,2^n-2\}$.
Similarly, picking 
$$
0 < \delta < \frac{(1-\alpha_{2^n-1})(f_{2^n-1}-f_{2^n-2})}{n^2}
$$
ensures that $\talpha_{2^n-1} <1$.
The claim follows as each $\talpha_t$ corresponds to a contract for which the agent change its preferred set of actions.
\end{proof}

\subsubsection{Sparse Best-Response}
\begin{proposition}\label{prop:sparse_demand_submod_f_sup_c}
    If the instance $\instanceI$ has $\sigma$-sparse demand for some $\sigma > 0$, then for any $0 < \delta < \frac{\sigma}{2n^2}$, the perturbed instance $\InstTildeI$ has $(\sigma/2)$-sparse best-response.
\end{proposition}
\begin{proof}
\newcommand{\sa}{S_\alpha}
\newcommand{\tsa}{\tilde{S}_\alpha}
    
    Fix a contract $\alpha$ and some $0 < \delta < \frac{\sigma}{2n^2}$. We show that 
    $\BR{\sigma/2}{\alpha}{\tI} \subseteq \BRI{\sigma/2+n^2\delta}{\alpha} \subseteq \BRI{\sigma}{\alpha}$, which implies the claim.
    Let $\sa$ be the set of actions which maximizes $\alpha f(S) - c(S)$ and let $\tsa$ be the set of which maximizes $\alpha f(S) - \tc(S)$.
    Observe that for any $S \in \BR{\sigma/2}{\alpha}{\tI}$
    it holds that,
    \begin{align*}
        \alpha f(S) - \tc(S) 
        &\ge \alpha f(\tsa) - \tc(\tsa) - \sigma/2 \\
        &\ge \alpha f(\sa) - \tc(\sa) - \sigma/2 \\
        &\ge \alpha f(\sa) - c(\sa) -\delta n^2 - \sigma/2,
    \end{align*}
    where the last inequality follows from the fact that $\tc(\sa) = c(\sa) + \delta|\sa|^2$.
    On the other hand, 
    $$\alpha f(S) - \tc(S) = \alpha f(S) - c(S) - \delta|S|^2 \le \alpha f(S) - c(S),$$ 
    we conclude that 
    \[
    \alpha f(S) - c(S) \ge \alpha f(\sa) - c(\sa) - (\sigma/2 + \delta n^2),
    \]
    which implies that $S \in \BRI{\sigma/2+n^2\delta}{\alpha}$.
    By our choice of $\delta$, it holds that $\sigma/2+n^2\delta \le \sigma$, and thus $\BRI{\sigma/2+n^2\delta}{\alpha} \subseteq \BRI{\sigma}{\alpha}$, which concludes the proof.
\end{proof}

\subsection{Adding the ($n+1$)th action}

As in \Cref{subsec:cc_lower_bound_submod_fc}, we 
define the marginal rewards and costs of the ($n+1$)th action, to construct a hard instance.
We define $z = \min \{\phi_{\tc},\psi_{\tc},\phi_f,\psi_f, \zeta, \sigma \}$, where the definitions are as in \Cref{subsec:cc_lower_bound_submod_fc}, except for 
$\psi_{\tc} = \min_{S\subseteq T, i \notin T} \tc(i \mid T)-\tc(i \mid S) > 0$, which is adjusted to the supermodular case.

Observe that the same analysis as in \Cref{prop:eps_for_revenue_bound} still holds, thus we can use the same bounds on the revenue generated by any critical value in the instance $\inst{n}{f}{\tc}$

\begin{proposition}\label{prop:eps_for_revenue_bound_supMod_c}
For any $0 < z < \zeta = \frac{\delta \cdot (1-\alpha_{2^n-1}) \phi_f}{16\cdot n^2 \cdot f_{2^n-1}}$, for any $t \in [2^n-1]$, 
$$
(1-\talpha_t)f(S_t) \in \left[1 - \frac{z(1-\alpha_{2^n-1})}{16}, 1 + \frac{z(1-\alpha_{2^n-1})}{16} \right]
$$
\end{proposition}

For any $t$, if $|S_t| < \frac{n}{2}$, let $h(t)$ be the minimal index such that $S_t \subseteq S_{h(t)}$ and $|S_{h(t)}|=\frac{n}{2}$. Otherwise, $h(t)=t$.

The marginal rewards and costs of the $n+1$ action are as follows. Note that in the definition we use $\talpha_t$'s --  the critical values of the instance $\inst{n}{f}{\tc}$, as in \Cref{prop:exp_many_cvs_maintained_SupMod_c}.

$$
\hf(n+1 \mid S_t) = 
\begin{cases}
    z/4 & |S_t| < \frac{n}{2} \\
    z/4 & |S_t| = \frac{n}{2} \land S_t \in x_f \\
    0 & |S_t| = \frac{n}{2} \land S_t \notin x_f \\
    0 & |S_t| > \frac{n}{2}
\end{cases}
\qquad 
\hc(n+1 \mid S_t) = 
\begin{cases}
    \talpha_{h(t)} \cdot z/4 & |S_t| < \frac{n}{2} \\
    \talpha_{t} \cdot z/4 & |S_t| = \frac{n}{2} \land S_t \in x_c \\
    z/2 & |S_t| = \frac{n}{2} \land S_t \notin x_c \\
    z/2 & |S_t| > \frac{n}{2}
\end{cases}.
$$

\begin{claim}\label{cla:n+1_supermod}
    Let $S \subseteq T \subseteq [n]$, then $\hc(n+1 \mid S) \le \hc (n+1 \mid T)$.
\end{claim}
\begin{proof}
    Denote $S_t = S$ and $S_{t'} = T$, and observe that $t' >t$.
    \begin{itemize}
        \setlength\itemsep{-0.25em}
        \item If $|S_{t'}| > n/2$ or $|S_{t'}| = n/2$ and $S_{t'} \notin x_c$, then $\hc(n+1 \mid S_{t'}) = z/2 \ge \hc(n+1 \mid S_t)$.
        \item If $|S_{t'}|=n/2$ and $S_{t'} \in x_c$, 
        observe that  $|S_t| < n/2$, and that by definition $h(t) \le t'$. Thus,
        $\hc(n+1 \mid S_{t'}) = \talpha_{t'}\cdot z/4 \ge \talpha_{h(t)} \cdot z/4 \ge \hc(n+1 \mid S_t)$.
        \item If $|S_{t'}|<n/2$,
        because any set that contains $S_{t'}$ also contains $S_t$, we have that $h(t) \le h(t')$, and,
        $\hc(n+1 \mid S_{t'}) = \talpha_{h(t')}\cdot z/4 \ge \talpha_{h(t)} \cdot z/4 \ge \hc(n+1 \mid S_t)$. \qedhere
    \end{itemize}
\end{proof}

\begin{proposition}\label{prop:hc_monotone_supMod}
    $\hc$ is monotone and supermodular.
\end{proposition}

\begin{proof}
Observe that $\hf$ is identical to the one presented in \Cref{subsec:cc_lower_bound_submod_fc}, thus its monotonicity and submodularity follow from \Cref{prop:monotone_submod_hf_hc}. It remains to show that $\hc$ satisfies the above properties.
    \textbf{Monotonicity:} 
    Fix $S \subsetneq T \subseteq [n+1]$. 
    \begin{align*}
            \hc(T) - \hc(S) 
            &\ge \hc(T \cap [n]) - \hc(S \cup \{n+1\}) 
            = \tc(T\cap [n]) - \tc(S\cap [n]) - \hc(n+1\mid S\cap[n]) \\
            &\ge \phi_{\tc} - \hc(n+1\mid S\cap[n]) 
            \ge z - z/2 
            > 0
    \end{align*}
    First inequality follows as the marginal cost of $n+1$ is non-negative.
    
    \textbf{Supermodularity:} 
    We will show that for any $S \subseteq [n+1]$ and any $i,j \notin S$, where $i \ne j$, it holds that 
    \begin{equation}\label{eq:tcSupMod}
        \hc(S \cup \{i\}) + \hc(S \cup \{j\}) \le \hc(S \cup \{i,j\}) + \hc(S)  
    \end{equation}

    \begin{itemize}
    \setlength\itemsep{-0.25em}
    \item If $n+1 \notin S \cup \{i,j\}$, then the claim follows from \Cref{prop:tc_nonNeg_monotone_supMod}.

    \item If $i=n+1$ (and similarly for the case where $j=n+1$), by rearranging \Cref{eq:tcSupMod} one can see that it is equivalent to,
    $\hc(n+1 \mid S) \le \hc(n+1 \mid S \cup \{j\})$, and the claim follows from \Cref{cla:n+1_supermod}.

    \item If $n+1 \in S$, let $S' = S \setminus \{n+1\}$. 
    Then be rewriting $\hc(S \cup \{i\})$ as $\hc(n+1 \mid S' \cup \{i\}) + \hc(S'\cup\{i\})$, and similarly to the other expressions, we get that 
    \Cref{eq:tcSupMod} is equivalent to
    \begin{align*}
    \hc(j \mid S' \cup \{i\}) - \hc(j \mid S') 
    = \tc(j \cup S') - \tc(S')
    - \tc(j \cup S' \cup \{i\}) &+ \tc(S' \cup \{i\}) \\
    \ge 
    \underbrace{\hc(n+1 \mid S' \cup \{i\})-\hc(n+1 \mid S'\cup\{i,j\})}_{\le 0 \;(\Cref{cla:n+1_supermod})} +\hc(n+1 \mid S' \cup \{j\}) &-
    \hc(n+1 \mid S')
    \end{align*}
    The left-hand side of the equation is lower bounded by $\psi_{\tc}$. 
    Since $\hc(n+1 \mid T)$ weakly decreases with the cardinality of $T$, $\hc(n+1 \mid S'\cup\{i,j\}) - \hc(n+1 \mid S' \cup \{i\}) \le 0$. Also, the marginal cost of $n+1$ is non-negative, so $\hc(n+1 \mid S' \cup \{j\}) \le 0$. Thus, we can upper bound the right-hand side with
    $\hc(n+1 \mid S') \le z/2 < \psi_{\tc}$, which concludes the proof.
    \end{itemize}
\end{proof}

\begin{observation}
    If a set of the form $S_t \cup \{n+1\}$, for some $S_t \subseteq [n]$, can be incentivized by the principal, then $|S_t| = n/2$ and $S_t \in x_c \cap x_f$.
\end{observation}
\begin{proof}
    If $|S_t| > n/2$ or $|S_t| = n/2$ and $S_t \notin x_c \cap x_f$, then clearly the $n+1$ action yields negative utility to the agent for any contract $\alpha \in [0,1]$, and cannot be incentivized.

    If $|S_t|<n/2$, then it order for $S_t \cup \{n+1\}$ to be picked by the agent for some contract $\alpha$, it must yield more utility than $S_t$, i.e., it most hols that
    $\alpha \ge \frac{\hc(\{n+1\} \mid S_t)}{\hf(\{n+1\} \mid S_t)} = \talpha_{h(t)}$.

    We next show that for any $\alpha \ge \talpha_{h(t)}$, $S_{h(t)}$ yields more utility to the agent than $S_t \cup \{n+1\}$. Indeed,
    \begin{align*}
    \alpha
    f(S_t \cup \{n+1\})-c(S_t \cup \{n+1\}) 
    &=
    \talpha_{h(t)} 
    f(S_t \cup \{n+1\})-c(S_t \cup \{n+1\}) + (\alpha-\talpha_{h(t)})f(S_t \cup \{n+1\})\\
    &=
    \talpha_{h(t)} f(S_t) + \frac{z \cdot \talpha_{h(t)}}{4} - c(S_t) - \frac{z\cdot \talpha_{h(t)}}{4} + (\alpha-\talpha_{h(t)})f(S_t \cup \{n+1\})\\
    &=
    \talpha_{h(t)} f(S_t)-c(S_t) + (\alpha-\talpha_{h(t)})(f(S_t) + z/4)\\
    &<
    \talpha_{h(t)} f(S_{h(t)})-c(S_{h(t)}) + (\alpha-\talpha_{h(t)})f(S_{t+1}), \\
    &\le
    \talpha_{h(t)} f(S_{h(t)})-c(S_{h(t)}) + (\alpha-\talpha_{h(t)})f(S_{h(t)}), \\
    &=
    \alpha f(S_{h(t)})-c(S_{h(t)}).
    \end{align*}
    The first inequality follows from the fact that $0 < z < \phi_{f}$ and $S_{h(t)}$ is the agent's best-response for $\talpha_{h(t)}$. 
    The second inequality holds by the fact that $h(t) \ge t+1$ whenever $|S_t|<n/2$.
    Since the agent breaks ties in favor of the set with the larger reward, $S_{h(t)}$ is preferred over $S_t$ whenever $\alpha \ge \talpha_{h(t)}$, which implies the claim.
\end{proof}

\begin{proposition}\label{prop:n+1_optimal_sub_f_sup_c}
    Let $S_t$ be the set with the minimal index such that $S_t \in x_c \cap x_f$. 
    The contract $\halpha_t = \frac{\hc(S_t)-\hc(S_{t-1})}{\hf(S_t)-\hf(S_{t-1})}$ incentivizes the set $S_t \cup \{n+1\}$ and
    the principal's revenue from $\halpha_t$, $u_p(\halpha_t,S_t\cup\{n+1\})$, exceeds the revenue from any incentivizable set of action of the form $S_{t'} \subseteq [n]$.
\end{proposition}
The proof of \Cref{prop:n+1_optimal_sub_f_sup_c} is identical to the proof of \Cref{prop:n+1_is_optimal}, except it uses the bounds established in \Cref{prop:exp_many_cvs_maintained_SupMod_c} and \Cref{prop:eps_for_revenue_bound_supMod_c}.

We obtain the following corollary
\begin{corollary}\label{cor:opt_set_if_non_empty_intersection_sub-sup}
    If $x_f \cap x_c$ is non-empty, then the optimal contract incentivizes some set $S_t \cup \{n+1\}$, where $S_t \subseteq [n]$ and $S_t \in x_f \cap x_c$. 
\end{corollary}

Akin to \Cref{prop:BR_in_approxDemand_f_c_submod}, If $S_\alpha$ is a best-response in $\InstHatI$, then $S_\alpha \setminus \{n+1\}$ is approximately a best-response in $\InstTildeI$. The proof is identical.
\begin{proposition}\label{prop:BR_in_approxDemand_sub-sup}
    Let $S_\alpha$ be the agent's best response for contract $\alpha$ in the instance $\InstHatI$, then 
    $S_\alpha \setminus\{n+1\} \in \BR{\sigma/2}{\alpha}{\tI}$.
\end{proposition}
\Cref{cor:opt_set_if_non_empty_intersection_sub-sup} and \Cref{prop:BR_in_approxDemand_sub-sup} imply \Cref{thm:cc_submod_f_supmod_c}, and the arguments are identical to the ones presented in the proof of  \Cref{thm:cc_submod_f_c} in \Cref{sec:CC_submod_fc}.

\section{Communication Complexity Hardness for Supermodular $f$ and Supermodular $c$}\label{sec:cc_lower_bound_supmod_f_c}

In this section, we prove communication complexity lower bound when both $f$ and $c$ are supermodular.
The arguments are analogous to the ones presented in \Cref{sec:CC_submod_fc}, but some of the technical details differ. In particular, our construction uses the equal-revenue instance in \Cref{sec:eqRev_SupMod_c}, for additive $f$ and supermodular $c$, as a starting point. 
For completeness, we fully describe the construction required to achieve this result. Namely, we show how to:
\begin{enumerate}
    \item Perturb the rewards of the equal revenue instance so that the new instance $\InstFTildeI$ has supermodular costs and rewards, $2^n-1$ critical values and sparse best-response.
    \item Augment the instance with an additional action whose marginals depend on two vectors $x_f,x_c \in \{0,1\}^{\binom{n}{n/2}}$ such that (i) rewards and costs are supermodular, and (ii) if Alice and Bob know $\tI$, they can compute a best-response query using $poly(n)$ communication.
\end{enumerate}
Once we establish these properties, the remaining arguments are identical to the ones for the case of submodular costs and rewards, presented in \Cref{sec:eqRev_SupMod_c}.
    
\begin{theorem}[CC lower bound, supermodular $f$ and $c$]\label{thm:cc_supmod_f_c}
    When $f$ and $c$ are supermodular, any protocol which computes the optimal contract and makes $poly(n)$ best-response queries, requires $\Omega(2^n/\sqrt{n})$ bits of communication.
\end{theorem}

The proposition below follows immediately from \Cref{obs:critValStructure} and \Cref{lem:c_eqrev_strictly_supermodolar}.
\begin{observation}\label{obs:equalrev_addf_prop}
Let $\inst{n}{f}{c}$ be an equal-revenue instance with additive $f$ and supermodular $c$. Then $f$ is strictly monotone and $c$ is strictly monotone and strictly supermodular.
\end{observation}
    
Denote the set of incentivizable sets with $\IS = \{S_1,\dots,S_{2^n-1}\}$. Recall that $c(S_{t-1})<c(S_t)$ and $f(S_{t-1})<f(S_t)$, for any $t$.
We denote $S_0 = \emptyset$ and assume $c$ and $f$ are normalized such that $c(\emptyset)=f(\emptyset)=0$.
We perturbed the rewards to create a supermodular function $\tf$. The perturbations are defined with respect to 
$\phi_f = \min_{t} f(S_{t+1})-f(S_{t}) > 0$ and
$\phi_c = \min_{t} c(S_{t+1})-c(S_{t}) > 0$, the discrete derivatives of $f$ and $c$ with respect to $t$. By \Cref{obs:equalrev_addf_prop}, these are well defined.

\subsection{Making $f$ strictly supermodular}
The perturbed reward function is defined as follows, $\tf(S)=f(S) + \delta \cdot |S|^2$, where $\delta$ satisfies
\begin{align}\label{eq:eps_for_tf}
0 < 
\delta < \min \Bigg\{ 
& \min_t \frac{\phi_f}{n^2}(\alpha_{t+1}-\alpha_t), 
\min_t \frac{1}{2n^2}\left(\frac{1}{\alpha_t}-\frac{1}{\alpha_{t+1}}\right)\frac{(c_{t+1}-c_{t})(c_{t}-c_{t-1})}{(c_{t+1}-c_{t-1})}
 \bigg\},
\end{align}
where $c_t = c(S_t)$.
First observe that $\tf$ is defined exactly as the supermodular cost function $\tc$ in \Cref{sec:cc_lower_bound_submod_f_supmod_c} (up to renaming). Thus, the same argument as in the proof of \Cref{prop:tc_nonNeg_monotone_supMod} suggests that $\tf$ is non-negative, monotone and supermodular.
    
\begin{proposition}\label{prop:tf_basic_props}
For $0 < \delta$, $\tf$ is non-negative, strictly monotone and strictly supermodular.
\end{proposition}
Next we show that for a proper choice of $\delta$, the number of critical values in the perturbed instance $\InstFTildeI$ is still $2^n-1$.

\begin{proposition}\label{prop:tf_exp_many_cvs}
If $\delta > 0$ satisfies \Cref{eq:eps_for_tf}, then $\InstFTildeI$ has $2^n-1$ critical values.
\end{proposition}

\begin{proof}
Since $\delta < \frac{\phi_f}{n^2}$, then $\talpha_t > \frac{c_{t+1}-c_t}{f_{t+1} - f_t + \delta n^2} > 0$, and in particular $\talpha_2 > \talpha_1 = 0$.
Fix $t\in \{2,\dots,2^n-2\}$. Denote $c_t = c(S_t)$, $\tf_t=\tf(S_t)$, and $\talpha_t = \frac{c_t-c_{t-1}}{\tf_t-\tf_{t-1}}$.
We wish to show that $\talpha_{t+1} > \talpha_t$, or equivalently, $1/{\talpha_{t+1}} < 1/{\talpha_t}$. 
By definition of $\talpha_t$, the following expression is equivalent
$$
\frac{f_{t+1} - f_t + \delta(|S_{t+1}|^2-|S_t|^2)}{c_{t+1} - c_t }
<
\frac{f_{t} - f_{t-1} + \delta(|S_{t}|^2-|S_{t-1}|^2)}{c_{t} - c_{t-1}}.
$$
It is sufficient to show that
$$
\frac{1}{\alpha_{t+1}}+ \frac{\delta n^2}{c_{t+1}-c_t} < \frac{1}{\alpha_{t}} - \frac{\delta n^2}{c_{t}-c_t}.
$$
By rearranging we get that this is equivalent to 
$$
\delta < \frac{1}{2n^2}\left(\frac{1}{\alpha_t}-\frac{1}{\alpha_{t+1}}\right)\frac{(c_{t+1}-c_{t})(c_{t}-c_{t-1})}{(c_{t+1}-c_{t-1})}.
$$
Finally, by the monotonicity of $f$ and \Cref{obs:critValStructure}, $S_{2^n-1}=[n]$, and
$$
\talpha_{2^n-1} = \frac{c_{2^n-1}-c_{2^n-2}}{f_{2^n-1}-f_{2^n-1}+\delta(|S_{2^n-1}|^2-|S_{2^n-2}|^2)} < \frac{c_{2^n-1}-c_{2^n-2}}{f_{2^n-1}-f_{2^n-1}} = \alpha_{2^n-1} <1
$$
As each $\talpha_t = \frac{c(S_t)-c(S_{t-1})}{\tf(S_t)-\tf(S_{t-1})}$ corresponds to a contract for which the agent's best response change in the instance $(n,\tf,c)$, there are $2^n-1$ distinct critical values.
\end{proof}
    
\subsection{Establish Sparse Best-Response}

\begin{proposition}\label{prop:sparse_demand_supmod_f_c}
    If the instance $\instanceI$ has $\sigma$-sparse supply for some $\sigma > 0$, then for any $0 < \delta < \frac{\sigma}{2n^2}$, the perturbed instance $\InstTildeI$ has $(\sigma/2)$-sparse best-response.
\end{proposition}
\begin{proof}
\newcommand{\sa}{S_\alpha}
\newcommand{\tsa}{\tilde{S}_\alpha}
    Fix a contract $\alpha$ and some $0 < \delta < \frac{\sigma}{2n}$. We show that 
    $\BR{\sigma/2}{\alpha}{\tI} \subseteq \BRI{\sigma/2+n\delta}{\alpha} \subseteq \BRI{\sigma}{\alpha}$, which implies the claim.
    Let $\sa$ be the set of actions which maximizes $\alpha f(S) - c(S)$ and let $\tsa$ be the set of which maximizes $\alpha \tf(S) - c(S)$.
    Observe that for any $S \in \BR{\sigma/2}{\alpha}{\tI}$
    it holds that,
    \begin{align*}
        \alpha \tf(S) - c(S) 
        &\ge \alpha \tf(\tsa) - c(\tsa) - \sigma/2 \\
        &\ge \alpha \tf(\sa) - c(\sa) - \sigma/2 \\
        &\ge \alpha f(\sa) - c(\sa) - \sigma/2,
    \end{align*}
    where the last inequality follows from the fact that $\tf(\sa) = f(\sa) + \delta|\sa|^2$.
    On the other hand, 
    $$\alpha \tf(S) - c(S) = \alpha f(S) - c(S) + \delta|S|^2 \le \alpha f(S) - c(S) + \delta n^2,$$ 
    we conclude that 
    \[
    \alpha f(S) - c(S) \ge \alpha f(\sa) - c(\sa) - (\sigma/2 + \delta n^2),
    \]
    which implies that $S \in \BRI{\sigma/2+n^2\delta}{\alpha}$.
    By our choice of $\delta$, it holds that $\sigma/2+n^2\delta \le \sigma$, and thus $\BRI{\sigma/2+n^2\delta}{\alpha} \subseteq \BRI{\sigma}{\alpha}$, which concludes the proof.
\end{proof}
    
\subsection{Adding the ($n+1$)th action}
As in \Cref{subsec:cc_lower_bound_submod_fc} and \Cref{sec:cc_lower_bound_submod_f_supmod_c}, we 
define the marginal rewards and costs of the ($n+1$)th action, 
using $z = \min \{\phi_{c},\psi_{c},\phi_{f},\phi_{\tf},\psi_{\tf}, \zeta\}$, where
\begin{itemize}
    \setlength\itemsep{-0.25em}
    \item $\phi_{c} = \min_{c} c(S_{t+1})-c(S_{t}) > 0$. For any $S \subsetneq T \subseteq [n]$, $c(T)-c(S) \ge \phi_{c}$),
    \item $\phi_{\tf} = \min_{t} \tf(S_{t+1})-\tf(S_{t}) > 0$, note that for any $S \subsetneq T \subseteq [n]$, $\tf(T)-\tf(S) \ge \phi_{\tf}$),
    \item $\phi_{f} = \min\{1/2,\min_{t} f(S_{t+1})-f(S_{t})\} > 0$. For any $S \subsetneq T \subseteq [n]$, including $S = \emptyset$, $f(T)-f(S) \ge \phi_{f}$,
    \item $\psi_{c} = \min_{S\subseteq T, i \notin T} c(i \mid S)-c(i \mid T) > 0$,
    \item $\psi_{\tf} = \min_{S\subseteq T, i \notin T} \tf(i \mid S)-\tf(i \mid T) > 0$,
    \item $\zeta = \frac{\delta \cdot (1-\alpha_{2^n-1})\phi_f}{16\cdot n^2 \cdot(f_{2^n-1}+1)}> 0$.
    \item $\sigma>0$, the sparseness parameter for which $\InstFTildeI$ has sparse demand, see \Cref{prop:sparse_demand_supmod_f_c}.
\end{itemize}

\begin{proposition}\label{prop:eps_for_revenue_sup_f}
For any $0 < z < \zeta = \frac{\delta \cdot (1-\alpha_{2^n-1})\phi_f}{16\cdot n^2 \cdot (f_{2^n-1}+1)}$, for any $t \in [2^n-1]$, 
$$
(1-\talpha_t)f(S_t) \in \left[1 - \frac{z(1-\alpha_{2^n-1})}{16}, 1 + \frac{z(1-\alpha_{2^n-1})}{16} \right]
$$
\end{proposition}
\begin{proof}
    Fix $t$. Denote $f_t = f(S_t)$ and $c_t = c(S_t)$.
    First observe that,
    $$
    (1-\talpha_t)\tf_t \in [(1-\talpha_t)f_t - \delta \cdot n^2, (1-\talpha_t)f_t + \delta \cdot n^2].
    $$
    Next, observe that $\talpha_t \le \alpha_t$, and that
    \begin{align*}
        \alpha_t - \talpha_t
        &=
        \frac{c_t-c_{t-1}}{f_t-f_{t-1}} - \frac{c_t-c_{t-1}}{f_t-f_{t-1} + \delta(|S_t|^2-|S_{t-1}|^2)}\\
        &=
        \frac{\delta(c_t-c_{t-1})(|S_t|^2-|S_{t-1}|^2)}{(f_t-f_{t-1})(f_t-f_{t-1} + \delta(|S_t|^2-|S_{t-1}|^2))} \\
        &=\frac{\delta\alpha_t(|S_t|^2-|S_{t-1}|^2)}{f_t-f_{t-1} + \delta(|S_t|^2-|S_{t-1}|^2)} \\
        &\le \frac{\delta \alpha_t n^2}{\phi_f}
    \end{align*}
    By combining the two observations above we get that,
    \begin{align*}
    (1-\talpha_t)\tf_t 
    &\in
    \left[(1-\alpha_t)f_t - \frac{\delta n^2\alpha_t f_t}{\phi_f} - \delta n^2,  (1-\alpha_t)f_t + \frac{\delta n^2\alpha_t f_t}{\phi_f} + \delta n^2\right] \\
    &\subseteq
    \left[1 - \frac{\delta n^2(\alpha_t f_t+1)}{\phi_f},  1 + \frac{\delta n^2(\alpha_t f_t+1)}{\phi_f}\right] \\
    &\subseteq
    \left[1 - \frac{\delta n^2(f_{2^n-1}+1)}{\phi_f},  1 + \frac{\delta n^2(f_{2^n-1}+1)}{\phi_f}\right] \\
    &\subseteq
    \left[1 - \frac{z(1-\alpha_{2^n-1})}{16},  1 + \frac{z(1-\alpha_{2^n-1})}{16}\right],
    \end{align*}
    where the second transition follows from the fact that $\phi_f < 1$, and  last transition follows from the definition of $\zeta$.
\end{proof}
We are ready to define the new instance $\inst{n+1}{\hf}{\hc}$, where 
\begin{align*}
\hf(S)&=\tf(S\setminus\{n+1\})+\indicator{n+1 \in S}\cdot\hf(n+1 \mid S), \text{ and } \\
\hc(S)&=c(S\setminus\{n+1\})+\indicator{n+1 \in S}\cdot\hc(n+1 \mid S).
\end{align*}
The marginal rewards of the $n+1$ action are as follows
$$
\hf(n+1 \mid S_t) = 
\begin{cases}
    0 & |S_t| < \frac{n}{2} \\
    0 & |S_t| = \frac{n}{2} \land S_t \notin x_f(S_t) \\
    z/4 & |S_t| = \frac{n}{2} \land S_t \in x_f(S_t)\\
    z/4 & |S_t| > \frac{n}{2}
\end{cases}
\qquad 
\hc(n+1 \mid S_t) = 
\begin{cases}
    (\talpha_1\cdot z)/8 & |S_t| < \frac{n}{2} \\
    (\talpha_t\cdot z)/4 & |S_t| = \frac{n}{2} \land S_t \in x_c \\
    z/2 & |S_t| = \frac{n}{2} \land S_t \notin x_c \\
    z/2 & |S_t| > \frac{n}{2}
\end{cases}.
$$
    
\begin{proposition}\label{prop:monotone_supmod_hf_hc}
    Both $\hf$ and $\hc$ are monotone and supermodular.
\end{proposition}

\begin{proof}
    \textbf{Monotonicity of $\hc$:} 
    Fix $S \subsetneq T \subseteq [n+1]$. 
    \begin{align*}
            \hc(T) - \hc(S) 
            &\ge 
            \hc(T \cap [n]) - \hc(S \cup \{n+1\}) 
            = c(T\cap [n]) - c(S\cap [n]) - c(n+1\mid S\cap[n]) \\
            &\ge \phi_{c} - c(n+1\mid S\cap[n]) 
            \ge z - z/2 
            > 0
    \end{align*}
    First inequality follows as the marginal cost of $n+1$ is non-negative.  
    
    \textbf{Submodularity of $\hc$:} 
    We will show that for any $S \subseteq [n+1]$ and any $i,j \notin S$, where $i \ne j$, it holds that 
    \begin{equation}\label{eq:hcSupMod}
        \hc(S \cup \{i\}) + \hc(S \cup \{j\}) \le \hc(S \cup \{i,j\}) + \hc(S)
    \end{equation}
    
    \begin{itemize}
    \setlength\itemsep{-0.25em}
    \item If $n+1 \notin S \cup \{i,j\}$, then the claim follows from \Cref{obs:equalrev_addf_prop}.

    \item If $i=n+1$ (and similarly for the case where $j=n+1$), by rearranging \Cref{eq:hcSupMod} one can see that it is equivalent to,
    $\hc(n+1 \mid S) \le \hc(n+1 \mid S \cup \{j\})$. The claim follows immediately as by the fact that $\hc(n+1 \mid T)$ is weakly increasing with the cardinality of $T$.

    \item If $n+1 \in S$, let $S' = S \setminus \{n+1\}$. 
    Then be rewriting $\hc(S \cup \{i\})$ as $\hc(n+1 \mid S' \cup \{i\}) + \hc(S'\cup\{i\})$, and similarly to the other expressions, we get that 
    \Cref{eq:tcSubMod} is equivalent to
    \begin{align*}
    \hc(j \mid S' \cup \{i\}) - \hc(j \mid S')
    = c(j \cup S' \cup \{i\}) - c(S' \cup \{i\}) &- (c(j \cup S') - c(S')) \\
    \ge 
    \underbrace{\hc(n+1 \mid S' \cup \{i\}) - \hc(n+1 \mid S'\cup\{i,j\})   - \hc(n+1 \mid S')}_{\le 0} &+ \hc(n+1 \mid S' \cup \{j\})
    \end{align*}
    The left-hand side of the equation is lower bounded by $\psi_{c}$. 
    Since $\hc(n+1 \mid T)$ weakly decreases with the cardinality of $T$, $\hc(n+1 \mid S' \cup \{i\}) - \hc(n+1 \mid S'\cup\{i,j\}) \le 0$. Also, the marginal cost of $n+1$ is non-negative, so $\hc(n+1 \mid S') \ge 0$. Thus, we can upper bound the right-hand side with
    $\hc(n+1 \mid S' \cup \{j\}) \le z/2 < \psi_{c}$, which concludes the proof.
    \end{itemize}
    
    \textbf{Monotonicity of $\hf$:} 
    Fix $S \subsetneq T \subseteq [n+1]$. Observe that regardless of whether $n+1 \in S$, it holds that
    \begin{align*}
            \hf(T) - \hf(S) 
            &\ge \hf(T \cap [n]) - \hf(S \cup \{n+1\}) \\
            &= \tf(T\cap [n]) - \tf(S\cap [n]) - \hf(n+1\mid S\cap[n]) \\
            &\ge \phi_{\tf} - \hf(n+1\mid S\cap[n]) \\ 
            &\ge z - z/4 \\
            &> 0
    \end{align*}
    
    \textbf{Supermodularity of $\hf$:}
    We will show that for any $S \subseteq [n+1]$ and any $i,j \notin S$, where $i \ne j$, it holds that 
    \begin{equation}\label{eq:fSupMod}
        \hf(S \cup \{i\}) + \hf(S \cup \{j\}) \le \hf(S \cup \{i,j\}) + \hf(S)  
    \end{equation}
    
    \begin{itemize}
    \setlength\itemsep{-0.25em}
    \item If $n+1 \notin S \cup \{i,j\}$, then the claim follows from \Cref{prop:tf_basic_props}.
    \item If $i=n+1$ (and similarly for the case where $j=n+1$), by rearranging \Cref{eq:fSupMod} one can see that it is equivalent to,
    $\hf(n+1 \mid S) \le \hf(n+1 \mid S \cup \{j\})$. The claim follows immediately as by the fact that $\hf(n+1 \mid T)$ is weakly decreasing with the cardinality of $T$.
    \item If $n+1 \in S$, let $S' = S \setminus \{n+1\}$. 
    Then be rewriting $\hf(S \cup \{i\})$ as $\hf(n+1 \mid S' \cup \{i\}) + \hf(S'\cup\{i\})$, and similarly to the other expressions, we get that 
    \Cref{eq:fSupMod} is equivalent to
    \begin{align*}
    \hf(j \mid S' \cup \{i\}) - \hf(j \mid S')
    = \tf(j \cup S' \cup \{i\}) - \tf(S' \cup \{i\}) &- (\tf(j \cup S') - \tf(S')) \\
    \ge 
    \underbrace{\hf(n+1 \mid S' \cup \{i\}) - \hf(n+1 \mid S'\cup\{i,j\})   - \hf(n+1 \mid S')}_{\le 0} &+ \hf(n+1 \mid S' \cup \{j\})
    \end{align*}
    The left-hand side of the equation is lower bounded by $\psi_{\tf}$. 
    As in the argument for the supermodularity of $\hc$, the right-hand side can be upper bounded with $\hf(n+1 \mid S' \cup \{j\}) \le z/4 < \psi_{c}$, which concludes the proof. \qedhere
    \end{itemize}
\end{proof}

\begin{observation}\label{obs:n+1Incentivizable_supmod_hf_hc}   
    If a set of the form $S_t \cup \{n+1\}$, for some $S_t \subseteq [n]$, can be incentivized by the principal, then $|S_t| = n/2$ and $S_t \in x_c \cap x_f$.
\end{observation}
\begin{proof}
We divide into cases: 

    If $|S_t| < \frac{n}{2}$, the marginal contribution of $n+1$ is $0 - \frac{\talpha_1\cdot z}{8} < 0$.
    
    If $|S_t|>n/2$, the marginal contribution of $n+1$ is $z/4 - z/2 < 0$.
    
    If $|S_t|=n/2$ and $x_f(S_t)=0$, the marginal contribution of $n+1$ is at most $0 - \frac{\talpha_t\cdot z}{4} < 0$.
    
    If $|S_t|=n/2$ and $x_c(S_t)=0$, the marginal contribution of $n+1$ is at most $z/4 - z/2 < 0$.
\end{proof}

\begin{proposition}\label{prop:n+1_is_optimal_supmod_hf_hc}
    Let $S_t$ be the set with the minimal index such that $S_t \in x_c \cap x_f$. 
    The contract $\halpha_t = \frac{\hc(S_t)-\hc(S_{t-1})}{\hf(S_t)-\hf(S_{t-1})}$ incentivizes the set $S_t \cup \{n+1\}$ and
    the principal's revenue from $\halpha_t$, $u_p(\halpha_t,S_t\cup\{n+1\})$, exceeds the revenue from any incentivizable set of action of the form $S_{t'} \subseteq [n]$.
\end{proposition}
The proof of \Cref{prop:n+1_is_optimal_supmod_hf_hc} is identical to the proof of \Cref{prop:n+1_is_optimal}, except it uses the bounds established in \Cref{prop:eps_for_revenue_sup_f} and \Cref{eq:eps_for_tf}.
We obtain the following corollary
\begin{corollary}\label{cor:opt_set_if_non_empty_intersection_sup-sup}
    If $x_f \cap x_c$ is non-empty, then the optimal contract incentivizes some set $S_t \cup \{n+1\}$, where $S_t \subseteq [n]$ and $S_t \in x_f \cap x_c$. 
\end{corollary}
    
Akin to \Cref{prop:BR_in_approxDemand_f_c_submod}, If $S_\alpha$ is a best-response in $\InstHatI$, then $S_\alpha \setminus \{n+1\}$ is approximately a best-response in $\InstFTildeI$.
\begin{proposition}\label{prop:BR_in_approxDemand_sup-sup}
    Let $S_\alpha$ be the agent's best response for contract $\alpha$ in the instance $\InstHatI$, then 
    $S_\alpha \setminus\{n+1\} \in \BR{\sigma/2}{\alpha}{\tI}$.
\end{proposition}
The proof is identical to the analogous \Cref{prop:BR_in_approxDemand_f_c_submod}.
\Cref{cor:opt_set_if_non_empty_intersection_sup-sup} and \Cref{prop:BR_in_approxDemand_sup-sup} imply \Cref{thm:cc_supmod_f_c}, and the arguments are identical to the ones presented in the proof of  \Cref{thm:cc_submod_f_c} in \Cref{sec:CC_submod_fc}.

\section{Polynomial Representation of the Equal Revenue Instance for Submodular $f$ and Additive $c$}\label{apx:rounded}

\newcommand{\tn}{\theta(n^2)}
\DeclarePairedDelimiter{\floor}{\lfloor}{\rfloor}
\newcommand{\famI}{\I = \{ \inst{n}{f_{S_t}}{c} \}_{S_t \subseteq [n],S \ne \emptyset}}

In this section we round our equal-revenue reward function for submodular $f$ and additive $c$, presented in \Cref{sec:eqRev}, so it can be represented with $poly(n)$ bits.
We show that this new reward function, $\tf$, can be extended to a family of optimal-contract problems to show hardness in the demand query model, as in \Cref{sec:demandHardness}.

\begin{theorem}\label{thm:PolyRepr}
    When $f$ is submodular any algorithm that computes the optimal contract requires an exponential number of demand queries in the instance representation size. 
\end{theorem}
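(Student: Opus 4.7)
The plan is to replace the reward function $f$ of \Cref{sec:basicPorperties} with a rounded approximation $\tilde f$ whose values have polynomial bit complexity, and verify that every ingredient in the proof of \Cref{thm:expDemandQueries} survives the rounding. Concretely, I would fix a precision parameter $\delta = 2^{-p(n)}$ for a sufficiently large polynomial $p$, define $\tilde f_t$ to be $f_t = 1/(1-\alpha_t)$ truncated to the nearest multiple of $\delta$, and set $\tilde f(S_t) = \tilde f_t$. Since $\alpha_t$ satisfies the recurrence of \Cref{eqn:alphaRR}, one can compute $\tilde f_t$ in time polynomial in $n$ by iterating the recurrence using $O(p(n))$-bit rational arithmetic, re-rounding after each step to prevent bit blowup from the square root. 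Hence $\tilde f$ admits a polynomial-size representation (an algorithm plus the costs $c_i = 2^{i-1}$), so that $2^{\Omega(n)}$ queries is exponential in the representation size.

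Next I would verify that $\tilde f$ inherits the three structural properties of $f$ used in the lower bound. First, I would show that the discrete derivatives $f_{t+1} - f_t$ are not merely strictly decreasing (\Cref{lem:fMarginalDecrease}) but separated by an inverse-exponential gap $\eta$, via an inductive analysis of \Cref{eqn:alphaRR}. For $\delta \ll \eta$, the rounded derivatives $\tilde f_{t+1} - \tilde f_t$ remain strictly decreasing, and the submodularity argument of \Cref{prop:fSubmodular} applies to $\tilde f$ verbatim. Second, the critical values $\tilde\alpha_t$ of $\tilde f$ remain $2^n - 1$ distinct points, because consecutive $\alpha_t$ are likewise separated by an inverse-exponential amount while the rounding perturbation is negligible. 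Third, the principal utilities $(1-\tilde\alpha_t)\tilde f_t$ all lie within $O(\mathrm{poly}(n)\cdot \delta)$ of $1$; that is, $\tilde f$ is \emph{approximately} equal-revenue.

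I would then build the family $\widetilde\F$ analogously to \Cref{sec:FamilyOfInstances}: for every nonempty $S \subseteq A$, let $\tilde f_S$ agree with $\tilde f$ except that $\tilde f_S(S) = \tilde f(S) + \eps$, where $\eps$ is chosen slightly larger than the equal-revenue discrepancy of $\tilde f$ (so that $\tilde\alpha_S$ is the unique optimal contract for the instance with reward $\tilde f_S$), yet small enough that the approximate demand set analysis of \Cref{sec:approxDemand} goes through. The proof of \Cref{prop:BoundApproxDemand} only uses monotonicity, submodularity, and the lower bounds on $\alpha$-separation, all of which hold for $\tilde f$, so $|\demand_p^\eps| = O(n^2)$ for each $\tilde f_S \in \widetilde\F$. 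The value-to-demand reduction \Cref{prop:DemandWithValue} and the value query lower bound \Cref{prop:valueQueryBound} then apply with only syntactic changes, yielding the desired exponential demand query lower bound for the poly-size family $\widetilde\F$.

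The main obstacle is the quantitative bookkeeping. One must establish inverse-exponential but poly-representable lower bounds on the gaps $(f_{t+1}-f_t) - (f_{t+2}-f_{t+1})$ between adjacent marginals and on the differences $\alpha_{t+1}-\alpha_t$ between adjacent critical values, so that $\delta$ can be chosen simultaneously small enough to preserve every strict inequality (submodularity, distinct critical values, optimality of $\tilde\alpha_S$ after perturbation) and large enough to keep the representation polynomial; and $\eps$ must be placed in a narrow window between the equal-revenue rounding error and the threshold at which the $O(n^2)$ approximate demand bound would fail. Once these bounds are derived from \Cref{eqn:alphaRR} and a valid triple $(\delta,\eps,\eta)$ is exhibited, every lemma of \Cref{sec:demandHardness} carries over to $\tilde f$ without further change.
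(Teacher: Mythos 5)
Your proposal is correct and follows essentially the same route as the paper's proof in \Cref{apx:rounded}: establish inverse-exponential separation bounds on the $\alpha_t$ (the paper shows $\alpha_{t+1} - \alpha_t \ge 2^{-18n}$ and $1-\alpha_t \ge 2^{-6n}$), round to precision $2^{-\theta(n^2)}$, choose $\eps = 2^{-\theta(n)}$ in the resulting window, and verify that submodularity, the $2^n-1$ distinct critical values, the near-equal-revenue property, and the $O(n^2)$ approximate-demand bound all survive. The only cosmetic difference is that the paper rounds the critical values $\alpha_t$ and sets $\ftl_t = 1/(1-\tilde{\alpha}_t)$, whereas you round the $f_t$ directly; either choice yields poly-bit-complexity values and the same subsequent analysis.
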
 

Recall the family of perturbed instances defined in \Cref{sec:eqToVal}, which we denote in this section by $\famI$.
Each reward function $f_{S_t}$ was defined to be identical to the original $f$, except for a small fixed bonus $\eps > 0 $ for the set $S_t$, namely 
\begin{equation}\label{eq:fS}
f_{S_t}(S_{t'}) = \begin{cases}
    f(S_{t'}) & t' \ne t, \\
    f(S_{t'}) + \eps & t' = t.
\end{cases}    
\end{equation}

The purpose of the bonus was to guarantee that
when the reward function is $f_{S_t}$, there are $2^n-1$ critical values, but only one of these critical values, $\alpha_t$, is optimal. 
Our argument also relied on the fact that $\eps$ is small enough to ensure that a demand oracle for $f_{S_t}$ can be simulated with poly-many value queries, and that the monotonicity and submodularity of $f$ are maintained.

Thus, to prove \Cref{thm:PolyRepr} it is enough to show that there exists a rounded reward function $\tf$ and  $\eps > 0$, whose representation is polynomial in $n$, such that the following properties hold for any optimal contract instance defined with respect to the costs $c_i = 2^{i-1}$ and a reward function $\tf_{S_t}$:
\begin{enumerate}
    \setlength\itemsep{-0.25em}
    \item \label[prop]{pr:monSubMod} $\tf_{S_t}$ is monotone and submodular.
    \item \label[prop]{pr:demandOracle} A demand oracle for $\tf_{S_t}$ can be simulated with poly-many value queries.
    \item \label[prop]{pr:uniqOpt} There are $2^n-1$ critical values $0 = \gamma_0 < \dots < \gamma_{2^n-1} < 1$,  each yielding an expected utility of at most $1 + 2^{-\tn}$ to the principal, except for one which yields $1 + 2^{-\theta(n)}$.
\end{enumerate}

In \Cref{subApx:Bounds} we show that in the original construction any two critical values are significantly spaced from each other, and in \Cref{subApx:construction} we prove that rounding our construction with precision $2^{-\tn}$ is sufficient to maintain the above properties.

\subsection{Bounds for the Original Construction}\label{subApx:Bounds}
The goal of this section is to give a lower bound on the distance between any two critical values (and between the last critical value and $1$), which is of order $2^{-\theta (n)}$.

\begin{lemma}\label{lem:ztBounds}
    For any $t \in \{1,\dots,2^n-1\}$,
    $$
    (1-\alpha_t)^3 < \alpha_{t+1} - \alpha_t < (1-\alpha_t)^{3/2}. 
    $$
\end{lemma}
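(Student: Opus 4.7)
The plan is to substitute $y = 1-\alpha_t$ and work with an equivalent single-variable inequality. Using the identity
$$
4\alpha_t^2 - 8\alpha_t + 5 = 4(1-y)^2 - 8(1-y) + 5 = 4y^2 + 1,
$$
the recurrence gives $\alpha_{t+1}-\alpha_t = \tfrac{1}{2}\bigl(\sqrt{4y^2+1}-1\bigr)$, so the claim becomes
$$
y^3 \;<\; \tfrac{1}{2}\bigl(\sqrt{4y^2+1}-1\bigr) \;<\; y^{3/2}
\qquad \text{for } y \in (0,\,1-\alpha_1].
$$
By \Cref{lem:alphaMon} the sequence $\alpha_t$ is strictly increasing starting at $\alpha_1$, so for $t\ge 1$ one has $y = 1-\alpha_t \le 1-\alpha_1 = \tfrac{3-\sqrt{5}}{2} \approx 0.382$. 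This bound on $y$ will be the only place the restriction $t\ge 1$ is used.

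For the upper bound, I would move the $1$ to the right side and square: both sides of $\sqrt{4y^2+1} < 2y^{3/2}+1$ are positive, and squaring yields $4y^2 < 4y^3 + 4y^{3/2}$, i.e.\ $\sqrt{y}\,(1-y) < 1$. A one-line calculus check (maximize on $[0,1]$, critical point $y=1/3$) shows the left side is at most $\tfrac{2}{3\sqrt{3}} < 1$, so this bound holds on all of $(0,1]$ and in particular on our range.

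For the lower bound, the same squaring trick applied to $\sqrt{4y^2+1} > 2y^3+1$ reduces the claim to $y + y^4 < 1$. Unlike the upper bound, this inequality fails as $y \to 1$, so here I must use the range restriction: the polynomial $y+y^4-1$ has a unique positive root near $0.724$, and our $y \le \tfrac{3-\sqrt{5}}{2} < 0.4$ is safely below it. A clean way to avoid root-finding is to simply observe $y + y^4 \le y + y = 2y \le 2(1-\alpha_1) = 3-\sqrt{5} < 1$.

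The main subtlety, rather than a genuine obstacle, is simply to keep track of the range constraint: the upper bound is global in $y\in(0,1]$, but the lower bound genuinely uses $t\ge 1$, which is exactly the hypothesis of the lemma. Once the substitution $y=1-\alpha_t$ and the identity $4\alpha_t^2-8\alpha_t+5 = 4y^2+1$ are in place, the rest of the proof is routine algebraic manipulation with one quick maximization.
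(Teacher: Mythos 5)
Your proof is correct and takes a genuinely different, and notably cleaner, route than the paper's. The paper works directly with the ratios $\frac{\alpha_{t+1}-\alpha_t}{(1-\alpha_t)^{3/2}}$ and $\frac{\alpha_{t+1}-\alpha_t}{(1-\alpha_t)^{3}}$, checks the base case $t=1$ numerically, and then establishes the two bounds via a monotonicity argument: it differentiates each ratio in $x=\alpha_t$ and, after some algebraic simplification, shows the first is decreasing and the second increasing on $[\alpha_1,1)$. Your substitution $y=1-\alpha_t$ collapses the discriminant to $4y^2+1$, and the two inequalities then become, after squaring (valid since both sides are manifestly positive), the elementary polynomial inequalities $\sqrt{y}(1-y)<1$ and $y+y^4<1$. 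The first holds for all $y\in(0,1)$ with no calculus at all (both factors are strictly less than $1$), and the second follows immediately from $y\le 1-\alpha_1=\tfrac{3-\sqrt5}{2}<\tfrac12$. This replaces two derivative computations and a monotonicity argument with a change of variables and a two-line algebraic check, and it localizes exactly where the hypothesis $t\ge 1$ matters. The trade-off is that the paper's monotonicity argument gives slightly more information (how the ratio behaves as $t$ grows), but for the stated lemma your approach is the more economical one.
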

\begin{proof}
    Fix $t$. Recall that 
    $\alpha_{t+1} - \alpha_t = \frac{1}{2}(\sqrt{4\alpha_t^2 - 8\alpha_t +5} - 1)$.
    First, observe that the inequality holds for $\alpha_1 = \frac{\sqrt{5}-1}{2} \approx 0.618$, as $(1-\alpha_1)^{3} \approx 0.0055$, $\alpha_2 - \alpha_1 \approx 0.129$ and $(1-\alpha_1)^{3/2} \approx 0.236$.

    To prove the inequality on the right, observe that for any $x \in [\alpha_1,1)$:
    \begin{eqnarray*}
        \left(\frac{\frac{1}{2}(\sqrt{4x^2 - 8x +5} - 1)}{(1-x)^{3/2}}\right)'
        &=&
        \frac{1}{2(1-x)^3}\cdot \left(\frac{8(x-1)(1-x)^{3/2}}{2\sqrt{4x^2-8x+5}} + \frac{3}{2}(1-x)^{1/2}\cdot(\sqrt{4x^2-8x+5}-1)\right) \\
        &=&
        \frac{1}{2(1-x)^{5/2}}\cdot \left(\frac{-4(1-x)^2}{\sqrt{4x^2-8x+5}} + \frac{3}{2}\cdot(\sqrt{4x^2-8x+5}-1)\right).
    \end{eqnarray*}
    The above is negative if and only if
    \begin{eqnarray*}
        \frac{3}{2}(\sqrt{4x^2-8x+5}-1) < \frac{4(1-x)^{2}}{\sqrt{4x^2-8x+5}}.
    \end{eqnarray*}
    One can verify that this is equivalent to
    \begin{eqnarray*}
        2x^2 - 4x + \frac{7}{2} - \frac{3}{2}\sqrt{4x^2-8x+5} < 0,
    \end{eqnarray*}
    and that this inequality holds for any $x \in [\alpha_1,1)$ and in particular for any $\alpha_t$ with $t \ge 1$. 
    For the left inequality,  
    \begin{eqnarray*}
        \left(\frac{\frac{1}{2}(\sqrt{4x^2 - 8x +5} - 1)}{(1-x)^{3}}\right)'
        &=&
        \frac{1}{2(1-x)^6}\cdot \left(\frac{8(x-1)(1-x)^{3}}{2\sqrt{4x^2-8x+5}} +3(1-x)^2\cdot(\sqrt{4x^2-8x+5}-1)\right) \\
        &=&
        \frac{1}{2(1-x)^4}\cdot \left(\frac{-4(1-x)^{2}}{\sqrt{4x^2-8x+5}} + 3(\sqrt{4x^2-8x+5}-1)\right), \\
    \end{eqnarray*}
    which is positive if and only if
    $$
    3(\sqrt{4x^2-8x+5}-1) > \frac{4(1-x)^{2}}{\sqrt{4x^2-8x+5}}.
    $$
    Equivalently,
    $$
    8x^2-12x+11-\sqrt{4x^2-8x+5} > 0,
    $$
    which holds for any $x \in \reals$.
\end{proof}

\begin{lemma}\label{lem:distFrom1}
    For any $t \in [2^n-1]$,  $1-\alpha_t \ge 2^{-6n}$. 
\end{lemma}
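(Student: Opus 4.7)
The plan is to control how quickly $\beta_t := 1 - \alpha_t$ can shrink by converting the recurrence for $\alpha_t$ into a multiplicative one. By \Cref{lem:ztBounds}, for $t \ge 1$ we have $\alpha_{t+1} - \alpha_t < \beta_t^{3/2}$, so
\[
\beta_{t+1} > \beta_t - \beta_t^{3/2} = \beta_t\bigl(1 - \sqrt{\beta_t}\bigr).
\]
Substituting $y_t := 1/\sqrt{\beta_t}$ rearranges this to $y_{t+1}^2 < y_t^3/(y_t - 1)$, valid whenever $y_t > 1$. The goal then becomes to show that $y_t$ grows at most linearly in $t$, since this gives $\beta_t = 1/y_t^2 \ge 1/\mathrm{poly}(t)$, which will comfortably exceed $2^{-6n}$ for $t \le 2^n - 1$.

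The key step I would carry out is the discrete slope-$1$ bound: $y_{t+1} \le y_t + 1$ whenever $y_t \ge \phi := (1+\sqrt{5})/2$. This reduces to the algebraic identity $(y_t+1)^2(y_t-1) - y_t^3 = y_t^2 - y_t - 1$, which is nonnegative precisely when $y_t \ge \phi$. Next I would verify the base case $t = 1$ directly: solving the recurrence at $\alpha_0 = 0$ gives $\alpha_1 = (\sqrt{5}-1)/2$, hence $\beta_1 = (3-\sqrt{5})/2$, and after rationalization $y_1 = \phi$ exactly. Since $\beta_t$ is strictly decreasing by \Cref{lem:alphaMon}, $y_t$ is nondecreasing, so $y_t \ge \phi$ for all $t \ge 1$ and the induction $y_{t+1} \le y_t + 1$ carries through, yielding $y_t \le t + 1$.

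Combining these bounds, $\beta_t \ge 1/(t+1)^2$ for $t \ge 1$, while $\beta_0 = 1$ is trivial. For any $t \le 2^n - 1$ this gives $\beta_t \ge 2^{-2n - O(1)}$, which comfortably exceeds the required $2^{-6n}$ for all $n \ge 1$ (tiny cases can be checked by hand). The step I expect to require the most care is pinning down the base case: the identity $y_1 = \phi$ sits exactly at the threshold where the slope-$1$ bound turns on, so the induction would fail if it slipped below this value. Aside from that, the substitution $y_t = 1/\sqrt{\beta_t}$ essentially linearizes the problem, and the rest is routine algebra.
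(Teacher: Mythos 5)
Your proof is correct and takes a genuinely different, cleaner route than the paper. The paper argues by contradiction: it assumes some $1-\alpha_t < 2^{-6n}$, picks a crossing index $t^*$ where $1-\alpha_{t^*}$ first drops below $2^{-4n}$, and derives incompatible upper and lower bounds on $\alpha_{t^*}$ (the upper bound via an integral comparison applied to the telescoping sum, the lower bound from the contradiction hypothesis together with a cap on subsequent increments). Your argument is direct and constructive: the substitution $y_t = 1/\sqrt{1-\alpha_t}$ turns the bound $\alpha_{t+1}-\alpha_t < (1-\alpha_t)^{3/2}$ from \Cref{lem:ztBounds} into the linear recursion $y_{t+1} \le y_t + 1$, once the algebraic identity $(y+1)^2(y-1) - y^3 = y^2 - y - 1$ shows that this slope-$1$ bound kicks in exactly at $y \ge \phi$. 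The base case is then the pleasant coincidence that $1-\alpha_1 = (3-\sqrt{5})/2 = 1/\phi^2$, so $y_1 = \phi$ sits exactly on the threshold and monotonicity of $y_t$ keeps it there. The upshot is both a shorter argument and a stronger bound: you get $1-\alpha_t \ge 1/(t+1)^2 \ge 2^{-2n}$ for all $t \le 2^n-1$, improving the paper's $2^{-6n}$. One small point worth stating explicitly when writing this up: the rearrangement to $y_{t+1}^2 < y_t^3/(y_t-1)$ implicitly uses $\beta_t(1-\sqrt{\beta_t}) > 0$, i.e.\ $\beta_t < 1$, which holds since $\alpha_t > 0$ for $t \ge 1$; and \Cref{lem:ztBounds} is only stated for $t \ge 1$, which is why the direct verification of $y_1 = \phi$ (rather than starting from $t=0$) is needed.
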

\begin{proof}
    Observe that $1-\alpha_t$ is a monotonically decreasing series. 
    Aiming for contradiction, assume there exists $t$ such that $1-\alpha_t <2^{-6n}$, and in particular $\alpha_{2^n-1} > 1- 2^{-6n}$.
    Let $t^*$ be such that
    $1-\alpha_{t^*} < 2^{-4n}$ and $1-\alpha_{t^*-1} \ge 2^{-4n}$.
    Using telescoping sum  and the upper bound from \Cref{lem:ztBounds} we get 
    \begin{eqnarray*}
        \alpha_{t^*} - \alpha_1
        &=&
        \sum_{t=1}^{t^*-1}\alpha_{t+1} - \alpha_t 
        =
        \sum_{t=1}^{t^*-1}\frac{1}{2}\left( \sqrt{4\alpha_t^2 - 8\alpha_t + 5} -1 \right) \\
        &\le&
        \int_{\alpha_1}^{\alpha_{t^*-1}}\frac{1}{2}\left( \sqrt{4x - 8x + 5} -1 \right) dx \\
        &<&
        \int_{\alpha_1}^{\alpha_{t^*-1}} (1-x)^{3/2} dx \\
        &=&
        \frac{-2}{5}\left[ (1-x)^{5/2} \right]_{\alpha_1}^{\alpha_{t^*-1}} \\
        &\le&
        \frac{2}{5}\left[ (1-\alpha_1) -(1-\alpha_{t^*-1})^{5/2} \right] \\
        &\le&
        \frac{2}{5}\left[ (1-\alpha_1) -(2^{-4n})^{5/2} \right] 
        =
        \frac{2}{5}\left[ (1-\alpha_1) -(2^{-10n}) \right],
    \end{eqnarray*}
    where the last inequality follows from the definition of $t^*$. We get 
    \begin{equation}\label{eq:t*Up}
        \alpha_{t^*} \le \frac{2}{5} + \frac{3}{5}\alpha_1 -\frac{2}{5}\cdot 2^{-10n} \approx 0.7708 - \frac{2}{5}\cdot 2^{-10n}
    \end{equation}
    \Cref{lem:alphaMon} implies that $\alpha_{t+1} - \alpha_t$ is a decreasing function of $t$,
    so for any $t > t^*$, 
    $$
    \alpha_{t+1} - \alpha_t \le \alpha_{t^*+1} - \alpha_{t^*} < (1-\alpha_{t^*})^{3/2} < 2^{-6n},
    $$
    where the second to last inequality is due to \Cref{lem:ztBounds}.
    Thus, $\alpha_{2^n-1}$ satisfies:
    $$
    \alpha_{2^n-1} 
    = \alpha_{t^*} + \sum_{t=t^*+1}^{2^n-2} \alpha_{t+1} - \alpha_{t} 
    \le \alpha_{t^*} + 2^n \cdot 2^{-6n}
    = \alpha_{t^*} + 2^{-5n}.
    $$
    According to our assumption, $\alpha_{2^n} > 1-2^{-6n}$ and so 
    \begin{equation}\label{eq:t*Low}
        \alpha_{t^*} \ge 1-2^{-6n}-2^{-5n}.
    \end{equation}
    Combining \Cref{eq:t*Up} and \Cref{eq:t*Low} we get
    $$
    2^{-6n} + 2^{-5n} - \frac{2}{5}2^{-10n} \ge 0.2292,
    $$
    a contradiction for any $n >1$.
\end{proof}

\begin{corollary}\label{cor:alphaDiffBound}
    For any $t$, $\alpha_{t+1} - \alpha_t \ge 2^{-18n}$
\end{corollary}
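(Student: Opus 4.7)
The plan is to obtain the bound by directly combining the two lemmas that immediately precede the corollary. Specifically, \Cref{lem:ztBounds} gives the lower bound $\alpha_{t+1} - \alpha_t > (1-\alpha_t)^3$ for every $t \in \{1,\dots,2^n-1\}$, while \Cref{lem:distFrom1} tells us that $1 - \alpha_t \geq 2^{-6n}$ for all relevant $t$. Cubing the latter inequality and substituting into the former yields
\[
\alpha_{t+1} - \alpha_t \;>\; (1-\alpha_t)^3 \;\geq\; (2^{-6n})^3 \;=\; 2^{-18n},
\]
which is exactly the desired conclusion for $t \geq 1$.

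The only remaining case is the boundary $t = 0$, which is not covered by \Cref{lem:ztBounds} (whose range starts at $t=1$). Here I would handle it directly by noting $\alpha_0 = 0$ and $\alpha_1 = \tfrac{1}{2}(\sqrt{5}-1) \approx 0.618$, so $\alpha_1 - \alpha_0 \gg 2^{-18n}$ for all $n \geq 1$. Thus the bound extends to all $t$ in the valid range $\{0, 1, \dots, 2^n-2\}$.

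I don't anticipate any real obstacle here: the corollary is essentially a one-line consequence of the two preceding lemmas. The hard work lies in establishing \Cref{lem:ztBounds} (a calculus computation to show $z_t/(1-\alpha_t)^3$ is bounded below on $[\alpha_1,1)$) and \Cref{lem:distFrom1} (a telescoping-plus-integration argument showing the $\alpha_t$ sequence cannot escape too fast to $1$), both of which are already in hand. The corollary itself is just packaging these bounds in the form needed for the rounding argument of \Cref{subApx:construction}, where a precision of $2^{-\theta(n^2)}$ will safely preserve the ordering and distinctness of the critical values given a minimum gap of $2^{-18n}$.
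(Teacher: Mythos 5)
Your proof is correct and takes the same route as the paper: combine \Cref{lem:ztBounds} (lower bound $(1-\alpha_t)^3$) with \Cref{lem:distFrom1} ($1-\alpha_t \ge 2^{-6n}$) and cube. You are in fact slightly more careful than the paper's one-line proof in explicitly handling the $t=0$ boundary case excluded from the range of \Cref{lem:ztBounds}.
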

\begin{proof}
    By Lemmas \ref{lem:ztBounds} and \ref{lem:distFrom1} we have
    $$
    \alpha_{t+1} - \alpha_t \ge (1-\alpha_t)^3 \ge 2^{-18n},
    $$
    as claimed.
\end{proof}

\subsection{Poly-Size Construction}\label{subApx:construction}
We define a rounded version of our original critical values $\talpha_t = \floor{\alpha_t}$, where $\floor{x} = \max\{k\cdot 2^{-\tn} \mid k \in \mathbb{Z}, x \ge k\cdot 2^{-\tn} \}$, 
and a new reward function:
\begin{align}\label{eq:roundingscheme}
\tf(S_t) &= \tf_t = \frac{1}{1-\talpha_t}.
\end{align}

We show that $\tf$ possesses the three required properties defined at the beginning of \Cref{apx:rounded}. 
We establish \Cref{pr:monSubMod} in \Cref{cor:ftlSubmod}, \Cref{pr:demandOracle} in \Cref{prop:ftlDemand},
and \Cref{pr:uniqOpt} in \Cref{prop:ftlUniqOpt}. 

We begin by bounding the distance between the rounded instance and the original one, proving that the series of differences $\tf_{t+1} - \tf_t$ is equal to 
$f_{t+1} - f_t$  up to an additive factor of $2^{-\tn}$.
\begin{lemma}\label{lem:ftlProximity}
For any $t \in [2^n-1]$,
$$
\tf_{t+1} - \tf_t = (f_{t+1} - f_t) \pm 2^{-\tn}.
$$
\end{lemma}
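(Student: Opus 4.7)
The plan is to bound $|\ftl_t - f_t|$ pointwise for every $t$ and then apply the triangle inequality to the two differences. First, using $f_t = 1/(1-\alpha_t)$ and $\ftl_t = 1/(1-\atl_t)$, I would compute
$$
\ftl_t - f_t \;=\; \frac{1}{1-\atl_t} - \frac{1}{1-\alpha_t} \;=\; \frac{\atl_t - \alpha_t}{(1-\atl_t)(1-\alpha_t)},
$$
so bounding $|\ftl_t - f_t|$ reduces to controlling the numerator (via the rounding precision) and the denominator (via a lower bound on $1-\alpha_t$).

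Next I would upper-bound the numerator: by the definition of $\floor{\cdot}$ as rounding down to an integer multiple of $2^{-\tn}$, we have $0 \le \alpha_t - \atl_t < 2^{-\tn}$, and in particular $1 - \atl_t \ge 1 - \alpha_t$. For the denominator, I would invoke \Cref{lem:distFrom1}, which gives $1 - \alpha_t \ge 2^{-6n}$ for all $t \in [2^n-1]$. Combining these two ingredients yields
$$
|\ftl_t - f_t| \;\le\; \frac{2^{-\tn}}{(1-\alpha_t)^2} \;\le\; 2^{-\tn + 12n}.
$$

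Finally, I would apply the triangle inequality to the two differences:
$$
\bigl|(\ftl_{t+1} - \ftl_t) - (f_{t+1} - f_t)\bigr| \;\le\; |\ftl_{t+1} - f_{t+1}| + |\ftl_t - f_t| \;\le\; 2^{-\tn + 12n + 1}.
$$
Since $\tn$ denotes a $\theta(n^2)$ quantity whose hidden constant we are free to choose, and $12n+1 = O(n)$ is dominated by $\theta(n^2)$, this final bound is at most $2^{-\tn}$ after absorbing the lower-order term into the $\theta$-constant, giving the claimed additive error.

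There is no serious obstacle here; the statement is essentially a first-order perturbation computation. The only mild subtlety — and the reason the rounding precision must be $2^{-\theta(n^2)}$ rather than something milder — is the $2^{12n}$ blow-up introduced by the $(1-\alpha_t)^{-2}$ factor in the denominator, which forces the chosen precision to be quantitatively much finer than the minimum gap between consecutive $\alpha_t$'s established in \Cref{cor:alphaDiffBound}, so that subsequent steps in the appendix can translate this per-step error into preservation of monotonicity, submodularity, and the critical-value structure.
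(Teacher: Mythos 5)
Your proof is correct, but it follows a genuinely different route from the paper's. You decompose the claim into pointwise bounds on $|\ftl_t - f_t|$ — a single clean computation showing $\ftl_t - f_t = (\atl_t - \alpha_t)/\bigl((1-\atl_t)(1-\alpha_t)\bigr)$, bounded in magnitude by $2^{-\tn+12n}$ via \Cref{lem:distFrom1} — and then apply the triangle inequality. The paper instead sets $e_t := \alpha_t - \atl_t$ and expands $\ftl_{t+1}-\ftl_t$ directly as a single fraction in $\alpha_t,\alpha_{t+1},e_t,e_{t+1}$, then establishes the upper and lower bounds separately by dropping or over-bounding the $e$-terms in the denominator, which leads to a noticeably longer chain of algebra (especially for the lower bound). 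Your decomposition is shorter and, in my view, cleaner; both ultimately rely on the same two ingredients ($|\alpha_t - \atl_t| < 2^{-\tn}$ and $1-\alpha_t \ge 2^{-6n}$) and on absorbing a $2^{O(n)}$ factor into the $\theta(n^2)$ exponent, which the paper also does implicitly. One small caveat: your closing remark attributing the need for $\theta(n^2)$-bit precision entirely to the $(1-\alpha_t)^{-2}$ blow-up is not quite the full story — that factor is only $2^{O(n)}$, so by itself it would only require precision $2^{-\theta(n)}$; the $\theta(n^2)$ choice is a conservative margin to accommodate all the downstream constraints in the appendix simultaneously. This is a side remark, though, and does not affect the correctness of the proof itself.
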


\begin{proof}
Denote $\alpha_t - \talpha_t = \alpha_t - \floor{\alpha_t} = e_t$. Note that $0 \le e_t < 2^{-\tn}$. With this definition at hand, we can rewrite $\tilde{f}_{t+1}-\tilde{f}_t$ as follows:
\begin{eqnarray*}
    \tf_{t+1} - \tf_t 
    &=&
    \frac{1}{1-\talpha_{t+1}} - \frac{1}{1-\talpha_t} \\
    &=&
    \frac{\talpha_{t+1} - \talpha_t}{(1-\talpha_t)(1-\talpha_{t+1})} \\
    &=&
    \frac{\alpha_{t+1} - \alpha_t + e_t - e_{t+1}}{(1-\alpha_t + e_t)(1-\alpha_{t+1} + e_{t+1})} \\
    &=&
    \frac{\alpha_{t+1} - \alpha_t + e_t - e_{t+1}}{(1-\alpha_t)(1-\alpha_{t+1}) + e_t(1-\alpha_{t+1}) + e_{t+1}(1-\alpha_t) + e_{t+1}e_{t}}. 
\end{eqnarray*}
We upper bound
\begin{eqnarray*}
     \tf_{t+1} - \tf_t 
    &\le&
    \frac{\alpha_{t+1} - \alpha_t + 2^{-\tn}}{(1-\alpha_t)(1-\alpha_{t+1})} \\
    &=&
    f_{t+1} - f_t + \frac{2^{-\tn}}{(1-\alpha_t)(1-\alpha_{t+1})} \\
    &\le&
    f_{t+1} - f_t + \frac{2^{-\kappa n}}{2^{-6n}\cdot 2^{-6n}} \\
    &=&
    f_{t+1} - f_t + 2^{-\tn}.
\end{eqnarray*}
We lower bound
\begin{eqnarray*}
    \tf_{t+1} - \tf_t 
    &=&
    \frac{\alpha_{t+1} - \alpha_t + e_t - e_{t+1}}{(1-\alpha_t)(1-\alpha_{t+1}) + e_t(1-\alpha_{t+1}) + e_{t+1}(1-\alpha_t) + e_{t+1}e_{t}} \\
    &>&
    \frac{\alpha_{t+1} - \alpha_t - 2^{-\tn}}{(1-\alpha_t)(1-\alpha_{t+1}) + 3\cdot 2^{-\tn}} \\
    &=& 
    \frac{\alpha_{t+1} - \alpha_t}{(1-\alpha_t)(1-\alpha_{t+1}) + 3\cdot2^{-\tn}} -\frac{2^{-\tn}}{(1-\alpha_t)(1-\alpha_{t+1}) + 3\cdot 2^{-\tn}} \\
    &=&
    \frac{\alpha_{t+1} - \alpha_t}{(1-\alpha_t)(1-\alpha_{t+1})}
    \cdot
    \frac{(1-\alpha_t)(1-\alpha_{t+1})}{(1-\alpha_t)(1-\alpha_{t+1}) + 3\cdot2^{-\tn}} 
    -\frac{2^{-\tn}}{(1-\alpha_t)(1-\alpha_{t+1})+ 3\cdot 2^{-\tn}}\\
    &=&
    (f_{t+1} - f_t)
    \cdot
    \left(1 - \frac{3\cdot 2^{-\tn}}{(1-\alpha_t)(1-\alpha_{t+1}) + 3\cdot 2^{-\tn}} \right)
    -\frac{2^{-\tn}}{(1-\alpha_t)(1-\alpha_{t+1})+ 3\cdot 2^{-\tn}} \\
    &\ge&
    (f_{t+1} - f_t)
    -
    \frac{3\cdot 2^{-\tn} (f_{t+1} - f_t) + 2^{-\tn}}{2^{-12n}}\\
    &\ge&
    (f_{t+1} - f_t)
    -
    \frac{6\cdot 2^{-\tn} + 2^{-\tn}}{2^{-12n}} \\
    &\ge&
    (f_{t+1} - f_t)
    -
     2^{-\tn},
\end{eqnarray*}
where the second to last inequality follows from \Cref{lem:fMarginalDecrease} and the fact that $f_1 - f_0 \approx 1.618 < 2$.
\end{proof}

As in the original instance, the discrete derivative  $\tf_{t+1}-\tf_{t}$ is negative, as shown in the following lemma.

\begin{lemma}\label{lem:ftlDecreasing}
$\tf_{t+1} - \tf_t$ is a decreasing function of $t$, for any $t$.
\end{lemma}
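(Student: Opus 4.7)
The plan is to establish the strict inequality $(\ftl_{t+1} - \ftl_t) - (\ftl_{t+2} - \ftl_{t+1}) > 0$ for every $t$ in the valid range, by reducing to the corresponding quantity for the unrounded function $f$ and absorbing the rounding error.

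Applying \Cref{lem:ftlProximity} to both marginals yields
\[
(\ftl_{t+1} - \ftl_t) - (\ftl_{t+2} - \ftl_{t+1}) \;\ge\; \bigl((f_{t+1} - f_t) - (f_{t+2} - f_{t+1})\bigr) - 2 \cdot 2^{-\tn},
\]
so it suffices to lower bound the ``discrete second difference'' of $f$ by something much larger than $2^{1-\tn}$. The key tool is the identity $f_{t+1} - f_t = \frac{1}{\alpha_{t+1}}$, which is already established inside the proof of \Cref{prop:CVs} (the next critical value after $\alpha_t$ equals both $1/(f_{t+1}-f_t)$ and $\alpha_{t+1}$). This identity collapses the discrete second difference to a telescoping expression:
\[
(f_{t+1}-f_t) - (f_{t+2}-f_{t+1}) \;=\; \frac{1}{\alpha_{t+1}} - \frac{1}{\alpha_{t+2}} \;=\; \frac{\alpha_{t+2}-\alpha_{t+1}}{\alpha_{t+1}\alpha_{t+2}} \;\ge\; \alpha_{t+2}-\alpha_{t+1},
\]
where the last inequality uses $\alpha_{t+1},\alpha_{t+2}\in(0,1)$. \Cref{cor:alphaDiffBound} then gives $\alpha_{t+2}-\alpha_{t+1} \ge 2^{-18n}$.

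Combining the two estimates, the gap between consecutive $\ftl$-marginals is at least $2^{-18n} - 2^{1-\tn}$, which is strictly positive once the hidden constant in $\tn = \theta(n^2)$ is chosen so that $\tn \ge 18n + 2$, and this holds automatically for all sufficiently large $n$ because $n^2$ dominates any linear function. The main (and essentially only) obstacle is calibrating the rounding precision; but because the relevant gap in the unrounded construction is exponentially small only in $n$ rather than in $n^2$, a single sufficiently large constant in $\theta(n^2)$ suffices simultaneously for this lemma and for the other properties of the rounded instance proved elsewhere in this appendix.
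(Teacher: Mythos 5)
Your proof is correct and takes essentially the same route as the paper: bound the rounding error using \Cref{lem:ftlProximity}, then exploit the identity $f_{t+1}-f_t=\frac{1}{\alpha_{t+1}}$ from \Cref{prop:CVs} to collapse the discrete second difference to $\frac{\alpha_{t+2}-\alpha_{t+1}}{\alpha_{t+1}\alpha_{t+2}}$, and finish with \Cref{cor:alphaDiffBound}. If anything your write-up is slightly cleaner: you correctly carry the error term as $2\cdot 2^{-\tn}$ (the paper drops a factor of two in an intermediate line), and your lower bound uses $\alpha_{t+1}\alpha_{t+2}<1$, which is well-defined, whereas the paper's displayed $\frac{-2^{-18n}}{\alpha_0^2}$ is a typo since $\alpha_0=0$.
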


\begin{proof}
It is enough to show that for any $t$, 
$f_{t+1} - f_{t} - (f_t - f_{t-1}) < - 2^{-\theta(n)}$, as this implies
\begin{eqnarray*}
    \tf_{t+1} - \tf_{t} - (\tf_t - \tf_{t-1}) &\le& 
    f_{t+1} - f_{t} + 2^{-\tn} - (f_t - f_{t-1} - 2^{-\tn}) \\
    &=& 
    f_{t+1} - f_{t} - (f_t - f_{t-1}) + 2^{-\tn} \\ 
    &=& 
    - 2^{-\theta(n)} + 2^{-\tn} \\ 
    &<&
    0.
\end{eqnarray*}
Recall from \Cref{prop:CVs}, that $\alpha_{t+1} = \frac{1}{f_{t+1} - f_t}$. Thus,
\begin{eqnarray*}
    f_{t+1} - f_{t} - (f_t - f_{t-1}) =
    \frac{1}{\alpha_{t+1}} - \frac{1}{\alpha_{t}} 
    =
    \frac{\alpha_t - \alpha_{t+1}}{\alpha_t \cdot \alpha_{t+1}} 
    <
    \frac{-2^{-18}}{\alpha_0^2}
    =
    -2^{-\theta (n)},
\end{eqnarray*}
where the inequality follows from \Cref{cor:alphaDiffBound} and the monotonicity of $\alpha_t$.
\end{proof}

As an immediate corollary from the negative discrete derivative of $\tf$, together with the bounds in \Cref{lem:ztBounds}, we show that choosing $\eps = 2^{-\theta(n)}$ is enough to guarantee \Cref{pr:monSubMod}.
\begin{corollary}\label{cor:ftlSubmod}
     There exists an $\eps = 2^{-\theta(n)}$ such that for every $S_t \subseteq [n]$, $\tf_{S_t}$ is monotone and submodular.
\end{corollary}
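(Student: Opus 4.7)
The plan is to show that $\ftl$ itself already satisfies monotonicity and submodularity with uniform \emph{quantitative} slacks which both dominate $\eps = 2^{-\theta(n)}$ for a suitably chosen $\theta$, so that raising the value at a single set $S$ by $\eps$ cannot break either property.

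I first verify that $\ftl$ is monotone and submodular. Monotonicity follows from \Cref{cor:alphaDiffBound}, which gives $\alpha_{t+1}-\alpha_t \ge 2^{-18n}$, together with the rounding precision $2^{-\tn}$ and $\tn=\theta(n^2) \gg 18n$: hence $\atl_{t+1} > \atl_t$ strictly, so $\ftl_t = 1/(1-\atl_t)$ is strictly increasing in the index $t$, and set-monotonicity of $\ftl$ follows. Submodularity of $\ftl$ is then obtained by repeating the proof of \Cref{prop:fSubmodular} verbatim with \Cref{lem:ftlDecreasing} replacing \Cref{lem:fMarginalDecrease}.

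Next, I quantify both slacks. For monotonicity, \Cref{lem:ftlProximity} combined with the identity $f_{t+1}-f_t = 1/\alpha_{t+1} > 1$ from the proof of \Cref{prop:CVs} yields $\ftl_{t+1}-\ftl_t > 1 - 2^{-\tn} > 1/2$, so any $\eps < 1/2$ preserves all monotonicity inequalities after an $\eps$-bonus. For submodularity, the crucial observation is that the proof of \Cref{lem:ftlDecreasing} in fact establishes a \emph{quantitative} bound: consecutive second differences of $\ftl$ are at most $-2^{-\theta(n)}$. Telescoping both in $j$ and in the second-difference index gives, for any $t<t'$ and any $i$ with $i\notin S_t\cup S_{t'}$,
\begin{equation*}
\ftl(i\mid S_t) - \ftl(i\mid S_{t'}) \;=\; \sum_{j=1}^{2^{i-1}}\bigl[(\ftl_{t+j}-\ftl_{t+j-1}) - (\ftl_{t'+j}-\ftl_{t'+j-1})\bigr] \;\ge\; 2^{-\theta(n)},
\end{equation*}
so every non-trivial submodularity inequality for $\ftl$ carries slack at least $2^{-\theta(n)}$.

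Finally, I fix $\eps=2^{-\theta(n)}$ below both slacks and perform a short case analysis on the inequality $\ftl_S(i\mid T_1) \ge \ftl_S(i\mid T_2)$ for $T_1 \subseteq T_2$ and $i\notin T_2$. The $\eps$-bonus at $S$ can only shift this inequality against us in two configurations: (i) $T_1=S$, where the left-hand side drops by $\eps$; or (ii) $T_2\cup\{i\}=S$, where the right-hand side grows by $\eps$. All other configurations leave the inequality unchanged or shift it in the favourable direction. In both (i) and (ii) the required bound reduces precisely to $\ftl(i\mid T_1) - \ftl(i\mid T_2) \ge \eps$, which is the submodularity slack established above. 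Monotonicity of $\ftl_S$ is handled analogously and is strictly easier since its slack exceeds $1/2$. The main obstacle is the \emph{quantitative} submodularity slack: the qualitative strict decrease from \Cref{lem:ftlDecreasing} does not suffice on its own, because the $\eps$-bonus is strictly positive and must be dominated by the gap. The substantive work is extracting the explicit $2^{-\theta(n)}$ lower bound on second differences from the computation already inside the proof of \Cref{lem:ftlDecreasing}, and propagating it through the telescoping sums that define the marginals of $\ftl$.
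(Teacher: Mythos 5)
Your proposal is correct and follows essentially the same route as the paper. Both arguments hinge on the same two quantitative slacks extracted from \Cref{lem:ftlProximity}, \Cref{lem:ztBounds}, and the computation inside \Cref{lem:ftlDecreasing}: a lower bound on $\ftl_{t+1}-\ftl_t$ (constant-sized, via $f_{t+1}-f_t = 1/\alpha_{t+1} > 1$) for monotonicity, and a lower bound of $2^{-\theta(n)}$ on the magnitude of the second differences of $\ftl$ for submodularity. The only packaging difference is that the paper phrases the submodularity requirement as ``keep the discrete derivative of $\ftl_S$ decreasing'' (condition $2\eps < \ftl_{t+1}-\ftl_t - (\ftl_{t+2}-\ftl_{t+1})$, then rerun \Cref{prop:fSubmodular}), whereas you bound the slack of each submodularity inequality directly via the telescoping sum and case-analyze where the $\eps$-bonus can bite; these are isomorphic calculations, differing only by a harmless factor of two in the resulting constraint on $\eps$.
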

\begin{proof}
    The monotonicity of $\tf$ follows directly from \Cref{cor:alphaDiffBound} and the rounding scheme in \Cref{eq:roundingscheme} which has precision $2^{-\tn}$. To guarantee the monotonicity of $\tf_{S_t}$ it is enough to pick $\eps < \tf_{t+1}-\tf_t$, for any $t \in [2^n-2]$. 

    To maintain the decreasing second derivative (which implies submodularity, as in the original $f$), it is enough to satisfy $2\eps < \tf_{t+1}-\tf_t - (\tf_{t+2}-\tf_{t+1})$, for any $t \in [2^n-3]$.

    One can easily show that $\eps = 2^{-\theta(n)}$ satisfies both sets of constraints by using the proximity between the discrete derivatives of $\tf$ and $f$ (\Cref{lem:ftlProximity}) and the bounds on the original construction (\Cref{lem:ztBounds}), similarly to \Cref{lem:ftlDecreasing}.
\end{proof}

Next we show that when $\tf$ is the reward function (and the additive cost is $c_i = 2^{i-1}$ for any $i \in [n]$), there are $2^n-1$ distinct critical values (different from $\talpha_t$), each of which yields the same utility to the principal, up to an additive factor of $2^{-\tn}$.

\begin{lemma}\label{lem:TildeCVs}
Consider the optimal contract problem defined with respect to $\tf$. There exists a series of critical values $0 = \beta_0 < \beta_1 < \ldots < \beta_{2^n-1} < 1$ such that for any $t \in \{0, \ldots, 2^n-2\}$, $S_t$ is the agent's best response for contract $\alpha \in [\beta_t, \beta_{t+1})$.
Additionally, each $\beta_t$ yields an expected utility of $1 \pm 2^{-\tn}$ to the principal.
\end{lemma}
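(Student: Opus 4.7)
The plan is to mirror the proof of \Cref{prop:CVs}, using \Cref{lem:ftlDecreasing} in place of \Cref{lem:fMarginalDecrease}, and then to bound the accumulated rounding error. The costs $c_i = 2^{i-1}$ are unchanged, so $c(S_t) = t$ as before. I proceed by induction: $\beta_0 = 0$ with best response $S_0 = \emptyset$ is immediate, and assuming $S_t$ is the agent's best response at $\beta_t$, \Cref{obs:critVals} forces the next best response to be some $S_{t+k}$ with $k \ge 1$. Applying \Cref{obs:critValStructure} and telescoping,
\[
\beta_{t+1} \;=\; \min_{k \in \{1,\ldots,2^n-t-1\}} \frac{k}{\sum_{i=0}^{k-1}(\ftl_{t+i+1} - \ftl_{t+i})}.
\]
\Cref{lem:ftlDecreasing} makes the summand decreasing in $i$, so the sum is at most $k(\ftl_{t+1}-\ftl_t)$, the minimum is attained at $k=1$, and $\beta_{t+1} = 1/(\ftl_{t+1}-\ftl_t)$ with best response $S_{t+1}$. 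Strict monotonicity $\beta_t < \beta_{t+1}$ follows from $\ftl_{t+1}-\ftl_t$ being positive and decreasing in $t$; the bound $\beta_{2^n-1} < 1$ will follow once $\beta_t$ is shown to lie within $2^{-\theta(n^2)}$ of $\alpha_t$, which by \Cref{lem:distFrom1} satisfies $\alpha_{2^n-1} < 1 - 2^{-6n}$.

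For the utility, I write
\[
u_p(\beta_t) - 1 \;=\; (1-\beta_t)\ftl_t - (1-\alpha_t)f_t \;=\; (1-\beta_t)(\ftl_t - f_t) + (\alpha_t - \beta_t)f_t,
\]
using the identity $(1-\alpha_t)f_t = 1$ from \Cref{prop:CVs}. For the first term, $|\ftl_t - f_t| = |\alpha_t - \atl_t|/((1-\alpha_t)(1-\atl_t)) \le 2^{-\tn + 12n}$ by the rounding precision and \Cref{lem:distFrom1}. For the second term, the explicit formulas $\beta_{t+1} = 1/(\ftl_{t+1}-\ftl_t)$ and $\alpha_{t+1} = 1/(f_{t+1}-f_t)$ give
\[
|\beta_{t+1} - \alpha_{t+1}| \;=\; \frac{|(\ftl_{t+1}-\ftl_t) - (f_{t+1}-f_t)|}{(f_{t+1}-f_t)(\ftl_{t+1}-\ftl_t)} \;\le\; 2^{-\tn + O(n)},
\]
via \Cref{lem:ftlProximity} and the lower bound $f_{t+1}-f_t \ge (1-\alpha_t)^2 \ge 2^{-12n}$ implied by \Cref{lem:ztBounds} together with \Cref{lem:distFrom1}. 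Since $f_t \le 1/(1-\alpha_t) \le 2^{6n}$, both terms on the right-hand side are of order $2^{-\theta(n^2)}$, so for $\tn = \theta(n^2)$ with a large enough hidden constant, $|u_p(\beta_t) - 1| \le 2^{-\tn}$, as required.

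The main obstacle is preventing the rounding error $2^{-\tn}$ from blowing up under the operations that define $\beta_t$ and $u_p(\beta_t)$. Each reciprocal taken (first to pass from $\atl_t$ to $\ftl_t$, then to pass from $\ftl_{t+1}-\ftl_t$ to $\beta_{t+1}$) amplifies error by up to $2^{\theta(n)}$, because the relevant denominators can be as small as $(1-\alpha_t)^2 \approx 2^{-12n}$. Choosing $\tn = \theta(n^2)$ with a sufficiently large constant precisely absorbs these amplifications, since subtracting any $\theta(n)$ term from a sufficiently large $\theta(n^2)$ leaves a $\theta(n^2)$-sized exponent, keeping all errors within the target $2^{-\tn}$ tolerance.
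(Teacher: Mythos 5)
Your proof is correct and follows essentially the same route as the paper's: establish $\beta_{t+1} = 1/(\ftl_{t+1}-\ftl_t)$ by the Proposition~\ref{prop:CVs} argument with Lemma~\ref{lem:ftlDecreasing} in place of Lemma~\ref{lem:fMarginalDecrease}, then use Lemma~\ref{lem:ftlProximity} and the bounds of Lemmas~\ref{lem:ztBounds} and~\ref{lem:distFrom1} to show that the rounding error is only amplified by $2^{O(n)}$ factors, which is absorbed by the $2^{-\theta(n^2)}$ precision. The only cosmetic difference is bookkeeping: you decompose $u_p(\beta_t)-1 = (1-\beta_t)(\ftl_t - f_t) + (\alpha_t-\beta_t)f_t$ while the paper substitutes $\beta_t = \alpha_t \mp O(2^{-\tn})$ directly into the revenue expression; also your lower bound $f_{t+1}-f_t \ge (1-\alpha_t)^2 \ge 2^{-12n}$ is correct but much weaker than the simpler observation $f_{t+1}-f_t = 1/\alpha_{t+1} > 1$ that the paper uses, which avoids one unnecessary $2^{O(n)}$ amplification.
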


\begin{proof}
    As in \Cref{prop:CVs}, for any $t \ge 0$, the critical value for which $S_{t+1}$ is the agent's best response is $\beta_{t+1} = \frac{1}{\tf_{t+1} - \tf_t}$.
    From \Cref{lem:ftlProximity},
    \begin{eqnarray*}
        \frac{1}{\tf_{t+1} - \tf_t}
        &=&
        \frac{1}{f_{t+1} - f_t \pm 2^{-\tn}} \\
        &=&
        \frac{1}{f_{t+1} - f_t} \left(\frac{f_{t+1} - f_t}{f_{t+1} - f_t \pm 2^{-\tn}}\right) \\
        &=&
        \alpha_{t+1} \left(1 \mp \frac{2^{-\tn}}{f_{t+1} - f_t \pm 2^{-\tn}}\right),
    \end{eqnarray*}
    where the last equality follows from \Cref{prop:CVs}.
    Also, from \Cref{prop:CVs} and \Cref{lem:distFrom1}, we have
    $$
    f_{t+1} - f_t = \frac{1}{\alpha_{t+1}} \ge \frac{1}{1-2^{-6n}},
    $$
    which implies
    $$
    \frac{1}{\tf_{t+1} - \tf_t} 
    \in 
    \alpha_{t+1} \mp  \frac{\alpha_{t+1} \cdot 2^{-\tn}}{1-2^{-6n} \pm 2^{-\tn}}
    \subseteq
    \alpha_{t+1} \mp  \frac{2^{-\tn}}{1-2^{-6n}}.
    $$
    By \Cref{cor:alphaDiffBound}, 
    $$
    \beta_{t+1} - \beta_t 
    \ge \alpha_{t+1} - \alpha_t -2^{-\tn} 
    \ge 2^{-18n} -2^{-\tn} 
    > 0,
    $$
    and from \Cref{lem:distFrom1},
    $$
    \beta_{2^n-1} \le 1- 2^{-6n} + 2^{-\tn} < 1.
    $$
    To see that each $\beta_t$ yields approximately the same revenue for the principal, we replace $\beta_t$ in principal's expected revenue according to the first equation in the proof:
\begin{eqnarray*}
    \tf_t(1-\beta_t) 
    =
    \frac{1}{1-\talpha_t}(1-\beta_t) 
    \in
    \frac{1}{1-\alpha_t + e_t}\left(1-\alpha_{t} \mp  \frac{2^{-\tn}}{1-2^{-6n}}\right). 
\end{eqnarray*}
This term is upper bounded as follows
\begin{eqnarray*}
    \tf_t(1-\beta_t) 
    \le
    1 + \frac{2^{-\tn}}{1-2^{-6n}} 
    \le
    1 + 2^{-\tn},
\end{eqnarray*}
and lower bounded as follows
\begin{eqnarray*}
    \tf_t(1-\beta_t) 
    &\ge&
    \frac{1}{1-\alpha_t + 2^{-\tn}}
    \left(1-\alpha_{t} - \frac{2^{-\tn}}{1-2^{-6n}} \right) \\
    &=&
    \frac{1-\alpha_{t}}{1-\alpha_t + 2^{-\tn}}
    -
    \frac{2^{-\tn}}{(1-\alpha_t+2^{-\tn})(1-2^{-6n})} \\
    &\ge&
    1 - 2^{-\tn}.
\end{eqnarray*}
This concludes the proof.
\end{proof}

The following proposition establishes \Cref{pr:uniqOpt}. It shows that picking $\eps = 2^{-\theta(n)}$ is sufficient to guarantee that (i) there are $2^n-1$ different critical values in the instance defined with respect to $\tf_{S_t}$, for any $S_t \subseteq [n]$ (see \Cref{eq:fS}), and (ii) one of these critical values yields significantly more utility to the principal than the others.

\begin{proposition}\label{prop:ftlUniqOpt}
    There exists an $\eps = 2^{-\theta(n)}$ such that there for every $S_t \subseteq [n]$, the instance defined with respect to $\tf_{S_t}$ has $2^n-1$ critical values, each of them yields revenue of at most $1 + 2^{-\tn}$ to the principal except for the optimal contract, $\gamma_t$, which yields $1+2^{-\theta(n)}$.
\end{proposition}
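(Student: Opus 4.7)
The plan is to pick $\eps = 2^{-\theta(n)}$ appropriately and verify three things about the perturbed instance $\ftl_{S_t}$: that it admits the required number of distinct critical values, that every critical value other than a designated $\gamma_t$ yields principal utility at most $1 + 2^{-\tn}$, and that $\gamma_t$ itself yields $1 + 2^{-\theta(n)}$. First I would argue that a valid $\eps$ exists: by \Cref{cor:ftlSubmod}, monotonicity and submodularity of $\ftl_{S_t}$ require $\eps$ to be strictly below both $\ftl_{t'+1}-\ftl_{t'}$ and $\frac{1}{2}\bigl[(\ftl_{t'+1}-\ftl_{t'}) - (\ftl_{t'+2}-\ftl_{t'+1})\bigr]$ for every $t'$, and by \Cref{lem:ftlDecreasing} the tighter of these is of order $2^{-\theta(n)}$. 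On the other hand, the rounding slack from \Cref{lem:ftlProximity} and \Cref{lem:TildeCVs} is $2^{-\tn} = 2^{-\theta(n^2)}$. Choosing the exponent constant appropriately, $\eps = 2^{-\theta(n)}$ satisfies both the upper and lower bound simultaneously.

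Next, I would analyze the critical-value structure of $\ftl_{S_t}$. Because only $\ftl_{S_t}(S_t)$ is perturbed, and $\eps$ is strictly smaller than every gap $\ftl_{t'+1}-\ftl_{t'}$, the ordering of the perturbed rewards is preserved. Repeating the argument of \Cref{prop:CVs} with $\ftl_{S_t}$ in place of $\ftl$, the resulting critical values coincide with those of \Cref{lem:TildeCVs} except that $\beta_t$ is replaced by
\[
\gamma_t \;:=\; \frac{1}{\ftl_t + \eps - \ftl_{t-1}} \;<\; \beta_t,
\]
and $\beta_{t+1}$ is replaced by $\gamma_{t+1} := \tfrac{1}{\ftl_{t+1} - \ftl_t - \eps} > \beta_{t+1}$ (when $S_t$ is the largest set, only the first substitution applies). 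The inequalities $\beta_{t-1} < \gamma_t < \gamma_{t+1} < \beta_{t+2}$ required for these to be consecutive critical values reduce directly to the choice of $\eps$ together with the decreasing-second-difference property of $\ftl$ from \Cref{lem:ftlDecreasing}.

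Finally, I would evaluate the principal's utility at every critical value. For any $t' \notin \{t, t+1\}$ the best response $S_{t'}$ has unchanged reward and the critical value is still $\beta_{t'}$, so the utility lies in $[1-2^{-\tn}, 1+2^{-\tn}]$ by \Cref{lem:TildeCVs}. At $\gamma_{t+1}$ the best response is again $S_{t+1}$ with unchanged reward $\ftl_{t+1}$, and since $\gamma_{t+1} > \beta_{t+1}$ the utility is $(1-\gamma_{t+1})\ftl_{t+1} < (1-\beta_{t+1})\ftl_{t+1} \le 1 + 2^{-\tn}$. At $\gamma_t$, using $\gamma_t < \beta_t$ and the fact that the best response $S_t$ now has reward $\ftl_t + \eps$,
\[
u_p(\gamma_t) \;=\; (1-\gamma_t)(\ftl_t + \eps) \;\ge\; (1-\beta_t)(\ftl_t + \eps) \;=\; (1-\beta_t)\ftl_t + (1-\beta_t)\eps.
\]
By \Cref{lem:TildeCVs} the first summand is at least $1 - 2^{-\tn}$, and by \Cref{lem:distFrom1} together with $|\beta_t - \alpha_t| \le 2^{-\tn}$ we have $1-\beta_t \ge 2^{-\theta(n)}$, so $(1-\beta_t)\eps \ge 2^{-\theta(n)}$, which strictly dominates $2^{-\tn}$. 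We conclude $u_p(\gamma_t) \ge 1 + 2^{-\theta(n)}$, making $\gamma_t$ the unique optimal contract. The main obstacle is the careful bookkeeping of constants in the $\theta(\cdot)$ notation so that the single value $\eps = 2^{-\theta(n)}$ simultaneously satisfies the upper bounds from submodularity and the lower bound needed to dominate the $2^{-\tn}$ rounding slack, together with the boundary cases $t = 2^n-1$ (no right neighbor to shift) and small $t$ (verifying $\gamma_t > \beta_{t-1}$ there).
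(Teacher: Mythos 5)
Your proof follows the paper's decomposition exactly: only $\beta_t$ and $\beta_{t+1}$ shift to $\gamma_t$ and $\gamma_{t+1}$, all other critical values and their principal utilities are inherited unchanged from \Cref{lem:TildeCVs}, and the interleaving $\beta_{t-1}<\gamma_t<\gamma_{t+1}<\beta_{t+2}$ is preserved for a suitable $\eps = 2^{-\theta(n)}$ via the gap bounds. Your derivation of $u_p(\gamma_t)\ge 1+2^{-\theta(n)}$---using $\gamma_t<\beta_t$ to lower-bound by $(1-\beta_t)\ftl_t + (1-\beta_t)\eps$ and then invoking \Cref{lem:TildeCVs,lem:distFrom1}---is a slightly cleaner chain of inequalities than the paper's, which instead expands an exact algebraic identity for $(\ftl_t+\eps)(1-\gamma_t)$ before bounding, but the two are the same argument in substance.
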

\begin{proof}
    Observe that when moving from the optimal contract instance defined with respect to $\tf$ to the one defined with respect to $\tf_{S_t}$, the only critical values that may change are those defined with respect to $\tf(S_t)$, namely $\beta_{t} = \frac{1}{\tf_{t} - \tf_{t-1}}$ and $\beta_{t+1} = \frac{1}{\tf_{t+1} - \tf_t}$, which turn to $\gamma_{t} = \frac{1}{\tf_{t} + \eps - \tf_{t-1}}$ and $\gamma_{t+1} = \frac{1}{\tf_{t+1} - \tf_t - \eps}$, respectively.

    Thus, in order to maintain the number of critical values to be as in \Cref{lem:TildeCVs}, we need to make sure that $\beta_{t-1} < \gamma_t < \gamma_{t+1} < \beta_{t+2}$.
    One can easily verify that for any $t$, these constraints only bound $\eps$ to be at most $2^{-\theta(n)}$, by bounding the discrete derivative, similarly to \Cref{lem:ftlDecreasing}.

    The fact that all critical values except $\gamma_t$ and $\gamma_{t+1}$ yield at most $1 + 2^{-\tn}$ utility to the principal is immediate from \Cref{lem:TildeCVs}, which also implies this for $\gamma_{t+1}$, as $\tf_{t+1}(1-\gamma_{t+1}) \le \tf_{t+1}(1-\beta_{t+1})$.
    
    It is left to show that $\gamma_t$ yields revenue of at least $1 + 2^{-\theta(n)}$.
    Using simple algebraic manipulations we can rewrite the principal's utility
    $$
    (\tf_t+ \eps)(1-\gamma_t) = 
    \tf_t(1-\beta_t) + \frac{\eps}{\tf_t + \eps - \tf_{t-1}}(1 - \eps + \tf_t \beta_t)
    \ge 
    1-2^{-\tn} + O(\eps)
    =
    1+2^{-\theta(n)},
    $$
    where the inequality follows from \Cref{lem:TildeCVs}. 
\end{proof}

We conclude by showing that $\eps = 2^{-\theta(n)}$ is sufficient for the rounded instance to satisfy \Cref{pr:demandOracle}.
\begin{proposition}\label{prop:ftlDemand}
    There exists an $\eps = 2^{-\theta(n)}$ such that for every $S_t \subseteq [n]$, a demand oracle for $\tf_{S_t}$ can be implemented with poly-many value queries.
\end{proposition}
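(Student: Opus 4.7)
The plan is to mirror the proof of \Cref{prop:DemandWithValue}, replacing $f$ and $\alpha_t$ with the rounded versions $\ftl$ and the associated critical values $\beta_t$ from \Cref{lem:TildeCVs}. The key object remains the $\eps$-approximate demand set $\demand_p^\eps$ with respect to the base function $\ftl$, and the main quantitative claim to establish is an analogue of \Cref{prop:BoundApproxDemand}: for $\eps = 2^{-\theta(n)}$ sufficiently small and any $p \in \reals^n_+$, $|\demand_p^\eps| = O(n^2)$. Once this is in hand, the rest of the argument is immediate: $\ftl_S$ differs from $\ftl$ only at $S$ by at most $\eps$, so any true demand set for $\ftl_S$ under prices $p$ must lie in $\demand_p^\eps$; since $\demand_p^\eps$ depends only on $\ftl$ and not on the identity of $S$, it can be enumerated without querying $\ftl_S$, after which we select the utility-maximizing set using $|\demand_p^\eps| = O(n^2)$ value queries to $\ftl_S$.

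To obtain the analogue of \Cref{prop:BoundApproxDemand} for $\ftl$, I would reuse the partition by minimal ambiguous action from \Cref{def:minAmgAction}: the counting bound in \Cref{lem:i(S)Bound} depends only on the combinatorial fact that each $i^* \in \{1,\dots,n+1\}$ is the minimal ambiguous action of at most $4i^*$ sets, and this in turn reduces entirely to an analogue of \Cref{lem:RLintervals}. Thus it suffices to prove that analogue for $\ftl$.

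The proof of \Cref{lem:RLintervals} boils down to showing that for any action $i$ and any two sets $S_t, S_{t'}$ with $i \in S_{t'}\setminus S_t$ and $t < t' - 2^i$, they cannot both lie in $\demand_p^\eps$; this is reduced to the separation between the critical values at which $S_t \cup \{i\}$ and $S_{t'}\setminus\{i\}$ are the agent's best response. In the original instance this separation is at least $\alpha_{t + 2^{i-1} + 1} - \alpha_{t + 2^{i-1}} \ge 2^{-18n}$ by \Cref{cor:alphaDiffBound}. In the rounded instance, each critical value $\beta_t$ differs from $\alpha_t$ by at most $O(2^{-\tn})$, as extracted from the proof of \Cref{lem:TildeCVs}, so the analogous gap between rounded critical values remains at least $2^{-18n} - O(2^{-\tn}) = \Omega(2^{-18n})$. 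Taking $\eps = 2^{-\theta(n)}$ with the hidden constant much larger than $18$ then ensures the threshold inequality in the proof of \Cref{lem:RLintervals} still holds for $\ftl$, delivering the bound on $|\demand_p^\eps|$.

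The main obstacle is merely bookkeeping: verifying that a single value $\eps = 2^{-\theta(n)}$ can simultaneously satisfy (i) the approximate-demand bound above, (ii) the submodularity constraint from \Cref{cor:ftlSubmod}, and (iii) the unique-optimal-contract property from \Cref{prop:ftlUniqOpt}. Since the rounding precision $2^{-\tn}$ is exponentially smaller than any of the relevant gaps of order $2^{-\theta(n)}$, each constraint admits substantial slack and all can be met by picking $\theta(n) = cn$ for a large enough absolute constant $c$.
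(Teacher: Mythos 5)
Your proposal takes essentially the same route as the paper: reduce to showing that \Cref{lem:RLintervals} (and hence \Cref{prop:BoundApproxDemand} and \Cref{prop:DemandWithValue}) carries over to $\ftl$, then verify that the spacing between rounded critical values is still $\Omega(2^{-18n})$ since each $\beta_t$ is within $O(2^{-\tn})$ of $\alpha_t$, so $\eps = 2^{-\theta(n)}$ with a large enough hidden constant suffices. Your citation of \Cref{cor:alphaDiffBound} for the $2^{-18n}$ lower bound is in fact slightly more precise than the paper's, and your closing remark about a single $\eps$ needing to meet the constraints of \Cref{cor:ftlSubmod} and \Cref{prop:ftlUniqOpt} simultaneously makes explicit a point the paper leaves implicit; otherwise the argument matches.
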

\begin{proof}
    It is enough to show that $\tf$ satisfies \Cref{lem:RLintervals}. The claim follows from the same argument used for the original $f$.
    To see that \Cref{lem:RLintervals} holds for $\tf$ as well, replace $f$ with $\tf$ and $\beta_t$ with $\alpha_t$ in the proof.
    The same argument holds, only that we need to verify that $\eps$ can be chosen to have a polynomial representation in $n$.
    The constraint imposed on $\eps$ in this lemma is
    $$
    \eps < \frac{2^{i-2}(\beta_{S_{t'}\setminus \{i\}} - \beta_{S_{t} \cup \{i\}})}{\beta_{S_{t'}\setminus \{i\}} \cdot \beta_{S_{t} \cup \{i\}}}
    $$
    for any $i \in [n]$ any index $t \in [2^n-1]$ and any $t' > t+2^i$.

    From \Cref{lem:ztBounds}, we have for any $t$
    $$
    \beta_{S_{t'}\setminus \{i\}} - \beta_{S_{t} \cup \{i\}} 
    \ge
    \alpha_{S_{t'}\setminus \{i\}} - \alpha_{S_{t} \cup \{i\}} 
    - 2 \cdot 2^{-\tn}
    \ge
    2^{-18n} - 2 \cdot 2^{-\tn}.
    $$
    Thus, picking $\eps = 2^{-\theta(n)}$ is enough to satisfy the above constraint for the rounded instance.
\end{proof}

\end{document}